\newtheorem{theorem}{Theorem}    
\newtheorem{lemma}{Lemma}      
\newcommand {\beq}{\begin{equation}}
\newcommand {\eeq}{\end{equation}}
\newcommand {\beqn}{\begin{eqnarray}}
\newcommand {\eeqn}{\end{eqnarray}}
\def\inr{\mathsf{INR}}
\def\snr{\mathsf{SNR}}
\def\C{\mathcal{C}}
\def\R{\mathcal{R}}
\def\Pr{\mathcal{P}}
\def\CNE{\mathcal{C}_{\rm NE}}
\def\vx{\mathbf{x}}
\def\vy{\mathbf{y}}
\def\vs{\mathbf{s}}
\def\vu{\mathbf{u}}
\def\vv{\mathbf{v}}
\def\vR{\mathbf{R}}
\def\tvy{\tilde{\mathbf{y}}}
\def\eps{\epsilon}
\def\B{\mathcal{B}}
\def\vz{\mathbf{z}}
\def\CN{\mathcal{CN}}
\def\tvu{\tilde{\mathbf{u}}}
\def\tvz{\tilde{\mathbf{z}}}
\begin{document}
\title{Shannon Meets Nash \\ on the Interference Channel}
\author{Randall A. Berry and David N. C. Tse
\thanks{R.A. Berry is with the Dept.~of Electrical
Engineering and Computer Science, Northwestern University,
Evanston, IL  60208, USA, email: {\tt rberry@eecs.northwestern.edu}.
D.N.C. Tse is with Wireless Foundations, University of California, 
Berkeley, Berkeley, CA 94720, USA, e-mail: {\tt dtse@eecs.berkeley.edu}}
\thanks{Preliminary versions of this work were presented in part at the 
2008 International Symposium on Information Theory and 
the 2009 Information Theory Workshop.}}

\date{July 10, 2010}
\maketitle

\begin{abstract}

The interference channel is the simplest communication scenario
where multiple autonomous users compete for shared resources. We
combine game theory and information theory to define a notion of
a Nash equilibrium region of the interference channel. The notion is
game theoretic: it captures the selfish behavior of each user as
they compete. The notion is also information theoretic: it allows
each user to use arbitrary communication strategies as it optimizes
its own performance. We give an exact characterization of the Nash
equilibrium region of the two-user linear deterministic interference
channel and an approximate characterization of the Nash equilibrium
region of the two-user Gaussian interference channel to within 1 bit/s/Hz..

\end{abstract}

\section{Introduction}

Information theory deals with the fundamental limits of
communication. In network information theory, an object of central
interest is the {\em capacity region} of the network: it is the set
of all rate tuples of the users in the network simultaneously
achievable by  optimizing their communication strategies. Implicit
in the definition is that users optimize their communication
strategies {\em cooperatively}. This may not be a realistic
assumption if users are selfish and are only interested in
maximizing their own benefit. Game theory provides a notion of {\em
Nash equilibrium} to characterize system operating points, which are
stable under such selfish behavior. In this paper, we define and
explore an information theoretic Nash equilibrium region as the game
theoretic counterpart of the capacity region of a network. While the
Nash equilibrium region is naturally a subset of the (cooperative)
capacity region, in general not all points in the capacity region
are Nash equilibria. The research question is then to characterize
the Nash equilibrium region given a network and a model of the
channels.

The two-user interference channel (IC) is perhaps the simplest
communication scenario to study this problem. Here two
point-to-point communication links interfere with each other through
cross-talks. Each transmitter has an independent message intended
only for the corresponding receiver. The capacity region of this
channel is the set of all simultaneously achievable rate pairs
$(R_1,R_2)$ in the two interfering links, and characterizes the
fundamental tradeoff between the performance achievable on the two
links in face of interference.

In the cooperative setting, the users jointly choose encoding and
decoding schemes to achieve a rate pair $(R_1,R_2)$. In the game
theoretic setting, on the other hand, we study the case where each
user individually chooses an encoding/decoding scheme in order to
maximize his own transmission rate. The two users can be viewed as
playing a {\em non-cooperative} game, where a user's strategy is its
encoding/decoding scheme and its payoff is its reliable rate. A Nash
equilibrium (NE) is a pair of strategies for which there is no incentive
for either user to unilaterally change its strategy to  improve its
own rate. These are incentive-compatible operating points. The Nash
equilibrium region of the IC is the set of all reliable rate pairs
each of which can be achieved at some NE. Our focus is on a
``one-shot'' game formulation in which each player has full
information, i.e. both players know the channel statistics, the
actions chosen by each player, as well as their pay-off function.

A particular IC we focus on in this paper is the two-user Gaussian
IC shown in Figure~\ref{fig:model}. 
This is a basic model in wireless and wireline
 channels (such as DSL). Game theoretic approaches for the Gaussian IC
have been studied before, e.g.~\cite{YC00,Chu03,LZ06,EPT07}. However,
there are two key assumptions in these works: 1) the class of
encoding strategies are constrained to use random Gaussian
codebooks; 2) the decoders are restricted to treat the interference
as Gaussian noise and are hence sub-optimal. Because of these
restrictions, the formulation in these works are not
information-theoretic in nature. For example, a Nash equilibrium
found under these assumptions may no longer be an equilibrium if
users can adopt a different encoding or decoding strategy.

\begin{figure}[htb]
\centerline{ \psfig{figure=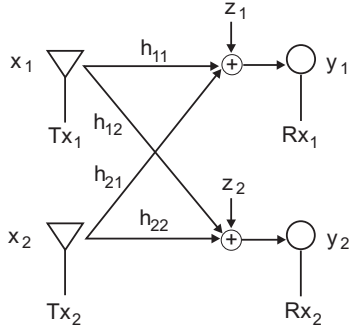,width=1.8in}} \caption{
Two-user Gaussian interference channel. } \label{fig:model}
\vspace{-.1in}
\end{figure}

In this paper, we make three contributions. First, we give a precise
formulation of an information theoretic NE on general ICs, where the
users are allowed to use {\em any} encoding and decoding strategies.
Second, we analyze the NE region of the two-user {\em linear
deterministic} IC \cite{BT08}. This type of deterministic channel
model was first proposed by \cite{ADT07} in the analysis of
Gaussian relay networks, and the deterministic IC  has been shown to
be a good approximation of the Gaussian IC in \cite{BT08}. For this
IC, we give a simple exact characterization of the NE region. At
each of the rate pairs in the NE region, we provide explicit coding
schemes that achieve the rate pair and such that no user
has any incentive to deviate to improve its own rate. Somewhat
surprisingly, we find that in all cases, there are always Nash
equilibria which are {\em efficient}, i.e., they lie on the maximum
sum-rate boundary of the capacity region. In particular, for
channels with symmetrical channel gains, the symmetric rate point on
the capacity region boundary is always a Nash equilibrium. Our third
contribution is to use these insights to approximate the NE region
of the Gaussian IC to within $1$ bit/s/Hz. This result parallels the
recent characterization of 
the (cooperative) capacity region of the same channel to within $1$
bit \cite{ETW07}.

\section{Problem Formulation}
\label{sec:formulation}

Let us now formally define the communication situation for general
interference channels. In subsequent sections, we will specialize to
specific classes of interference channels.

Communication starts at time $0$. User $i$ communicates by coding
over blocks of length $N_i$ symbols, $i=1,2$. Transmitter $i$ sends
on block $k$ information bits $b^{(k)}_{i1}, \ldots,
b_{i,L_i}^{(k)}$ by transmitting a codeword denoted by $\vx_i^{(k)}
=[\vx^{(k)}_i(1),\ldots, \vx^{(k)}_i(N_i)]$. All the information
bits are equally probable and independent of each other. Receiver
$i$ observes on each block an output sequence through the
interference channel, which specifies a stochastic mapping from the
input sequences of user 1 and 2 to the output sequences of user 1
and $2$. Given the observed sequence $\{\vy^{(k)}_i =
[\vy^{(k)}_i(1), \ldots, \vy^{(k)}_i(N_i)], k =1,2, \ldots, \}$,
receiver $i$ generate guesses $\hat{b}^{(k)}_{i\ell}$ for each of
the information bit. Without loss of generality, we will assume that
each receiver $i$ performs maximum-likelihood decoding on each bit,
i.e. chooses $\hat{b}^{(k)}_i$ that maximizes the {\em a posterior
probability} of the observed sequence $\vy^{(1)}_i, \vy^{(2)}_i,
\ldots$ given the transmitted bit $b^{(k)}_{i\ell}$.

Note that the communication scenario we defined here is more general
than the one usually used in multiuser information theory, as we
allow the two users to code over different block lengths. However,
such generality is necessary here,  since even though the two users
may agree {\em a priori} on a common block length, a selfish user
may unilaterally decide to choose a different block length during
the actual communication process.

A {\em strategy} $s_i$ of user $i$ is defined by its message
encoding, which  we assume to be the same on every block and
involves:
\begin{itemize}
\item the number of information bits $B_i$ and the block length $N_i$ of the codewords,
\item the codebook $\C_i$, the set of codewords employed by
transmitter $i$,
\item the encoder $f_i : \{1, \ldots, 2^{B_i}\} \times \Omega_i
\rightarrow \C_i$, that maps on each block $k$ the message
$m_i^{(k)} := (b_{i1}^{(k)}, \ldots b_{i,B_i}^{(k)})$ to a transmitted codeword $\vx^{(k)}_i =
f_i(m^{(k)}_i,\omega^{(k)}_i) \in \C_i$,
\item the rate of the code, $R_i(s_i) = B_i / N_i$.
\end{itemize}

A strategy $s_1$ of user $1$ and $s_2$ of user $2$ jointly
determines the average bit error probabilities $p^{(k)}_i := \frac{1}{B_i}
\sum_{\ell=1}^{B_i} \Pr ( \hat{b}^{(k)}_{i\ell} \not =
b^{(k)}_{i\ell})$, $i=1,2.$\footnote{Average bit error 
probabilities are more meaningful than codeword error probabilities
in a setting, such as ours, where users can vary 
the blocklength they are using.} Note that if the two users use different
block lengths, the error probability could vary from block to block
even though each user uses the same encoding for all the blocks.
However, if they use the same block length, then the error
probability is the same across the blocks for a user, which we will
denote by $p_i$ for user $i$. 

The encoder of each transmitter $i$
may employ a stochastic mapping from the message to the transmitted
codeword; $\omega^{(k)}_i \in \Omega_i$ represents the randomness in
that mapping. We assume that this randomness is independent between
the two transmitters and across different blocks. Furthermore, we assume 
that each transmitter and its corresponding receiver have access to a 
source of {\it common randomness}, so that the realization $\omega^{(k)}_i$ is known at both transmitter $i$ and receiver $i$, but not at 
the other receiver or transmitter.\footnote{Such common randomness is not
needed for many of the results in the paper, but allowing for it
simplifies our presentation.}

For a given error probability threshold $\eps >0$, we define an
$\eps$-interference channel game as follows. Each user $i$
chooses a strategy $s_i$, $i=1,2,$ and receives a pay-off of
\[
\pi_i(s_1,s_2) =\begin{cases}
R(s_i), & \text{ if $p_{i}^{(k)}(s_1,s_2) \leq \epsilon$, $\forall k$},\\
0, & \text{ otherwise.}
\end{cases}
\]
In other words, a user's pay-off is equal to the rate of the code
provided that the probability of error is no greater than $\epsilon$.
A strategy pair $(s_1,s_2)$ is defined to be {\it $(1-\eps)$-reliable}
provided that they result in an error probability $p_i(s_1,s_2)$ of
less than $\eps$ for $i=1,2$. An $(1-\eps)$-reliable pair of strategies
is said to achieve the rate-pair $(R(s_1),R(s_2))$.

For an $\eps$-game, a strategy pair $(s_1^*,s_2^*)$
is a Nash equilibrium (NE) if neither user
can unilaterally deviate and improve their pay-off, i.e.~if for each
user $i=1,2$, there is no other strategy $s_i$ such that\footnote{In this
paper, we use the convention that $j$ always denotes the other user from $i$.}
 $\pi_i(s_i,s_j^*) > \pi_i(s_i^*,s_j^*).$
If user $i$ attempts to transmit at a higher rate than what he is receiving in a
Nash equilibrium and user $j$ does not change her strategy, then user
$i$'s error probability must be greater than $\eps$.

Similarly, a strategy pair $(s_1^*,s_2^*)$ is an
 $\eta$-Nash equilibrium\footnote{In the game theoretic literature, this is often referred to as an
$\eps$-Nash equilibrium or simply an $\eps$-equilibrium for a game
\cite[page 143]{Myerson91}.} ($\eta$-NE) of an $\eps$-game if neither user can
unilaterally deviate and improve their pay-off by more than $\eta$,
i.e.~if for each user $i$, there is no other strategy $s_i$ such that
$\pi_i(s_i,s_j^*) > \pi_i(s_i^*,s_j^*) + \eta.$ Note that when a user
deviates, it does not care about the reliability of the other user but
only its own reliability. So in the above definitions $(s_i,s_j^*)$ is
not necessarily $(1-\eps)$-reliable.

Given any $\bar{\eps} >0$, the capacity region $\C$ of the interference
channel is the closure of the set of all rate pairs $(R_1,R_2)$ such
that
for every $\eps \in (0,\bar{\eps})$, there exists a $(1-\eps)$-reliable
strategy pair $(s_1,s_2)$ which achieves the rate pair $(R_1,R_2)$.
The {\it Nash equilibrium region} $\CNE$ of the interference channel
is the closure of the set of rate pairs $(R_1,R_2)$ such that for every
$\eta>0$, there exists a $\bar{\eps}>0$ (dependent on $\eta$) so that if
$\eps \in (0,\bar{\eps})$, there exists a $(1-\eps)$-reliable strategy pair
$(s_1,s_2)$ that achieves the rate-pair $(R_1,R_2)$ and is a $\eta$-NE.
Clearly, $\CNE \subset \C$. 

First, we make a few comments about the
definition of $\CNE$.  In this definition, the
parameter $\bar{\eps}$ is introduced so that
$(1-\eps)$-reliable
strategy pairs need only exist for ``small enough'' values of $\eps$.
 In the definition of the capacity region for
the interference channel this constraint is not needed, i.e. the
region is equally well defined by requiring the given
conditions to hold for any $\epsilon >0$ (since, clearly if a pair of
strategies are  $(1-\epsilon)$-reliable, they are
also $(1-\tilde{\epsilon})$-reliable for all $\tilde{\epsilon}
>\epsilon$). However, when defining $\CNE$, this condition is
important. In particular a pair of strategies can be an
$\eta$-NE for an $\eps$-game, but not an $\eta$-NE for an
$\tilde{\eps}$-game for  $\tilde{\epsilon} >\epsilon$, since
increasing the error probability
threshold enlarges the set of possible deviations an agent may make.
As an extreme example, consider the case where $\epsilon = 1$, in
which case each agent can achieve an arbitrarily high pay-off
regardless of the action of the other user and so no
$\eta$-NE exists. Thus, if we required our definition to hold
for any $\epsilon>0$, $\CNE$ would be empty.

Next, we turn to the use of $\eta$-NE in the definition.
A more natural approach would be to instead simply use NE.
In other words, define $\CNE$ to be
the closure of the rate pairs $(R_1,R_2)$ such that for any
$\eps$ small enough, that there exists a $(1-\eps)$-reliable strategy pair
$(s_1,s_2)$ which achieves the rate-pair $(R_1,R_2)$ and is a NE of
a $\eps$-game. The difficulty with this is that to determine such a
NE essentially requires one to find a particular scheme that achieves
the optimal rate for a given non-zero error probability. Finding such
a scheme that is extremely difficult and in general an open
problem.\footnote{Moreover, it is not even clear if there exists such
  a scheme, i.e.~a scheme that
 achieves the supremum of the rates over all $1-\eps$ reliable schemes.}
By introducing the slack $\eta$, these difficulties are
removed. Moreover,
since we require that this definition hold for all $\eta>0$, this
slack can be made arbitrarily small.

Finally, we would like to comment on the use of different block
lengths in our definitions. First, it can argued that if there is a
$(1-\eps)$-reliable strategy pair $(s_1,s_2)$ that achieves a rate
pair $(R_1,R_2)$ using codes of block lengths $N_1,N_2$, then there
exists a $(1-\eps)$ strategy pair that achieves the same rate pair
but with each user using the same block length. This follows by
considering using ``super-blocks'' of length $N$, where $N$ is the
least common multiple of $N_1$ and $N_2$. Over these super-blocks
the users can be viewed as using two equal-length codes. The error
probabilities, being the average bit error probabilities now across
longer blocks, remain less than $\eps$. This means that in computing
the capacity region $\C$, we can without loss of generality consider
only strategies in which both users use the same block lengths.
Also, in the Nash equilibrium definitions, we can without loss of
generality assume that in the nominal strategy, the two users use
the same block length (although each user is allowed to deviate
using another strategy of a different block length.).

\section{The Linear-deterministic IC}

\subsection{Deterministic Channel Model}
\label{sec:DeterministicChannel}

Let us now focus on a specific interference channel model: a
linear deterministic channel model analogous to the Gaussian channel. This
channel was first introduced in \cite{ADT07}. We begin by describing
the deterministic channel model for the point-to-point AWGN channel,
and then the two-user multiple-access channel. After understanding
these examples, we present the deterministic interference channel.

Consider first the model for the point-to-point channel (see
Figure~\ref{fig:p2pDeterministic}). The real-valued channel input is
written in base 2; the signal---a vector of bits---is interpreted as
occupying a succession of levels:
$$x=0.b_1 b_2 b_3 b_4 b_5\dots\,.$$
The most significant bit coincides with the highest level, the least
significant bit with the lowest level. The levels attempt to capture
the notion of \emph{signal scale}; a level corresponds to a unit of
power in the Gaussian channel, measured on the dB scale. Noise is
modeled in the deterministic channel by truncation. Bits of smaller
order than the noise are lost. Note that the number of bits above
the noise floor correspond to $\log_2 \snr$, where $\snr$ is the
signal-to-noise ratio of the corresponding Gaussian channel.

\begin{figure}
\begin{centering}
\psset{unit=0.9mm,linewidth=.3pt,arrowlength=1,arrowinset=0}
\begin{center}
\begin{pspicture}(-10,3)(40,24)

\pscircle(0,4){1.9}
\pscircle(0,8){1.9}
\pscircle(0,12){1.9}
\pscircle(0,16){1.9}
\pscircle(0,20){1.9}

\rput(0,4){\footnotesize $b_5$}
\rput(0,8){\footnotesize $b_4$}
\rput(0,12){\footnotesize $b_3$}
\rput(0,16){\footnotesize $b_2$}
\rput(0,20){\footnotesize $b_1$}

\psline{->}(1.5,16)(28.5,12)
\psline{->}(1.5,8)(28.5,4)
\psline{->}(1.5,12)(28.5,8)
\psline{->}(1.5,20)(28.5,16)

\pspolygon(-2,1)(-6,1)(-6,23)(-2,23)

\rput(30,0){
\pscircle(0,4){1.9}
\pscircle(0,8){1.9}
\pscircle(0,12){1.9}
\pscircle(0,16){1.9}
\pscircle(0,20){1.9}
\pspolygon(2,1)(6,1)(6,23)(2,23)
\rput(0,-4){\rput(0,8){\footnotesize $b_4$}
\rput(0,12){\footnotesize $b_3$}
\rput(0,16){\footnotesize $b_2$}
\rput(0,20){\footnotesize $b_1$}}
}

\psline[linestyle=dashed](0,1)(30,1)
\psline(1.5,4)(20.25,1)

\uput[l](-6,12){Tx}
\uput[r](36,12){Rx}
\uput[d](15,0){noise}
\end{pspicture}
\end{center}
\caption{The deterministic model for the point-to-point Gaussian
channel. Each bit of the input occupies a signal level. Bits of
lower significance are lost due to noise.}
\label{fig:p2pDeterministic}
\end{centering}
\end{figure}
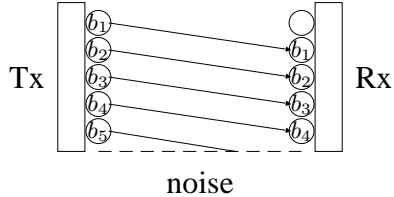

The deterministic multiple-access channel is constructed similarly
to the point-to-point channel (Figure~\ref{fig:deterministicMAC}).
To model the super-position of signals at the receiver, the bits
received on each level are added {\em modulo two}. Addition modulo
two, rather than normal integer addition, is chosen to make the
model more tractable. As a result, the levels do not interact with
one another.

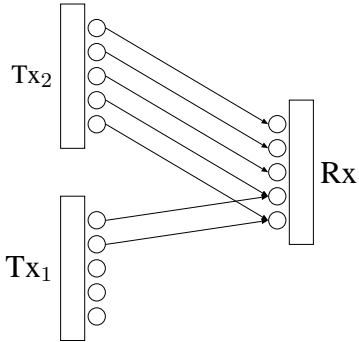
\begin{figure}
\begin{centering}
\psset{unit=.8mm,linewidth=.3pt,arrowlength=1.2,arrowinset=0,labelsep=5pt}
\begin{center}
\begin{pspicture}(-1,18)(37,68)

\rput(0,14){\pscircle(0,0){1.5}
\pscircle(0,4){1.5}
\pscircle(0,8){1.5}
\pscircle(0,12){1.5}
\pscircle(0,16){1.5}
\pspolygon(-2,-4)(-6,-4)(-6,20)(-2,20)

\uput[l](-5,8){$ {\text{Tx}_1}$}
}

\psline{->}(1.5,30)(28.5,34)
\psline{->}(1.5,26)(28.5,30)

\uput[r](35,38){$ \text{Rx}$}


\rput(0,46){
\pscircle(0,0){1.5}
\pscircle(0,4){1.5}
\pscircle(0,8){1.5}
\pscircle(0,12){1.5}
\pscircle(0,16){1.5}
\pspolygon(-2,-4)(-6,-4)(-6,20)(-2,20)
}
\psline{->}(1.5,46)(28.5,30)
\psline{->}(1.5,50)(28.5,34)
\psline{->}(1.5,54)(28.5,38)
\psline{->}(1.5,58)(28.5,42)
\psline{->}(1.5,62)(28.5,46)

\uput[l](-5,54){$\small \text{Tx}_2$}

\rput(30,30){
\pscircle(0,0){1.5}
\pscircle(0,4){1.5}
\pscircle(0,8){1.5}
\pscircle(0,12){1.5}
\pscircle(0,16){1.5}
\pspolygon(2,-4)(6,-4)(6,20)(2,20)
}

\end{pspicture}
\end{center} 
\caption{The deterministic model for the Gaussian multiple-access
channel. Incoming bits on the same level are added modulo two at the
receiver.} \label{fig:deterministicMAC}
\end{centering}
\end{figure}

We proceed with the deterministic interference channel model
(Fig.~\ref{fig:ICrectangles2}). There are two transmitter-receiver
pairs (links), and as in the Gaussian case, each transmitter wants
to communicate only with its corresponding receiver.  The signal
from transmitter $i $, as observed at receiver $j$, is scaled by a
nonnegative integer gain $a_{ji}=2^{n_{ji}} $ (equivalently, the
input column vector is shifted up by $n_{ji}$). At each time $t $,
the input and output, respectively, at link $i$ are
$\vx_i(t),\vy_i(t)\in \{0,1\}^q$, where $q=\max_{ij} n_{ij}$. Note
that $n_{ii}$ corresponds to $\log_2 \snr_i$ and $n_{ji}$
corresponds to $\log_2 \inr_{ji}$, where $\snr_i$ is the
signal-to-noise ratio of link $i$ and $\inr_{ji}$ is the
interference-to-noise ratio at receiver $j$ from transmitter $i$ in
the corresponding Gaussian interference channel.

The channel output at receiver $i$ is given by
\begin{equation}
\label{eq:channel} \vy_i(t)=
\mathbf{S}^{q-n_{i1}}\vx_1(t)+\mathbf{S}^{q-n_{i2}}\vx_2(t),\end{equation}
where summation and multiplication are in the binary field and
$\mathbf{S}$ is a $q\times q$ shift matrix,
\begin{equation}
\mathbf{S}=\left(
\begin{matrix} 0 & 0 & 0 & \cdots & 0 \cr 1 & 0 & 0 & \cdots & 0 \cr 0 & 1 & 0 & \cdots & 0 \cr \vdots & & & \ddots & \vdots \cr 0 & \cdots & 0 & 1 & 0 \end{matrix}
\right).
\end{equation}
If the inputs $\vx_i(t)$ are written as a binary number $x_i$, the
channel can equivalently be written as
\begin{align*}
  y_1&=\lfloor a_{11}x_1+ a_{12} x_2\rfloor \\
  y_2&=\lfloor a_{21}x_1+ a_{22}x_2 \rfloor \, ,
\end{align*}
where addition is performed on each bit (modulo two) and $\lfloor\,
\cdot \, \rfloor$ is the integer-part function.

In our analysis, it will be helpful to consult a different style of
figure, as shown on the right-hand side of
Fig.~\ref{fig:ICrectangles2}. This shows only the perspective of
each receiver. Each incoming signal is shown as a column vector,
with the highest element corresponding to the most significant bit
and the portion below the noise floor truncated. The observed signal
at each receiver is the modulo 2 sum of the elements on each level.
\begin{figure}
\begin{centering}
\psset{unit=.71mm,linewidth=.3pt,arrowlength=1.2,
arrowinset=0,labelsep=3pt}
\begin{center}
\begin{pspicture}(16,37)(80,73)

\rput(0,26){
\pscircle(0,4){1.5}
\pscircle(0,8){1.5}
\pscircle(0,12){1.5}
\pscircle(0,16){1.5}
\psline{->}(1.5,16)(24.5,16)
\psline{->}(1.5,8)(24.5,8)
\psline{->}(1.5,12)(24.5,12)
\psline{->}(1.5,4)(24.5,4)
\pspolygon(-2,0)(-6,0)(-6,20)(-2,20)
}

\uput[l](-5,36){$\scriptsize  \text{Tx}_2$}
\uput[r](31,36){$\scriptsize  \text{Rx}_2$}

\psline{->}(1.5,38)(24.5,56)
\psline{->}(1.5,42)(24.5,60)

\psline{->}(1.5,68)(24.5,30)

\rput(0,52){\pscircle(0,4){1.5}
\pscircle(0,8){1.5}
\pscircle(0,12){1.5}
\pscircle(0,16){1.5}
\psline{->}(1.5,16)(24.5,12)
\psline{->}(1.5,8)(24.5,4)
\psline{->}(1.5,12)(24.5,8)
\pspolygon(-2,0)(-6,0)(-6,20)(-2,20)
}

\uput[l](-5,62){$\scriptsize  \text{Tx}_1$}
\uput[r](31,62){$\scriptsize  \text{Rx}_1$}

\rput(26,26){
\pscircle(0,4){1.5}
\pscircle(0,8){1.5}
\pscircle(0,12){1.5}
\pscircle(0,16){1.5}
\pspolygon(2,0)(6,0)(6,20)(2,20)
}

\rput(26,52){
\pscircle(0,4){1.5}
\pscircle(0,8){1.5}
\pscircle(0,12){1.5}
\pscircle(0,16){1.5}
\pspolygon(2,0)(6,0)(6,20)(2,20)}


\rput(45,30){\psline[linestyle=dashed,dash=3pt 2pt](0,0)(56,0)
\psline(1,0)(1,21)(7,21)(7,0) \rput(4,-2){\scriptsize 1}
\psline(13,0)(13,14)(19,14)(19,0)\rput(16,-2){\scriptsize 2}

\psline(36,0)(36,7)(42,7)(42,0)\rput(39,-2){\scriptsize 1}
\psline(48,0)(48,28)(54,28)(54,0) \rput(51,-2){\scriptsize 2}

\uput[r](7,21){\scriptsize  $\alpha_{11}$} 
\uput[r](19,14){\scriptsize  $\alpha_{12}$}

\uput[l](36.5,7){\scriptsize  $\alpha_{21}$}
\uput[r](54,28){\scriptsize  $\alpha_{22}$}

\psline[linestyle=dashed,dash=1pt 1pt](1,7)(7,7)
\psline[linestyle=dashed,dash=1pt 1pt](1,14)(7,14)

\psline[linestyle=dashed,dash=1pt 1pt](13,7)(19,7)

\psline[linestyle=dashed,dash=1pt 1pt](48,7)(54,7)
\psline[linestyle=dashed,dash=1pt 1pt](48,14)(54,14)
\psline[linestyle=dashed,dash=1pt 1pt](48,21)(54,21)

\rput(10,36){$\text{Rx}_1$}
\rput(45,36){$\text{Rx}_2$}


}
\end{pspicture}
\end{center} 
\caption{At left is a deterministic interference channel. The more
compact figure at right shows only the signals as observed at the
receivers.} \label{fig:ICrectangles2}
\end{centering}
\end{figure}
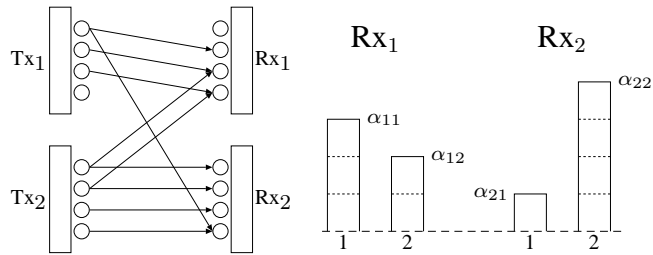

The deterministic interference channel is relatively simple, yet
retains two essential features of the Gaussian interference channel:
the loss of information due to noise, and the superposition of
transmitted signals at each receiver. The modeling of noise can be
understood through the point-to-point channel above. The
superposition of transmitted signals at each receiver is captured by
taking the modulo 2 sum of the incoming signals at each level, as in
the model for the multiple-access channel.

\subsection{Main Results}

To begin, we give the capacity region, $\C$, of our two-user
deterministic interference channel. This region is given by Theorem
1 in \cite{EGC82}, which applies to a larger class of deterministic
interference channels. For our model, the resulting region becomes
 the set of non-negative rates satisfying:\footnote{The
boundaries of the region in \cite{EGC82} is given in terms of
conditional entropies that must be maximized over any product
distribution on the channel inputs. For our model the optimizing
input distribution for each bound is always uniform over the input
alphabet. The given bounds follow.}
{\allowdisplaybreaks
\begin{align}
R_1 &\leq n_{11}\\
R_2 &\leq n_{22}\\
R_1+ R_2 &\leq (n_{11} - n_{12})^+ +\max(n_{22},n_{12})\label{eq:sum1}\\
R_1+ R_2 &\leq (n_{22} - n_{21})^+ + \max(n_{11},n_{21})\label{eq:sum2}\\
R_1+ R_2 &\leq \max(n_{21},(n_{11} - n_{12})^+) \nonumber\\
&\quad\quad+\max(n_{12},(n_{22}-n_{21})^+)\label{eq:sum3}\\
2R_1 + R_2 &\leq \max(n_{11},n_{21})+ (n_{11} - n_{12})^+ \nonumber\\
&\quad\quad+ \max(n_{12},(n_{22}-n_{21})^+) \\
R_1+ 2R_2 &\leq  \max(n_{22},n_{12}) + (n_{22} - n_{21})^+ \nonumber\\
&\quad\quad+ \max(n_{21},(n_{11} - n_{12})^+).
\end{align}
}

Our main result, stated in Theorem~\ref{prop:cne} below is to
completely characterize $\CNE$ for the two-user deterministic
interference channel model. This characterization is in terms of
$\mathcal C$ and a ``box'' ${\mathcal B}$ in $\mathbb R_+^2$ given
by (see Fig.~\ref{fig:box})
\[
{\mathcal B} = \{(R_1,R_2): L_i \leq R_i \leq U_i, \forall i =
1,2\},
\]
where for each user $i=1,2$,
\begin{equation}\label{eq:Li}
L_i = (n_{ii}-n_{ij})^+,
\end{equation}
and
\begin{equation}\label{eq:Ui}
U_i =
\begin{cases}
n_{ii}-\min(L_j,n_{ij}), & \text{ if $n_{ij} \leq n_{ii}$},\\
\min((n_{ij}-L_j)^+,n_{ii}), & \text{ if $n_{ij} > n_{ii}$.}
\end{cases}
\end{equation}

\begin{figure}
\begin{centering}
\setlength{\unitlength}{.085in}%

\begin{picture}(10,11)(-1,-1)
\put(-1,9){\makebox(0,0){\small $R_2$}}
\put(3,2){\framebox(5,4){}}
\put(0,0){\vector(0,1){9}}
\put(0,0){\vector( 1, 0){10}}
\put(5.5,4){\makebox(0,0){${\mathcal B}$}}
\put(9,7){\makebox(0,0){\small $(U_1,U_2)$}}
\put(2.7,7){\makebox(0,0){\small $(L_1,U_2)$}}
\put(2.7,1){\makebox(0,0){\small $(L_1,L_2)$}}
\put(9,1){\makebox(0,0){\small $(U_1,L_2)$}}
\put(10,-1){\makebox(0,0){\small $R_1$}}

\end{picture}
\caption{An example of the box ${\mathcal B}$. The values of the
four
  corner points are indicated in the figure.}\label{fig:box}
\end{centering}
\end{figure}
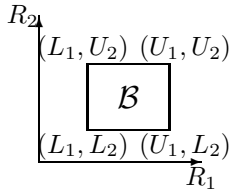

We now state our main result.

\begin{theorem}\label{prop:cne}
$\mathcal C_{NE} = \mathcal C \cap \mathcal B$. 
\end{theorem}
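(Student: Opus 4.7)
The plan is to show the two inclusions $\mathcal{C}_{NE} \subseteq \mathcal{C} \cap \mathcal{B}$ (converse) and $\mathcal{C} \cap \mathcal{B} \subseteq \mathcal{C}_{NE}$ (achievability) separately. The converse inclusion $\mathcal{C}_{NE} \subseteq \mathcal{C}$ is immediate since any NE strategy pair is $(1-\eps)$-reliable and hence achieves rates in the capacity region, so it remains to establish $L_i \le R_i \le U_i$ for each user $i$. The lower bound $R_i \ge L_i$ is proved by exhibiting a concrete ``private-level'' deviation: user $i$ transmits uniform iid bits on only the top $L_i$ positions of $\vx_i$, which at receiver $i$ occupy levels strictly above any possible interference from user $j$; this achieves rate exactly $L_i$ regardless of $s_j$, so in any $\eta$-NE one must have $R_i \ge L_i - \eta$, and letting $\eta \to 0$ yields $R_i \ge L_i$.

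The upper bound $R_i \le U_i$ is the technical heart of the converse. The key claim to prove is that in any NE, the interference at receiver $i$ contributed by $\vx_j$ has entropy at least $\min(L_j, n_{ij})$, i.e.
\[
H\bigl(\mathbf{S}^{q-n_{ij}} \vx_j\bigr) \;\ge\; \min(L_j,n_{ij}) - O(\eta).
\]
Combined with the Fano-type inequality $R_i \le H(\vy_i) - H(\mathbf{S}^{q-n_{ij}} \vx_j)$ and the trivial level-counting bound $H(\vy_i) \le \max(n_{ii}, n_{ij})$, this yields $R_i \le U_i$ after checking the two cases $n_{ij} \le n_{ii}$ and $n_{ij} > n_{ii}$ separately and simplifying using the definitions of $L_i, U_i$. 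The entropy-concentration claim is proved from the already-established inequality $R_j \ge L_j$ together with the observation that the top $L_j$ bits of $\vx_j$, being clean at receiver $j$ regardless of user $i$'s choice, are the only portion of user $j$'s signal that unconditionally delivers rate $L_j$; any best-response encoding is therefore forced to concentrate essentially full entropy on these $L_j$ bits, of which exactly $\min(L_j, n_{ij})$ lie within the top $n_{ij}$ positions of $\vx_j$ visible at receiver $i$.

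For achievability, the plan is to give explicit NE strategies at each extreme point of $\mathcal{C} \cap \mathcal{B}$ via level-selection schemes in which each user transmits uniform iid bits on a chosen subset of $\vx_i$'s positions. At an asymmetric corner such as $(U_1, L_2)$, user $2$ transmits iid bits on \emph{all} $n_{22}$ of its data bits, creating $n_{12}$ bits of interference at receiver $1$, pinning user $2$'s own rate at $L_2$, and leaving user $1$'s best response free to extract exactly $U_1$ levels; the symmetric corner $(L_1, U_2)$ is analogous. The balanced corner $(U_1, U_2)$, when it lies in $\mathcal{C}$, is realized by letting each user pick a complementary subset of non-private levels so that each user's active bits exactly sidestep the other's interfering bits. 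Interior points of $\mathcal{C} \cap \mathcal{B}$ are then obtained by time-sharing between these corner NE strategies, coordinated through the common randomness shared between each transmitter-receiver pair; one verifies that the time-shared best response coincides with the best responses in each epoch, so the composite pair remains an $\eta$-NE.

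The main obstacle is the entropy-concentration step underpinning $R_i \le U_i$. From $R_j \ge L_j$ alone, a direct chain-rule calculation only gives the weaker bound $H(\mathbf{S}^{q-n_{ij}} \vx_j) \ge (n_{ij}-n_{ji})^+$, which falls short of $\min(L_j, n_{ij})$ whenever user $j$ could in principle encode information into the ``hidden'' positions $b_{n_{ij}+1}, \ldots, b_{n_{jj}}$ of $\vx_j$ that do not reach receiver $i$. Closing this gap requires tying the entropy argument to the best-response structure on \emph{user $i$}'s side as well: in maximizing its own rate, user $i$'s best response necessarily populates exactly the levels at receiver $j$ that sit on top of user $j$'s hidden bits, thereby jamming any attempt by user $j$ to decode information whose entropy is absent at receiver $i$. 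Turning this coupled best-response observation into a clean entropy lower bound for arbitrary information-theoretic strategies, without presupposing a level-based coding structure for user $j$, is where the bulk of the proof effort lies.
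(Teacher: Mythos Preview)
Your converse sketch correctly isolates the entropy-concentration claim $H(\mathbf{S}^{q-n_{ij}}\vx_j)\ge\min(L_j,n_{ij})-O(\eta)$ as the crux, and you are right that $R_j\ge L_j$ alone does not force it. But your proposed repair---leaning on user~$i$'s best response to jam user~$j$'s hidden bits---is not how the paper closes the gap, and it is not clear it can be made to work: at an $\eta$-NE the nominal strategy $s_i^*$ need not be a best response, so you cannot assume user~$i$ is populating any particular levels. The paper's device is instead a second \emph{deviation argument for user~$j$}. One splits $\vx_j=(\vu_j,\vv_j)$ with $\vu_j$ the top $\min(L_j,n_{ij})$ levels, applies Fano to get $R_j\le \tfrac1N H(\vu_j)+\tfrac1N I(\vv_j;\tilde\vy_j)+\delta$, and then exhibits an explicit alternative strategy $s_j'$ for user~$j$ (uncoded bits on the top $\min(L_j,n_{ij})$ levels, plus a random code on the remaining levels matched to the distribution of $\vv_j$) that achieves rate $\min(L_j,n_{ij})+\tfrac1N I(\vv_j;\tilde\vy_j)-\delta_1$. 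The $\eta$-NE inequality $R_j+\eta\ge R_j'$ then forces $\tfrac1N H(\vu_j)\ge \min(L_j,n_{ij})-O(\eta)$ directly. This is the missing idea in your converse.

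Your achievability plan---explicit level-selection schemes at the corners of $\mathcal B$, then time-sharing---differs substantially from the paper's and has two gaps. First, the extreme points of $\mathcal C\cap\mathcal B$ are not in general the corners of $\mathcal B$; when a sum-rate or $2R_i+R_j$ face of $\mathcal C$ cuts through $\mathcal B$ you get new vertices at which no simple ``complementary subset of levels'' description is available, so your corner constructions do not cover all the points you need. Second, and more seriously, time-sharing between $\eta$-NE pairs does not automatically produce an $\eta$-NE: a deviating user may code \emph{across} the time-sharing epochs, and you have not argued why this cannot beat the per-epoch best responses. The paper avoids both issues by working pointwise: for every $(R_1,R_2)\in\mathcal C\cap\mathcal B$ it builds a single \emph{randomized} Han--Kobayashi scheme in which each user injects extra common-random bits $R_{ir}$ on its common levels. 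The random bits are chosen so that the rate tuple is \emph{self-saturated} at each receiver's MAC region (any increase in $R_i$ violates a MAC constraint), and a Fano/MAC-converse argument then shows no deviation can gain more than $\eta$. The existence of such a self-saturating randomized split for every point in $\mathcal C\cap\mathcal B$ is established by first finding a (non-randomized) split that ``fully utilizes the interference-free levels'' and then raising $R_{jr}$ until the sum-rate constraint at receiver~$i$ binds. This common-random-message mechanism is the achievability idea you are missing.
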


First let us interpret the bounds $L_1,L_2, U_1,U_2$. The number
$L_i$ is the number of levels that user $i$ can transmit above the
interference floor created by user $j$, i.e. the number of levels at
receiver $i$ that cannot see any interference from user $j$. These
are always the most significant bits of user $i$'s transmitted
signal. In the example channel in Figure \ref{fig:ICrectangles2},
these correspond to the top level for transmitter $1$ ($L_1=1$) and
the top $3$ levels for transmitter $2$ ($L_2=3$). The number $U_i$
is the number of levels at receiver $i$ that receive signals from
transmitter $i$ but are free of interference from the top $L_j$
levels from transmitter $j$. In the example channel in
Fig.~\ref{fig:ICrectangles2}, these correspond to the top level at
receiver $1$ ($U_1=1$) and the top three levels at receiver $2$
($U_2=3$).

Intuitively, it is clear that at any $\eta$-NE, user $i$ should have rate
at least $L_i$: these levels are interference-free and user $i$ can
always send information at the maximum rate on these levels. This
will create interference of maximum entropy at a certain subset of
levels at receiver $j$ and render them un-usable for user $j$. The
rate for user $j$ is bounded by the number of remaining levels that
it can use. This is precisely the upper bound $U_j$. What Theorem
\ref{prop:cne} says is that any rate pair in the capacity region $\C$
subject to these natural constraints is in $\CNE$.

To illustrate this result, consider a symmetric interference channel
in which $n_{11}=n_{22}$ and $n_{12}=n_{21}$. Let $\alpha =
n_{ji}/n_{ii}$ be the normalized cross gain. Four examples of
$\mathcal C$ and $\mathcal B$ corresponding to different ranges of
$\alpha$ are shown in Fig.~\ref{fig:intersect}. For
$0<\alpha<\frac{1}{2}$, $\CNE=\mathcal B$ is a single point, which
lies at the symmetric sum-rate point of $\mathcal C$. For
$\frac{1}{2}<\alpha<\frac{2}{3}$, again $\CNE=\mathcal B$. $\CNE$
contains a single efficient point (the symmetric sum-rate point in
$\C$), but now there are additional interior points of $\C$ which
may be achieved as a Nash equilibrium.\footnote{In a slight abuse
 of terminology, we say that points in $\CNE$ can be
``achieved as a NE.''}
 For $\frac{2}{3}<\alpha<1$,
$\CNE$ is the intersection of the simplex formed by the sum-rate
constraint of $\mathcal C$ and $\mathcal B$. In this case, there are
multiple efficient points; in fact, the entire sum-rate face of
$\mathcal C$ is included in $\CNE$. For $1<\alpha<2$, $\mathcal C
\subset \mathcal B$ and so $\CNE = \C$. For $2\leq \alpha$ (not
shown) $\mathcal C = \mathcal B$ and so again $\CNE = \C$. Note that
in all cases, the symmetric rate point is in $\CNE$.

\begin{figure}
\begin{centering}
\psset{unit=0.009in}
\begin{pspicture}(-10,-10)(225,225)
\put(0,130){\vector(0,1){90}} 
\put(0,130){\vector(1,0){90}} 
\psline(20,200)(0,200) 
\psline(70,150)(70,130) 
\psline(20,200)(70,150) 
\put(45,175){\circle*{4}} 
\put(67,197){\vector(-1,-1){19}}
\put(70,203){\makebox(0,0){\small ${\mathcal C}\cap {\mathcal B}$}}
\put(-10,220){\makebox(0,0){\small $R_2$}}
\put(90,120){\makebox(0,0){\small $R_1$}}
\put(40,110){\makebox(0,0){\small $0<\alpha<\frac{1}{2}$}}

\put(120,130){\vector(0,1){90}}
\put(120,130){\vector( 1, 0){90}}
\put(130,140){\framebox(35,35){}}
\put(147,157.5){\makebox(0,0){\small ${\mathcal C}\cap {\mathcal B}$}}
\psline(140,200)(120,210)
\psline(190,150)(200,130)
\psline(140,200)(190,150)
\put(210,120){\makebox(0,0){\small $R_1$}}
\put(110,220){\makebox(0,0){\small $R_2$}} 
\put(160,110){\makebox(0,0){\small $\frac{1}{2}<\alpha<\frac{2}{3}$}}

\put(0,0){\vector(0,1){90}}
\put(0,0){\vector(1,0){90}}
\put(10,10){\framebox(60,60){}}
\put(30,35){\makebox(0,0){\small ${\mathcal C}\cap {\mathcal B}$}}
\psline(20,70)(0,80)
\psline(70,20)(80,0)
\psline(20,70)(70,20)
\put(-10,90){\makebox(0,0){\small $R_2$}}
\put(90,-10){\makebox(0,0){\small $R_1$}}
\put(40,-20){\makebox(0,0){\small $\frac{2}{3}<\alpha<1$}}

\put(120,0){\vector(0,1){90}}
\put(120,0){\vector(1,0){90}}
\put(120,0){\framebox(70,70){}}
\put(150,35){\makebox(0,0){\small ${\mathcal C}\cap {\mathcal B}$}}
\psline(140,70)(120,70)
\psline(190,20)(190,0)
\psline(140,70)(190,20)
\put(110,90){\makebox(0,0){\small $R_2$}}
\put(210,-10){\makebox(0,0){\small $R_1$}}
\put(160,-20){\makebox(0,0){\small $1<\alpha<2$}}

\end{pspicture}
\caption{Examples of $\CNE = \mathcal C \cap \mathcal B$ for a
  symmetric interference channel with normalized cross gain $\alpha$.}
\label{fig:intersect}
\end{centering}
\end{figure}
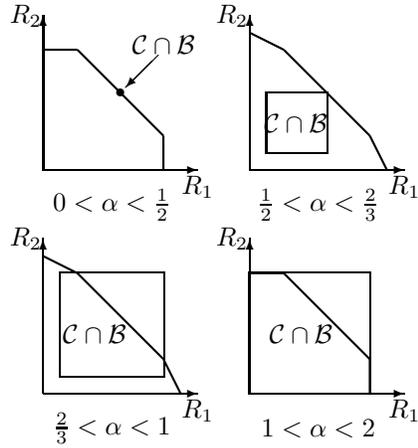

\subsection{Proofs}

To prove Theorem~\ref{prop:cne}, we first show that points outside
of $\mathcal B$ cannot be achievable as a Nash equilibrium,
formalizing the intuition discussed earlier. We will then show that
all points inside $\mathcal C\cap \mathcal B$ can in fact be
achieved.

\subsubsection{Non-equilibrium points}

\begin{lemma}\label{lem:Li}
If $(R_1,R_2) \in \CNE$, then $R_i\geq L_i= (n_{ii}-n_{ij})^+$ for
$i=1,2$.
\end{lemma}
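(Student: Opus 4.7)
The plan is to exhibit, for each user $i$, a very simple single-shot deviation that secures a reliable rate of exactly $L_i$ regardless of the other user's strategy. If an $\eta$-NE achieved $R_i < L_i$ for sufficiently small $\eta$, this deviation would offer a pay-off improvement exceeding $\eta$, contradicting the equilibrium condition. Taking closure then yields the claim on all of $\CNE$.

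To construct the deviation, first observe that $L_i = 0$ makes the conclusion immediate since rates are nonnegative. Otherwise $n_{ii} > n_{ij}$, and by the channel equation (\ref{eq:channel}) the shifted cross-signal $\mathbf{S}^{q-n_{ij}}\vx_j(t)$ is zero on its top $q-n_{ij}$ coordinates, so in particular on the top $L_i = n_{ii} - n_{ij}$ coordinates of $\vy_i(t)$; meanwhile $\mathbf{S}^{q-n_{ii}}\vx_i(t)$ places the top $L_i$ bits of $\vx_i(t)$ in exactly those positions. I would therefore consider the deterministic single-symbol deviation $\tilde s_i$ with $N_i = 1$ and $B_i = L_i$ which writes the $L_i$ message bits into those interference-free top positions of $\vx_i$ and zeros elsewhere, with the matching decoder that simply reads them off at the receiver. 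This strategy has rate $L_i$ and bit-error probability identically zero on every block, whatever strategy user $j$ plays.

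Now suppose for contradiction that $(R_1, R_2)$ lies in the set whose closure defines $\CNE$ with $R_i < L_i$, and set $\eta := (L_i - R_i)/2 > 0$. By the definition of $\CNE$, there is a $\bar{\eps} > 0$ such that, for every $\eps \in (0, \bar{\eps})$, there exists a $(1-\eps)$-reliable $\eta$-NE $(s_1^*, s_2^*)$ achieving $(R_1, R_2)$. Replacing $s_i^*$ by $\tilde s_i$, the zero error probability satisfies the $\eps$-reliability threshold trivially, so $\pi_i(\tilde s_i, s_j^*) = L_i = R_i + 2\eta > \pi_i(s_1^*, s_2^*) + \eta$, contradicting the $\eta$-NE property. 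Since the half-plane $\{R_i \geq L_i\}$ is closed, the inequality passes from the defining set to its closure.

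No substantive obstacle is expected: the argument simply formalizes the intuition that the top $L_i$ levels at receiver $i$ are always exploitable by user $i$. The one point worth explicit mention is that $\tilde s_i$ uses blocklength $N_i = 1$, in general different from the blocklength employed by $s_j^*$; this is legitimate because the formulation in Section~\ref{sec:formulation} explicitly allows a deviating user to pick any blocklength independently of the other user.
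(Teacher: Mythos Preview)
Your proposal is correct and follows essentially the same approach as the paper's own proof: exhibit the uncoded single-shot deviation on the $L_i$ interference-free top levels, observe it achieves rate $L_i$ with zero error regardless of user $j$'s strategy, and conclude via the $\eta$-NE condition that $R_i \ge L_i - \eta$ for all $\eta > 0$. Your treatment is slightly more careful than the paper's in explicitly handling the closure, the trivial case $L_i = 0$, and the blocklength mismatch, but the underlying idea is identical.
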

\begin{proof}
User $i$'s $L_i$ highest transmitted levels see no interference from
user $j$'s signal at receiver $i$. Hence, in any
$\epsilon$-game, user $i$ can always achieve a pay-off of $R_i =
L_i$ (with zero probability of error) by sending $L_i$ uncoded bits,
one on each of these levels, independent of user $j$'s strategy.  Thus
for $(R_1,R_2)$ to be obtained as an $\eta$-NE of an $\epsilon$-game
for any $\epsilon >0$, it must be
that $R_{i} > L_{i}-\eta $ for all $i$; otherwise, user $i$ could deviate
using the above strategy and improve his pay-off by $\eta$.
If  $(R_1,R_2) \in \CNE$, then this inequality should hold
for all $\eta>0$. Taking the limit as $\eta\rightarrow 0$, the result follows.
\end{proof}

\begin{lemma}\label{lem:Ui}
If $(R_1,R_2) \in \CNE$, then for all $i= 1,2$, $R_i \leq U_i$,
where $U_i$ is given in (\ref{eq:Ui}).
\end{lemma}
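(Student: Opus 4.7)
The plan is to combine a Fano-inequality converse at receiver $i$ with a lower bound on the entropy of the portion of user $j$'s signal visible at receiver $i$, where the latter is extracted from the rate floor $R_j \geq L_j$ established in Lemma~\ref{lem:Li}.

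Fix $(R_1,R_2) \in \CNE$ and a corresponding $\eta$-NE pair $(s_1^*,s_2^*)$ of an $\eps$-game; by Lemma~\ref{lem:Li} applied to user $j$, $R_j \geq L_j - \eta$. Passing to super-blocks of length $N$ equal to the least common multiple of $N_1,N_2$, I can assume a common block length. Writing $\tvx_j := \mathbf{S}^{q-n_{ij}}\vx_j$ for the image of user $j$'s codeword at receiver $i$, Fano's inequality, data processing, the channel equation $\vy_i = \mathbf{S}^{q-n_{ii}}\vx_i + \tvx_j$, and the independence of $\vx_j$ from $(\vx_i,\omega_i)$ together yield
\[
N R_i \leq I(\vx_i^N; \vy_i^N \mid \omega_i) + N\eps' = H(\vy_i^N \mid \omega_i) - H(\tvx_j^N) + N\eps'.
\]
Since only $\max(n_{ii},n_{ij})$ of the $q$ levels of $\vy_i$ can be non-zero, $H(\vy_i^N \mid \omega_i) \leq N \max(n_{ii},n_{ij})$.

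The heart of the argument is the estimate $H(\tvx_j^N) \geq N\min(L_j, n_{ij}) - o(N)$, which translates user $j$'s rate floor $L_j$ into randomness on the bits visible at receiver $i$. I would establish this by a second Fano step at receiver $j$, exploiting the fact that the top $L_j$ levels of $\vy_j$ coincide with the top $L_j$ bits of $\vx_j$ and are interference-free. When $L_j \leq n_{ij}$, those top $L_j$ bits sit inside $\tvx_j$, so reliable delivery of $L_j - \eta$ new bits per symbol forces $H(\tvx_j^N) \geq N L_j - o(N)$. When $L_j > n_{ij}$, $\tvx_j$ is just a length-$n_{ij}$ prefix of the interference-free slice, and the analogous argument gives $H(\tvx_j^N) \geq N n_{ij} - o(N)$.

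Combining the two estimates gives $R_i \leq \max(n_{ii},n_{ij}) - \min(L_j,n_{ij}) + \eps' + \eta$. In Case 1 ($n_{ij} \leq n_{ii}$) the right-hand side is exactly $U_i + o(1)$; in Case 2 ($n_{ij} > n_{ii}$), intersecting with the trivial bound $R_i \leq n_{ii}$ yields $R_i \leq \min\bigl(n_{ii},(n_{ij}-L_j)^+\bigr) + o(1) = U_i + o(1)$. Letting $\eta,\eps \downarrow 0$ completes the proof. The main obstacle is the lower bound on $H(\tvx_j^N)$: the naive inequality $H(\vx_j) \geq R_j \geq L_j$ does not suffice, because the entropy of $\vx_j$ could reside in bits below the top $n_{ij}$ positions, which are invisible to receiver $i$. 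Circumventing this requires carefully decomposing user $j$'s mutual information at receiver $j$ between its clean top-$L_j$ slice and its interfered bottom, and using the determinism of the former to concentrate the required randomness onto the bits of $\vx_j$ that sit inside $\tvx_j$.
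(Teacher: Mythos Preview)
Your outer Fano step at receiver $i$ is fine and matches the paper. The gap is in the ``heart of the argument'': you try to extract the entropy lower bound $H(\tvx_j^N)\ge N\min(L_j,n_{ij})-o(N)$ solely from the rate floor $R_j\ge L_j-\eta$ of Lemma~\ref{lem:Li} together with a decomposition of $I(\vx_j;\vy_j)$ into its clean top slice and interfered bottom. That decomposition indeed gives
\[
R_j \;\le\; \frac{1}{N}H(\vu_j)\;+\;\frac{1}{N}I(\vv_j;\tvy_j|\vu_j)\;+\;\delta,
\]
with $\vu_j$ the top $\min(L_j,n_{ij})$ bits, but plugging in $R_j\ge L_j-\eta$ only yields $H(\vu_j)\ge N\bigl(L_j-\eta-\delta\bigr)-I(\vv_j;\tvy_j|\vu_j)$, and you have no handle on $I(\vv_j;\tvy_j|\vu_j)$. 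Nothing about reliable delivery of $L_j$ bits per symbol forces those bits to live on the top levels: user $j$ could in principle keep $\vu_j$ constant and pack all its rate into $\vv_j$ whenever user $i$'s interference on $\tvy_j$ happens to be benign, so the inequality $R_j\ge L_j$ alone cannot rule this out.

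What is missing is a second, direct use of the $\eta$-NE property for user $j$ (beyond the corollary $R_j\ge L_j$). The paper constructs an explicit deviation $s_j'$ in which user $j$ sends uncoded bits on the top $\min(L_j,n_{ij})$ levels and, on the remaining levels, a random code drawn from the \emph{same} marginal distribution as $\vv_j$; against the fixed $s_i^*$ this achieves rate $\min(L_j,n_{ij})+\frac{1}{N}I(\vv_j;\tvy_j)-\delta_1$. The $\eta$-NE condition $R_j+\eta\ge R_j'$ then cancels the uncontrolled $I(\vv_j;\tvy_j)$ term against the Fano upper bound and isolates $\frac{1}{N}H(\vu_j)\ge \min(L_j,n_{ij})-o(1)$. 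Without this deviation-and-compare step your proof does not close.
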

\begin{IEEEproof}
Suppose $(R_1,R_2) \in \CNE$.  Without loss of generality, let us
focus on user $1$ to show that $R_1 \le U_1$.

If $n_{12}-L_2 \ge n_{11}$, then $U_1 = n_{11}$ and clearly $R_1 \le
n_{11}$, so there is nothing to prove. So in the following we can 
assume that
\beq \label{eq:cond} n_{12} - L_2 < n_{11}.\eeq

Fix an arbitrary $\eta > 0$. Given a sufficiently small $\eps > 0$, there exist a
$(1-\eps)$-reliable strategy pair $(s^*_1, s^*_2)$ achieving the
rate pair $(R_1,R_2)$ that is also a $\eta$-NE. As remarked in Section \ref{sec:formulation}, we can assume that in this nominal strategy pair, both users use a common block length $N$.  Applying Fano's inequality
to user $1$ for the average bit error probability (see for example Theorem 4.3.2 in \cite{Gal08}), we get the bound, for any block $k$:
\beq \label{eq:fano}
R_1 \le \frac{I(m_1; \vy_1|\omega_1)}{N} + \delta \eeq
where $\delta$
depends on $\eps$ and goes to zero as $\eps$ goes to zero. Here, $\omega_1$
denotes any randomness in $\vx_1$, which recall is 
known at receiver 1. Note that
we drop the block indices of the message and the signals to simplify
notation. Now,
\begin{eqnarray*} & & \frac{1}{N} I(m_1; \vy_1|\omega_1)\\
& \le & \frac{1}{N} I(\vx_1;\vy_1|\omega_1)\\
& \le &  \frac{1}{N} I(\vx_1;\vy_1)\\
& =  & \frac{1}{N} \left [H(\vy_1) - H(\vs_2) \right]\\
& \le & \max(n_{11},n_{12}) - \frac{H(\vs_2)}{N} \end{eqnarray*} where $\vs_2$
is the signal from user $2$ that is visible at receiver $1$. Here, the second
inequality follows since $\omega_1 - \vx_1 -\vy_1$ forms a Markov chain.
Combining this with
the above inequality, we get:
\beq \label{eq:upper} R_1 \le \max(n_{11},n_{12}) - \frac{1}{N}
H(\vs_2) + \delta. \eeq

We now seek a bound on $H(\vs_2)$.  Applying Fano's inequality to
user $2$, we get
\begin{eqnarray*} R_2   &\le & \frac{1}{N} I(m_2; \vy_2|\omega_2) + \delta\\
& \le & \frac{1}{N} I(\vx_2;\vy_2) +\delta\\
& = & \frac{1}{N} I(\vu_2,\vv_2; \vy_2) +\delta
\end{eqnarray*}
where $\vu_2$ is the
part of user 2's transmitted signal $\vx_2$ which is on the top
$\min(L_2, n_{12})$ levels and $\vv_2$ is the rest. The significance
of $\vu_2$ is that it is received without interference at receiver
$2$ {\em and} is visible at receiver $1$ (i.e., part of $\vs_2$).
Correspondingly, split user 2's received signal $\vy_2 =
(\vu_2,\tvy_2)$. Now

\begin{eqnarray*}
 & & I(\vu_2, \vv_2; \vy_2)\\
& = & I(\vu_2;\vy_2) + I(\vv_2;\vy_2|\vu_2)\\
& = & H(\vu_2) + I(\vv_2;\tvy_2|\vu_2)\\
& \le & H(\vu_2) + I(\vv_2;\tvy_2) \end{eqnarray*}
 since $\vu_2 -
\vv_2 - \tvy_2$ forms a Markov chain.

Combining this with the previous equation, we get:
\beq \label{eq:upper2} R_2   \le \frac{1}{N} H(\vu_2) + \frac{1}{N}
I(\vv_2;\tvy_2) + \delta. \eeq

Let us now consider an alternative strategy $s'_2$ for user 2. This
encoding strategy has two independent sub-codes. The first sub-code
transmits uncoded bits on the top $\min(L_2, n_{12})$ levels,
achieving a rate of $\min(L_2, n_{12})$ bits per symbol time with
zero error. The second sub-code transmits on the remaining $n_{22}-
\min(L_2,n_{12})$ levels. It codes over $K$ blocks of length $N$
each. Each codeword in this code has $K$ components, each spanning
$N$ symbol times, for a total length of $KN$ symbol times. Each
codeword is chosen randomly, with i.i.d. $N$-length components and
each component chosen from the distribution of user 2's transmit
signal $\vv_2$ under the original encoding strategy $s_2^*$. Note
that since user 1's strategy $s_1^*$ codes only within blocks of
length $N$ and sends independent message across different blocks,
the interference from user 1 is i.i.d. across such blocks. User 2
thus faces a memoryless channel from block to block. Standard random
coding arguments apply and one can show that for any $\delta_1 > 0$,
there exists a large enough $K$ such that strategy $s'_2$ can
achieve a rate of $I(\vv_2; \tvy_2) - \delta_1$ bits per block and
with a probability of error of less than $\eps$. Thus, strategy
$s_2'$ achieves a total rate $R_2'$ bits per symbol time reliably,
where
\beq \label{eq:prime} R_2' =  \min(L_2, n_{12}) + \frac{1}{N}
[I(\vv_2;\tvy_2) - \delta_1]. \eeq

By definition of $\eta$-NE, strategy $s_2'$ cannot perform
much better than $s_2^*$, i.e., $R_2 + \eta  \ge R'_2$. Combining (\ref{eq:prime})
and (\ref{eq:upper2}), we now have:
\beq \frac{1}{N} H(\vu_2) \ge \min(L_2, n_{12}) - \delta_1/N -\delta - \eta.
\eeq

Essentially, we have shown that user 2 under strategy $s_2^*$ must
be transmitting information at maximum entropy on these $\min\{L_2,
n_{12})$ levels by virtue of the fact that it forms a $\eta$-NE.
Substituting this into (\ref{eq:upper}) and observing that $H(\vs_2)
\ge H(\vu_2)$, we get:

\beq  R_1 \le \max(n_{11},n_{12}) -  \min(L_2, n_{12}) +\delta_1/N +
2 \delta +\eta. \eeq

Under the condition (\ref{eq:cond}), one can readily verify that
$U_1 = \max(n_{11},n_{12}) -  \min(L_2, n_{12})$.  Since $\eta$, $\delta$
and $\delta_1$ can be chosen arbitrarily small, we have shown that
$R_1 \le U_1$. The proof is complete.

\end{IEEEproof}

\subsubsection{Achievable Nash Equilibria}

From Lemmas~\ref{lem:Li} and \ref{lem:Ui}, it follows that
${\mathcal C}_{NE} \subseteq \mathcal C \cap \mathcal B$. 
In this section, we
show that these two sets are in fact equal,
proving Theorem~\ref{prop:cne}.
To do this we consider a modification of the class of
Han-Kobayashi strategies presented in \cite{HK81}. In these strategies,
each user splits the transmitted information into two parts:
private information to be decoded at only their own receiver and
common information that can be decoded at both receivers. In \cite{BT08}
it is shown that a particular class of these strategies can achieve
any point in the capacity region of the deterministic channel. The
modification we make to these schemes is to allow each transmitter
to include extra random bits in their common message. Next, we give some
preliminary definitions related to Han-Kobayashi schemes and then
formally define this class of strategies.

For a given deterministic interference channel, let ${\mathcal X}_i$
denote the input alphabet of user $i$, i.e.~this is the set of 
$\max(n_{ii},n_{ji})$. We decompose this set as the direct product
${\mathcal X}_{ic} \times {\mathcal X}_{ip}$, so that for any
$\vx_i \in \mathcal X_i$ can be written as
$\vx_i = (\vx_{ip},\vx_{ic})$,
where $\vx_{ip}$ denotes the $(n_{ii}-n_{ji})^+$ least significant
levels of $\vx_{i}$, and
$\vx_{ic}$ consists of the $n_{ji}$ most significant levels.
The significance of this decomposition is that the $\vx_{ip}$ consists of
{\it private levels} for user $i$ which are visible only at his receiver, while
$\vx_{ic}$ consists of {\it common levels} that are visible at 
receiver $j$.

We define a {\it randomized Han-Kobayashi} scheme
for a given block-length $N$ to be a scheme in which each each user $i$
separates the message set $\{1,\ldots, 2^{B_i}\}$ into the direct
product of a private message ${\mathcal M}_{ip}$ set containing
$2^{NR_{ip}}$ messages and a common message set ${\mathcal M}_{ic}$
containing $2^{NR_{ic}}$ messages, where
$NR_{ip} + NR_{ic} = B_i$. Additionally, each user $i$ is allowed 
to have a {\it random common message} set $\Omega_i$ 
consisting of $2^{NR_{ir}}$ equally likely
codewords; these can be thought of as $NR_{ir}$ random bits that 
the transmitter generates using the common randomness, which is 
shared with the corresponding
receiver. The message sets are then encoded using a superposition code
as follows.
First the transmitter encodes the common and common random message
via a map $f_{ic}: {\mathcal M}_{ic}\times\Omega_i \mapsto
{\mathcal X}_{ic}^N$, where the codebook is generated 
using an i.i.d.~uniform distribution over the common levels.
Next, the
transmitter encodes the private message via a map
$f_{ip}: {\mathcal M}_{ip}\times {\mathcal M}_{ic} \times \Omega_i
\mapsto {\mathcal X}_{ip}^N$, where for each common codeword $\vx_{ic}$, 
a different private codebook is generated using an i.i.d.~uniform 
distribution over the private levels.
Here, the common codeword $\vx_{ic}$ can
be viewed as defining the cloud center and the 
the private codeword can be
viewed as defining the cloud points. 
Transmitter $i$ then sends
the superposition of these two codewords.
In the special case where $R_{1r} =R_{2r} =0$, we 
refer to the resulting scheme 
as a {\it non-randomized Han-Kobayashi} scheme.

We call a randomized Han-Kobayashi scheme {\it $(1-\epsilon)$-reliable}
if each user $i$ can decode their own private and common messages with
an average probability of bit error
no greater than $\epsilon$. \footnote{This is slight 
strengthening of the previous definition of a reliable 
strategy, which only
required the overall average bit error probability to be no greater
than $\epsilon$. Hence, an agent's pay-off in a
$\epsilon$-game under a $(1-\epsilon)$-reliable Han-Kobayashi scheme
is again their rate.}  

Next we specify an achievable rate region for this class of schemes. This
characterization is in terms of {\it modified
MAC regions} for each of the two receivers. Specifically, 
the modified MAC region at receiver $i$, $\mathcal R_{i}^m$, 
is the set of rates 
$(R_{ic},R_{ir},R_{ip},R_{jc},R_{jr},R_{jc})$ that satisfy:
\begin{equation}\label{eq:MACi}
\begin{split}
R_{ic} +R_{ip} + R_{jc} +R_{jr} &\leq \max(n_{ii},n_{ij}) \\
R_{ip}+ R_{jc} + R_{jr} &  \leq \max(n_{ij},(n_{ii}-n_{ji})^+)  \\
R_{ip} & \leq (n_{ii} -n_{ji})^+\\
R_{ic} +R_{ip} & \leq n_{ii}
\end{split}
\end{equation}

The modified MAC region is derived by 
considering the MAC channel at receiver $i$ consisting 
of three transmitters one corresponding to user $i$'s 
own common message, one corresponding to user $i$'s own private
messages, and one corresponding to the combination user $j$'s 
common and common random messages. Here, user $i$'s own common random 
signal can be ignored
since it is known at receiver $i$ and so can be removed. Likewise,
user $j$'s private message can be ignored since it is not received at
receiver $i$. The capacity region of this three user MAC will have seven
constrains including the first four given in \eqref{eq:MACi}.
The modification to this is that we drop the remaining three
constraints by following similar arguments as in \cite{CMGE08}.
First, recall that 
a constraint for a MAC region that involves the 
rates $\{R_{i}: i\in \mathcal M\}$ for
some subset of the users $\mathcal M$ corresponds to a bound on an
error event where the message for each user in $\mathcal M$ is in
error and all other messages are correct \cite{Gal85}. The missing bounds
correspond to error events that we ignore for one of the
following two reasons. First, due to the use of superposition 
coding we can ignore constraints which correspond to making an 
error in the common message but not the private message. 
Second, from the point-of-view of user
$i$, we can ignore the constraint that corresponds to an error in only
user $j$'s signal.

From the above discussion and following similar arguments 
as in \cite{BT08},
we then have the following characterization of the rate-tuples
that be achieved with this class of randomized Han-Kobayashi schemes.

\begin{lemma}\label{lem:randomized}
$\mathcal R_{RHK} = \mathcal R_{1}^m \cap \mathcal R_{2}^m$ is an
achievable region for randomized Han-Kobayashi schemes.
\end{lemma}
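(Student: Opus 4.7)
The plan is to carry out a standard joint-typicality decoding analysis at each receiver for random codebooks generated as prescribed in the lemma statement. Fix a rate tuple in the interior of $\mathcal R_1^m \cap \mathcal R_2^m$ and a large block length $N$. At transmitter $i$, generate the common codebook $f_{ic}$ of $2^{N(R_{ic}+R_{ir})}$ codewords with entries i.i.d.\ uniform on $\mathcal X_{ic}$; then, for each common codeword, generate an independent private codebook of $2^{NR_{ip}}$ codewords with entries i.i.d.\ uniform on $\mathcal X_{ip}$. Because receiver $i$ shares the common randomness $\omega_i$ with transmitter $i$, it can restrict its own common codebook to the $2^{NR_{ic}}$ codewords consistent with the realized $\omega_i$; user $j$'s realization $\omega_j$ is unknown at receiver $i$, so from receiver $i$'s perspective user $j$'s common-plus-random contribution is a single effective codebook of rate $R_{jc}+R_{jr}$. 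User $j$'s private codeword lies entirely on levels below the interference floor at receiver $i$ and does not appear in $\vy_i$.

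The decoder at receiver $i$ therefore faces a three-input MAC with inputs $(\vx_{ic},\vx_{ip},\vx_{jc})$ and output $\vy_i$, and it applies the standard joint-typicality rule for user $i$'s messages $(m_{ic},m_{ip})$, declaring the unique pair for which \emph{some} choice of user $j$'s index, together with $\vy_i$, is jointly typical. A union bound over wrong-codeword events produces seven constraints, one per non-empty subset of the three inputs being in error. Computing the conditional entropies for the modulo-$2$ deterministic channel under the uniform product distribution gives the four inequalities in \eqref{eq:MACi}: the values $\max(n_{ii},n_{ij})$, $\max(n_{ij},(n_{ii}-n_{ji})^+)$, $(n_{ii}-n_{ji})^+$, and $n_{ii}$ are exactly the entropies of $\vy_i$ restricted to the levels on which the relevant subset of inputs is visible. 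The remaining three error events are dropped: the two events in which user $i$'s common codeword is in error but his private codeword is ``correct'' have vanishing probability because the private codebook is drawn independently for each common codeword (so a wrong common index forces the private codeword to be wrong with high probability under superposition coding); the single remaining event, in which only user $j$'s combined index is wrongly decoded, produces no bit error for user $i$ and is irrelevant to his payoff, following the argument of \cite{CMGE08}.

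Performing this analysis simultaneously at both receivers and applying a union bound gives, for any rate tuple in the interior of $\mathcal R_1^m \cap \mathcal R_2^m$, a $(1-\epsilon)$-reliable randomized Han--Kobayashi scheme for sufficiently large $N$; taking closures yields the claimed region. The main obstacle I expect is the careful justification that $R_{jr}$ enters the constraints only through the combined rate $R_{jc}+R_{jr}$ and does not produce an independent constraint of its own---this is because $\omega_j$ is unknown at receiver $i$ and is carried by the same superposed codebook as $m_{jc}$, so receiver $i$ cannot and need not distinguish the two. Once this, together with the three ``dropped'' error events, is handled as above, the remainder of the argument is a routine deterministic-MAC entropy computation parallel to the one already carried out in \cite{BT08}.
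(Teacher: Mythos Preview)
Your proposal is correct and follows essentially the same approach as the paper. The paper does not give a detailed proof of this lemma; it simply states that the result follows ``from the above discussion and following similar arguments as in \cite{BT08},'' where the ``above discussion'' is precisely the three-user MAC argument you outline (removing user $i$'s own common random signal since receiver $i$ knows $\omega_i$, ignoring user $j$'s private signal since it is below the noise floor at receiver $i$, and dropping the three superfluous error events via the superposition-coding and \cite{CMGE08} arguments you describe).
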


If a rate-tuple is in $\mathcal R_{RHK}$, receiver $i$ may not be able
to reliably decode user $j$'s common and common random messages. In 
particular, this must be true if $R_{jc} + R_{jr} > n_{ij}$. However,
if $R_{ic} + R_{ip} \geq L_i$  for $i=1,2$, then the next lemma shows
that this will be possible.

\begin{lemma}\label{lem:decode_common}
Any rate-tuple $(R_{1c}, R_{1r}, R_{1p},R_{2c},R_{2r},R_{2p})$ in the interior of 
$\R_{RHK}$ with $R_{ic} + R_{ip} \geq L_i$ for $i = 1,2$ 
can be achieved by a randomized Han-Kobayashi scheme in 
which each user $i$ decodes user $j$'s common and common random
message (with arbitrarily small probability of error).
\end{lemma}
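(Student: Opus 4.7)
The plan is to reuse the same randomized Han--Kobayashi code construction (codebooks and rates) that Lemma~\ref{lem:randomized} produces for the given rate-tuple, and simply upgrade the decoder at each receiver $i$ to a joint typicality decoder that aims to recover the full triple $(\vx_{ic},\vx_{ip},(\vx_{jc},\omega_j))$. Reliability of this upgraded decoder amounts to showing that the triple $(R_{ic},R_{ip},R_{jc}+R_{jr})$ lies in the \emph{full} three-user MAC capacity region at receiver $i$, not merely in the modified region $\mathcal R_i^m$.

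Relative to the full three-user MAC region at receiver $i$, three constraints were dropped in passing to $\mathcal R_i^m$: the single-user bound on $R_{ic}$ alone, the single-user bound on $R_{jc}+R_{jr}$ alone, and the pair-sum bound on $R_{ic}+R_{jc}+R_{jr}$. The first and third correspond to error events in which user $i$'s common codeword is decoded incorrectly while user $i$'s private codeword is decoded correctly; by the superposition construction (in which the private codebook is drawn independently for each common cloud-center), these error events have vanishing probability with no rate constraint needed, exactly as in the proof of Lemma~\ref{lem:randomized}. Hence these two constraints may remain dropped even in the upgraded decoder. The only genuinely new constraint that must be checked is the single-user bound $R_{jc}+R_{jr}\le n_{ij}$, which bounds the probability that user $j$'s message alone is in error.

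This bound is immediate from the total-sum constraint in $\mathcal R_i^m$ together with the lemma's hypothesis. From $\mathcal R_i^m$ we have $R_{ic}+R_{ip}+R_{jc}+R_{jr}\le\max(n_{ii},n_{ij})$, and by assumption $R_{ic}+R_{ip}\ge L_i=(n_{ii}-n_{ij})^+$; subtracting gives
\[
R_{jc}+R_{jr}\;\le\;\max(n_{ii},n_{ij})-(n_{ii}-n_{ij})^+\;=\;n_{ij},
\]
in both regimes $n_{ii}\ge n_{ij}$ and $n_{ii}<n_{ij}$. The symmetric argument at receiver $j$, invoking $R_{jc}+R_{jp}\ge L_j$, yields $R_{ic}+R_{ir}\le n_{ji}$, so receiver $j$ can likewise decode $(\vx_{ic},\omega_i)$.

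The main thing to be careful about is that the randomization (the common random message of rate $R_{ir}$ appended to user $i$'s common message) does not disrupt the superposition argument used to retain the two dropped constraints. Since appending $\omega_i$ merely inflates the effective common rate from $R_{ic}$ to $R_{ic}+R_{ir}$ without altering the layered codebook structure, the standard Han--Kobayashi error analysis carries over verbatim, and a routine random-coding argument then delivers arbitrarily small bit-error probability for the upgraded decoder at each receiver.
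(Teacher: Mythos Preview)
Your proposal is correct and follows essentially the same approach as the paper: both derive the key inequality $R_{jc}+R_{jr}<n_{ij}$ by subtracting the hypothesis $R_{ic}+R_{ip}\ge L_i$ from the total-sum constraint in $\mathcal R_i^m$. The only cosmetic difference is that the paper phrases decodability via successive decoding (user $i$ first recovers its own messages, obtaining a clean view of user $j$'s common signal, which can then be decoded since its rate is below $n_{ij}$), whereas you phrase it via joint typicality in the full three-user MAC region; these are equivalent for the deterministic channel.
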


\begin{proof}
First note that if a rate tuple is in the interior of $\R_{RHK}$ with 
$R_{ic} + R_{ip} \geq L_i = (n_{ii} - n_{ij})^+$, then
from the first constraint in (\ref{eq:MACi}) for receiver
$i$, it follows that
\begin{equation}\label{eq:fifth}
R_{jc} + R_{jr} < n_{ij}.
\end{equation}
Now consider a randomized Han-Kobayashi scheme which achieves this 
rate tuple. After user $i$ decodes his own private and common
messages, he will have a clean view of user $j$'s common
message. Moreover, from the above constraint, the rate of this message
is less than the capacity of the channel from user $j$ to receiver $i$
and so there must exist a randomized Han-Kobayashi scheme in which
user $i$ can also decode user $j$'s common messages with arbitrary 
reliability.\footnote{The 
  Han-Kobayashi scheme under consideration may suffice, however if
  this scheme does not provide a low enough probability of error,
  then a new scheme that achieves the same rate-tuple with the desired
  probability of error can be found (perhaps by using a longer
  blocklength).}
\end{proof}

A given rate tuple $\mathbf {R} =(R_{1c},R_{1r},R_{1p},R_{2c},R_{2r},R_{2p})$ 
is defined to be {\it self-saturated} at receiver $i$ if 
$\mathbf{R} \in \R_i^m$ at receiver $i$, 
but any other choice of $R_{ic}$, $R_{ir}$ and 
$R_{ip}$ with a larger value of
$R_{ic} +R_{ip}$ (keeping all other rates fixed)
will result in a rate-tuple that is not in $\R_{i}^m$.
Clearly, a self-saturated rate-tuple must lie on the boundary of the
modified MAC region at receiver $i$. Additionally, the constraints 
in \eqref{eq:MACi} that
are tight at this point must involve both $R_{ip}$ and $R_{ic}$.
If a rate-pair is self-saturated and $R_{ip} + R_{ic} \geq L_i$, then
it will be useful to think about this in the context of a second MAC
region at receiver $i$ in which there are only two users, one
corresponding to user $i$'s entire message (at rate $R_i =
R_{ic}+R_{ip}$) and a second that again corresponds to the common and
common random messages sent by user $j$ (at rate $R_{jc} + R_{jr}$).
This MAC region is given by
\begin{equation}\label{eq:MACi2}
\begin{split}
R_{i} + R_{jc} +R_{jr} &\leq \max(n_{ii},n_{ij}) \\
R_{i} & \leq n_{ii}\\
R_{jc} +R_{jr} &\leq n_{ij}.
\end{split}
\end{equation}
It can seen that if a rate tuple is self-saturated for user $i$ and
satisfy $R_{ic} + R_{ip} \geq L_i$, then
the rates $R_{i}$ and $R_{jc} + R_{jr}$ must be in (\ref{eq:MACi2}) and if
$R_i$ is increased by any amount this will no longer be true.
We use this to show that if a user is self-saturated
and his rate is greater than $L_i$,
then he can
not deviate and improve his pay-off. The key idea here is that if a user could
improve, then he will violate one of the MAC constraints in
(\ref{eq:MACi2}). This cannot be possible since after deviating and
decoding his own message, he should still be able to decode the 
other user's message. The next lemma formalizes this argument.

\begin{lemma}\label{lem:int}
If a rate tuple
$(R_{1c},R_{1r},R_{1p},R_{2c},R_{2r},R_{2p})$
is self-saturated with $R_{ic} + R_{ip} \geq L_i$
for both both receivers $i$, then
$(R_{1p}+R_{1c},R_{2p}+R_{2c})\in \CNE$.
\end{lemma}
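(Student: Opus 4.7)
The plan is to exhibit, for each $\eta>0$ and all sufficiently small $\eps>0$, a $(1-\eps)$-reliable randomized Han-Kobayashi scheme that achieves $(R_{1p}+R_{1c},R_{2p}+R_{2c})$ and is an $\eta$-NE of the $\eps$-game; passing to closure handles the fact that a self-saturated tuple may live on the boundary of $\R_{RHK}$. By a slight interior perturbation (rate loss absorbed into $\eta$) I can invoke Lemmas~\ref{lem:randomized} and~\ref{lem:decode_common} to produce $(s_1^*,s_2^*)$ with common block length $N$ that is $(1-\eps)$-reliable, achieves the desired rate pair, and has each receiver $i$ reliably decode user $j$'s common message $m_{jc}$ and common-random message $\omega_j$; in particular $R_{jc}+R_{jr}<n_{ij}$ at each receiver.

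It then suffices, by symmetry, to bound the pay-off of an arbitrary deviation $s_1'$ by user~1 against $s_2^*$; deviations that are not $(1-\eps)$-reliable have pay-off $0$ and can be ignored. Passing to a super-block of length $N'$ (a common multiple of $N$ and the block length of $s_1'$), Fano's inequality applied to user~1 exactly as in the proof of Lemma~\ref{lem:Ui} yields
\[
R(s_1') \;\le\; \tfrac{1}{N'}\bigl[H(\vy_1)-H(\vs_2)\bigr]+\delta,
\]
where $\vs_2=\vx_{2c}$ is the top $n_{12}$ levels of user~2's signal (the part visible at receiver~1) and $\delta\to 0$ as $\eps\to 0$. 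Combining with $H(\vy_1)\le N'\max(n_{11},n_{12})$ and the trivial bound $R(s_1')\le n_{11}$ arising from the $n_{11}$-level bottleneck at receiver~1 gives the two candidate inequalities
\[
R(s_1') \;\le\; \max(n_{11},n_{12}) - \tfrac{1}{N'}H(\vs_2)+\delta, \qquad R(s_1')\;\le\; n_{11}.
\]

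The main obstacle is the lower bound $H(\vs_2)/N' \ge R_{2c}+R_{2r}-\delta'$. Unlike in Lemma~\ref{lem:Ui}, I do not need to consider deviations by user~2; instead I use the structure of $s_2^*$ directly. The common sub-codebook consists of $2^{N(R_{2c}+R_{2r})}$ codewords drawn i.i.d.\ uniform on the $n_{12}$ common levels, and since $R_{2c}+R_{2r}<n_{12}$ a standard random-coding / derandomization argument produces a codebook in which the common codewords are essentially distinct, giving $H(\vx_{2c})\approx N(R_{2c}+R_{2r})$, which lifts to the super-block. Plugging in yields $R(s_1')\le \max(n_{11},n_{12})-(R_{2c}+R_{2r})+\delta+\delta'$. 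Finally, the self-saturation hypothesis together with the discussion preceding the lemma forces $(R_1,R_{2c}+R_{2r})$, with $R_1:=R_{1c}+R_{1p}$, onto the boundary of the two-user MAC region~\eqref{eq:MACi2} in such a way that $R_1$ cannot be increased; hence either $R_1+R_{2c}+R_{2r}=\max(n_{11},n_{12})$ or $R_1=n_{11}$. In either case matching against the two derived bounds gives $R(s_1')\le R_1+\delta+\delta'$, and choosing $\eps$ small enough so that $\delta+\delta'<\eta$ closes the argument for user~1; the symmetric argument at receiver~2 then establishes that $(s_1^*,s_2^*)$ is an $\eta$-NE and hence $(R_{1p}+R_{1c},R_{2p}+R_{2c})\in\CNE$.
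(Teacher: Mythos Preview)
Your proof is correct and follows essentially the same strategy as the paper: perturb into the interior of $\R_{RHK}$, invoke Lemmas~\ref{lem:randomized} and~\ref{lem:decode_common} to obtain a $(1-\eps)$-reliable scheme in which each receiver also decodes the other user's common and common-random messages, and then show that any deviation by user~1 runs into the two-user MAC constraints~\eqref{eq:MACi2}. The one technical difference is in how the sum-rate constraint is enforced. You bound $R(s_1')\le\max(n_{11},n_{12})-H(\vx_{2c})/N'+\delta$ and then lower-bound $H(\vx_{2c})$ directly from codebook structure; the paper instead writes $\tilde R_1\le I(\vx_1;\vy_1\,|\,\vx_{2c})/N+\delta$ and $R_{2c}+R_{2r}\le I(\vx_{2c};\vy_1)/N'+\delta'$ (the latter coming from decodability of user~2's common messages at receiver~1) and adds, obtaining $I(\vx_1,\vx_{2c};\vy_1)/N\le\max(n_{11},n_{12})$ by the chain rule. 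Both routes are valid. One small caution on your justification of $H(\vx_{2c})\ge N(R_{2c}+R_{2r})-N\delta'$: when $R_{2c}+R_{2r}>n_{12}/2$ the random common codewords are \emph{not} literally all distinct with high probability, so the ``essentially distinct'' phrasing is loose. A cleaner derivation uses what you already have from Lemma~\ref{lem:decode_common}: since $(m_{2c},\omega_{2r})$ is decodable at receiver~1 from $(\vy_1,\omega_1)$, and $\vx_{2c}$ is recoverable from $(\vy_1,\omega_1)$ after decoding user~1's own message, Fano gives $H(m_{2c},\omega_{2r}\,|\,\vx_{2c})\le N\delta'$ and hence $H(\vx_{2c})\ge N(R_{2c}+R_{2r})-N\delta'$.
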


\begin{IEEEproof}
Given a rate tuple  $\vR= (R_{1c},R_{1r},R_{1p},R_{2c},R_{2r},R_{2p})$
that is self-saturated at both receivers and with $R_{ic} + R_{ip} \geq L_i$, it
follows from Lemmas~\ref{lem:randomized} and \ref{lem:decode_common} 
that for any $\eta>0$,
and any $\epsilon>0$, there exists a 
randomized Han-Kobayashi scheme achieving rates
$(R_{1c}-\eta/6,R_{1r}-\eta/6,R_{1p}-\eta/6,R_{2c}-\eta/6,R_{2r}-\eta/6,R_{2p}-\eta/6)$
for which each receiver can decode both his own common and private
messages as well as the other users common and common random messages
with probability of error less than $\epsilon$.

Next we argue that for $\epsilon$ small enough such a pair of 
strategies must be a $\eta$-NE of a $\epsilon$-game.
First note that under these nominal strategies each
user $i$ will receive a pay-off of $R_{ic} + R_{ip} -\eta/3$. Assume
that these strategies are not 
an equilibrium, and without loss of generality suppose that user $1$ can
deviate and improve his performance by at least $\eta$. After
deviating, the rates for the MAC
region at user 1 in (\ref{eq:MACi2}) are given by
$\tilde{\vR} = (\tilde{R}_{1}, R_{2c} + R_{2r}-\eta/3)$, where
$\tilde{R}_1 \geq R_1 + 2\eta/3$. Since the rate tuple
$\vR$ is self-saturated, it follows that after this deviation,
the rates $\tilde{\vR}$
must violate either the first or second constraint in (\ref{eq:MACi2})
for $i=1$ by at least $\eta/3$.

Suppose that user $1$ deviates to a blocklength $N_1$ strategy. Then
using Fano's inequality (as in (\ref{eq:fano})) for the average bit
error probability and following the usual converse for a MAC channel,
it must be that
\begin{equation}\label{eq:tr1}
\begin{split}
\tilde{R}_1 \leq \frac{I(\vx_1;\vy_1|\vx_{2c})}{N_1} + \delta
\end{split}
\end{equation}
where $\delta$ goes to zero as the average bit error probability
$\epsilon$ does. In particular, choosing $\epsilon$ small enough so
that  $\delta$ is less than $\eta/3$, then (\ref{eq:tr1}) implies that
\[
\tilde{R}_1 < n_{11} + \eta/3.
\]
Hence, the second constraint in (\ref{eq:MACi2}) can not be violated.

Likewise, since in the nominal strategy for user $1$, he
was able to decode user $j$'s common and common random signals,
it follows that
\begin{equation}\label{eq:tr2}
R_{2c} + R_{2p} \leq \frac{I(\vx_{2c};\vy_1)}{N_i'} + \delta'
\end{equation}
where $N_i'$ denotes the block length used in the nominal strategy.
Combining (\ref{eq:tr1}) and (\ref{eq:tr2}) and
choosing $\epsilon$ small enough so that $\delta + \delta' <
\eta/3$, we have\footnote{Note, as discussed in
  Section~\ref{sec:formulation}, we need to replace (\ref{eq:tr1}) and (\ref{eq:tr2})
  with the corresponding expressions over super-blocks whose length is
  the least common multiple of $N_i$ and
  $N_i'$ so that that both equations are over the same block-length.}
\begin{equation}
\tilde{R_1} + R_{2c} + R_{2p} \leq \max(n_{ii},n_{ij}) + \eta/3
\end{equation}
which shows that the first constraint in (\ref{eq:MACi2}) can not
be violated.

Therefore, such a deviation cannot exist and the nominal
strategy must be a $\eta$-NE for small enough $\epsilon$.
Taking the limit as $\eta\rightarrow 0$,
it follows that the desired rates must lie in $\CNE$.
\end{IEEEproof}

To prove that $\CNE = \C \cap \B$, we will show for that any 
rate-point $(R_1,R_2) \in \C\cap\B$, there exists a 
feasible rate tuple in $\R_{RHK}$ with $R_{i} = R_{ic} +R_{ip}$ for $i=1,2$ and 
that is self-saturated at both receivers. The desired result then
follows directly from Lemma~\ref{lem:int}. As a first step toward
doing this, we define a class of non-randomized Han-Kobayashi rates
$(R_{1c},R_{1p},R_{2c},R_{2p})$ at which each transmitter is fully
utilizing its ``interference-free'' levels, i.e., the $L_i = (n_{ii}
-n_{ij})^+$ most significant levels at transmitter $i$. Depending on
the channel some of these levels may be common and some may be
private. Specifically, there are
\[
a_{i} = (n_{ii} - n_{ji} - n_{ij})^+
\]
private interference free levels at user $i$ and
\[
b_{i} = (n_{ii} - \max(n_{ii}-n_{ji},n_{ij}))^+
\]
common interference free levels at user $i$. An example of these is
shown in Fig.~\ref{fig:int_free}.
We say that $(R_{1c},R_{1p},R_{2c},R_{2p})$ {\it fully utilizes the
  interference free levels} for user $i$ if
\begin{align}
R_{ip} &\geq a_i \label{eq:ai}\\
R_{ic} &\geq b_i  \label{eq:bi}.
\end{align}

\begin{figure}
\begin{centering}
\psset{unit=.85mm,linewidth=.5pt,arrowlength=1.2,
arrowinset=0,labelsep=2pt}
\begin{center}
\begin{pspicture}(0,-1)(57,40)

\rput(0,0){
\psline[linestyle=dashed,dash=3pt 2pt](0,0)(21,0)
\psline[linestyle=dashed,dash=3pt 2pt](35,0)(56,0)
\psline(1,0)(1,21)(7,21)(7,0) \rput(4,-2){\scriptsize 1}
\psline(13,0)(13,14)(19,14)(19,0)\rput(16,-2){\scriptsize 2}

\psline(36,0)(36,7)(42,7)(42,0)\rput(39,-2){\scriptsize 1}
\psline(48,0)(48,28)(54,28)(54,0) \rput(51,-2){\scriptsize 2}

\uput[r](7,21){\scriptsize  $n_{11}$} 
\uput[r](19,14){\scriptsize  $n_{12}$}

\uput[l](36.5,7){\scriptsize  $n_{21}$}
\uput[r](54,28){\scriptsize  $n_{22}$}

\psline[linestyle=dashed,dash=1pt 1pt](1,7)(7,7)
\psline[linestyle=dashed,dash=1pt 1pt](1,14)(7,14)

\psline[linestyle=dashed,dash=1pt 1pt](13,7)(19,7)

\psline[linestyle=dashed,dash=1pt 1pt](48,7)(54,7)
\psline[linestyle=dashed,dash=1pt 1pt](48,14)(54,14)
\psline[linestyle=dashed,dash=1pt 1pt](48,21)(54,21)

\rput(10,36){$\text{Rx}_1$}
\rput(45,36){$\text{Rx}_2$}

\rput(4,17.5){\scriptsize $b$}

\rput(51,24.5){\scriptsize $b$}
\rput(51,17.5){\scriptsize $b$}
\rput(51,10.5){\scriptsize $a$}

}
\end{pspicture}
\end{center} 
\caption{This figure shows the interference free levels for each
  transmitter in the interference channel from
  Fig.~\ref{fig:ICrectangles2}. The levels for user $i$ are indicated
  at that user's receiver by either a ``a'' or a ``b''. The levels
  labeled with a ``a'' correspond to ``private levels,'' which are
  accounted for in (\ref{eq:ai}). The levels labeled with a ``b''
  correspond to common levels, which are accounted for in 
  (\ref{eq:bi}).}\label{fig:int_free}
\end{centering}
\end{figure}
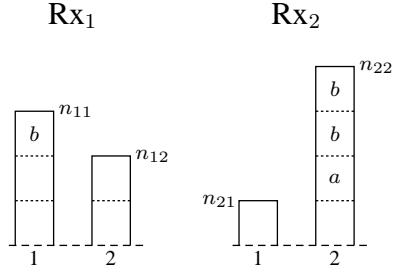

\begin{lemma}\label{lem:full}
If $(R_{1c},R_{1p},R_{2c},R_{2p})$ {\it fully utilizes the
  interference free levels} for each user $i$, then
there exists random common rates $R_{1r},R_{2r} \ge 0$ such that
$(R_{1c},R_{1r},R_{1p},R_{2c},R_{2r},R_{2p})$ is self-saturated at
both receivers.
\end{lemma}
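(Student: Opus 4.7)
The plan is to achieve self-saturation at each receiver by tightening the first inequality of \eqref{eq:MACi} through a suitable choice of the other user's random common rate. Because $R_{jr}$ enters the constraints \eqref{eq:MACi} at receiver $i$ but not those at receiver $j$ (receiver $j$ subtracts off its own common randomness), the choices of $R_{1r}$ and $R_{2r}$ decouple, and I can treat the two receivers independently. For each $i \in \{1,2\}$ I would set
\[
R_{jr} \;:=\; \max(n_{ii},n_{ij}) - (R_{ic}+R_{ip}) - R_{jc}.
\]

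Next I would verify that this choice is admissible. Non-negativity $R_{jr}\ge 0$ is automatic: by hypothesis $(R_{1c},R_{1p},R_{2c},R_{2p})$ is achievable by a non-randomized Han-Kobayashi scheme, so it already satisfies the first constraint of \eqref{eq:MACi} with $R_{jr}=0$. Among the four constraints at receiver $i$, the first is tight by construction, and the third and fourth depend only on $R_{ip}$ and $R_{ic}+R_{ip}$ and are inherited from the input. The delicate one is constraint~(2); substituting the chosen $R_{jr}$ and simplifying rewrites it as
\[
R_{ic} \;\ge\; \max(n_{ii},n_{ij}) - \max(n_{ij},(n_{ii}-n_{ji})^+).
\]
A short case analysis on whether $n_{ii}\gtrless n_{ij}$ and on whether $n_{ji}\gtrless L_i$ shows that the right-hand side collapses to exactly $b_i=(n_{ii}-\max(n_{ii}-n_{ji},n_{ij}))^+$, so the ``fully utilizes'' hypothesis $R_{ic}\ge b_i$ is precisely what preserves this constraint. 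Noting additionally that $R_{ic}+R_{ip}\ge a_i+b_i=L_i$, the chosen rate also gives $R_{jc}+R_{jr} = \max(n_{ii},n_{ij}) - (R_{ic}+R_{ip}) \le n_{ij}$, so user $j$'s combined common rate fits within its $n_{ij}$ common levels.

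Self-saturation at receiver $i$ is then immediate: with $R_{jc}$ and $R_{jr}$ held fixed, the now-tight first constraint caps $R_{ic}'+R_{ip}'$ at exactly $R_{ic}+R_{ip}$, so no alternative $(R_{ic}',R_{ir}',R_{ip}')$ yielding a larger sum can lie in $\mathcal R_i^m$. The main obstacle I anticipate is the case-by-case check that $\max(n_{ii},n_{ij})-\max(n_{ij},(n_{ii}-n_{ji})^+)$ really equals $b_i$; once that identity is verified, the rest of the argument is essentially bookkeeping and the symmetric choice of $R_{ir}$ handles self-saturation at receiver $j$.
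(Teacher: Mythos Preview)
Your proposal is correct and takes essentially the same approach as the paper. The paper phrases it as a contradiction argument (increase $R_{jr}$ until the first constraint is tight; the only possible obstruction is the second constraint, but if that is tight then $R_{ic}\ge b_i$ forces the first to be tight too), whereas you give the explicit formula for $R_{jr}$ and then verify the second constraint directly---but the core computation, that preserving constraint~(2) reduces exactly to $R_{ic}\ge b_i$, is identical in both.
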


\begin{IEEEproof}
The intuition is as follows. If $R_{ip} \ge a_i$ and $R_{ic} \ge
b_i$, then all the interference-free levels are saturated at
receiver $i$ (i.e., at maximum entropy). The remaining $n_{ij}$ levels
are all reachable by the common signal from user $j$. By putting
sufficient number of random bits on that common signal, these
$n_{ij}$ levels can be fully saturated as well.

More rigorously, we will show that one can always increase $R_{2r}$
such that the overall sum rate constraint (the first constraint in
(\ref{eq:MACi})) is tight, so that receiver $1$ is
saturated. Suppose no such choice of $R_{2r}$ exists. Then, it must
be that $R_{2r}$ cannot be further increased because the
second constraint in (\ref{eq:MACi}) for receiver $i$
is tight. However, if this constraint is tight, 
then since $R_{1c} \geq b_i$,
it can be seen that the first constraint at receiver $i$ 
must also be tight.
\end{IEEEproof}

As an example of the construction used in the proof of
Lemma~\ref{lem:full} consider a symmetric channel with $n_{11} =
n_{22} = 3$ and $n_{12} = n_{21} = 2$ as shown in Fig.~\ref{fig:lem6_1}. 
Each user has $1$
interference free level with $a_i = 0$ and $b_i = 1$. Thus the
non-randomized Han-Kobayashi rates given by $R_{1c} = R_{2c} = 1$
and $R_{1p} = R_{2p} = 0$ fully utilize the interference free levels
at each receiver, but is not self-saturating since, each transmitter
could increase $R_{ip}$ by one. However, if each transmitter
sets $R_{ir}=1$, the resulting randomized rates will be
self-saturated as shown in Fig.~\ref{fig:lem6_2}

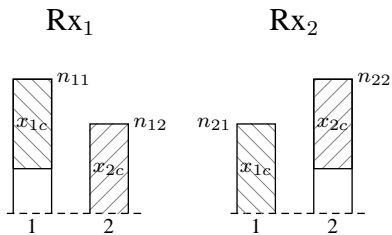
\begin{figure}
\begin{centering}
\psset{unit=.85mm,linewidth=.5pt,arrowlength=1.2,
arrowinset=0,labelsep=2pt}
\begin{center}
\begin{pspicture}(0,-1)(57,40)

\rput(0,0){
\psline[linestyle=dashed,dash=3pt 2pt](0,0)(21,0)
\psline[linestyle=dashed,dash=3pt 2pt](35,0)(56,0)
\psline(1,0)(1,21)(7,21)(7,0) \rput(4,-2){\scriptsize 1}
\psline[fillstyle=hlines,hatchcolor=gray,hatchwidth=.3pt](13,0)(13,14)(19,14)(19,0)
\rput(16,-2){\scriptsize 2}
\psline[fillstyle=vlines,hatchcolor=gray,hatchwidth=.3pt](36,0)(36,14)(42,14)(42,0)
\rput(39,-2){\scriptsize 1}
\psline(48,0)(48,21)(54,21)(54,0) \rput(51,-2){\scriptsize 2}

\uput[r](7,21){\scriptsize  $n_{11}$} 
\uput[r](19,14){\scriptsize  $n_{12}$}

\uput[l](36,14){\scriptsize  $n_{21}$}
\uput[r](54,21){\scriptsize  $n_{22}$}


\psline[fillstyle=vlines,hatchcolor=gray,hatchwidth=.2pt](1,7)(1,21)(7,21)(7,7)
\psline[fillstyle=hlines,hatchcolor=gray,hatchwidth=.2pt](48,7)(48,21)(54,21)(54,7)

\psline(1,7)(7,7)
\psline(48,7)(54,7)



\rput(10,30){$\text{Rx}_1$}
\rput(45,30){$\text{Rx}_2$}

\rput(4,14){\scriptsize $x_{1c}$}
\rput(16,7){\scriptsize $x_{2c}$}
\rput(39,7){\scriptsize $x_{1c}$}
\rput(51,14){\scriptsize $x_{2c}$}

}
\end{pspicture}
\end{center} 
\caption{An example of a non-randomized Han-Kobayashi scheme for a 
symmetric channel with $n_{ii} =3$ and $n_{ij} = 2$. Here each user is
using the rate split $R_{ic} =1$ and $R_{ip} =0$, which fully utilizes
the 1 interference free-level at each transmitter. This rate split
is not self-saturating at either transmitter.}\label{fig:lem6_1}
\end{centering}
\end{figure}

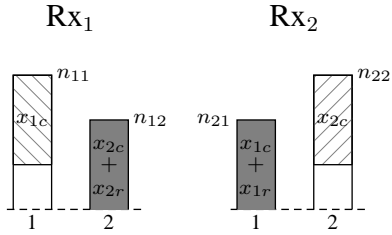
\begin{figure}
\begin{centering}
\psset{unit=.85mm,linewidth=.5pt,arrowlength=1.2,
arrowinset=0,labelsep=2pt}
\begin{center}
\begin{pspicture}(0,-1)(57,40)

\rput(0,0){
\psline[linestyle=dashed,dash=3pt 2pt](0,0)(21,0)
\psline[linestyle=dashed,dash=3pt 2pt](35,0)(56,0)
\psline(1,0)(1,21)(7,21)(7,0) \rput(4,-2){\scriptsize 1}
\psline[fillstyle=solid,fillcolor=gray](13,0)(13,14)(19,14)(19,0)
\rput(16,-2){\scriptsize 2}
\psline[fillstyle=solid,fillcolor=gray,hatchwidth=.3pt](36,0)(36,14)(42,14)(42,0)
\rput(39,-2){\scriptsize 1}
\psline(48,0)(48,21)(54,21)(54,0) \rput(51,-2){\scriptsize 2}

\uput[r](7,21){\scriptsize  $n_{11}$} 
\uput[r](19,14){\scriptsize  $n_{12}$}

\uput[l](36,14){\scriptsize  $n_{21}$}
\uput[r](54,21){\scriptsize  $n_{22}$}


\psline[fillstyle=vlines,hatchcolor=gray,hatchwidth=.3pt](1,7)(1,21)(7,21)(7,7)
\psline[fillstyle=hlines,hatchcolor=gray,hatchwidth= .3pt](48,7)(48,21)(54,21)(54,7)

\psline(1,7)(7,7)
\psline(48,7)(54,7)



\rput(10,30){$\text{Rx}_1$}
\rput(45,30){$\text{Rx}_2$}

\rput(4,14){\scriptsize $x_{1c}$}
\rput(16,10){\scriptsize $x_{2c}$}
\rput(16,7){\scriptsize $+$}
\rput(16,3){\scriptsize $x_{2r}$}
\rput(39,10){\scriptsize $x_{1c}$}
\rput(39,7){\scriptsize $+$}
\rput(39,3){\scriptsize $x_{1r}$}
\rput(51,14){\scriptsize $x_{2c}$}

}
\end{pspicture}
\end{center} 
\caption{A randomized Han-Kobayashi scheme which achieves the same
  rates as the non-randomized scheme in 
  Fig.~\ref{fig:lem6_2} but is self-saturated. Here we do not show
  $R_{ir}$ at receiver $i$ since this signal can be removed from the
  assumed common randomness.}\label{fig:lem6_2}
\end{centering}
\end{figure}

It follows from \cite{BT08} that for any rate pair in $\C\cap \B$, 
there exists a non-randomized Han-Kobayashi rate-split that satisfies 
Lemma~\ref{lem:randomized}. If
these Han-Kobayashi rates fully utilize the interference-free 
levels then we
are done. Unfortunately, not all non-randomized Han-Kobayashi rates
fully utilize the interference-free levels. For example consider the
symmetric channel
in the previous paragraph. An alternative non-randomized
Han-Kobayashi rate-split is given by $R_{1c}=R_{2c} = 0$ and $R_{1p}=
R_{2p} = 1$ (see Fig.~\ref{fig:lem7}). These rates do not fully utilize the interference-free
levels and cannot be made into an equilibrium by simply increasing
the users' common random rates. Though this set of rates do not fully
utilize the interference-free levels, as the previous example
illustrates, there is another set of non-randomized rates that do. The
next lemma generalizes this example.

\begin{figure}
\begin{centering}
\psset{unit=.85mm,linewidth=.5pt,arrowlength=1.2,
arrowinset=0,labelsep=2pt}
\begin{center}
\begin{pspicture}(0,-1)(57,40)

\rput(0,0){
\psline[linestyle=dashed,dash=3pt 2pt](0,0)(21,0)
\psline[linestyle=dashed,dash=3pt 2pt](35,0)(56,0)
\psline(1,0)(1,21)(7,21)(7,0) \rput(4,-2){\scriptsize 1}
\psline(13,0)(13,14)(19,14)(19,0)
\rput(16,-2){\scriptsize 2}
\psline(36,0)(36,14)(42,14)(42,0)
\rput(39,-2){\scriptsize 1}
\psline(48,0)(48,21)(54,21)(54,0) \rput(51,-2){\scriptsize 2}

\uput[r](7,21){\scriptsize  $n_{11}$} 
\uput[r](19,14){\scriptsize  $n_{12}$}

\uput[l](36,14){\scriptsize  $n_{21}$}
\uput[r](54,21){\scriptsize  $n_{22}$}


\psline[fillstyle=solid,fillcolor=gray](1,0)(1,7)(7,7)(7,0)
\psline[fillstyle=solid,fillcolor=gray](48,0)(48,7)(54,7)(54,0)

\psline(1,7)(7,7)
\psline(48,7)(54,7)



\rput(10,30){$\text{Rx}_1$}
\rput(45,30){$\text{Rx}_2$}

\rput(4,3){\scriptsize $x_{1p}$}
\rput(51,3){\scriptsize $x_{2p}$}

}
\end{pspicture}
\end{center} 
\caption{An example of an alternative non-randomized Han-Kobayashi
  rate-split that achieves the same rates as the scheme in
  Fig.~\ref{fig:lem6_1} but does not fully utilize the 
interference-free levels.}\label{fig:lem7}
\end{centering}
\end{figure}
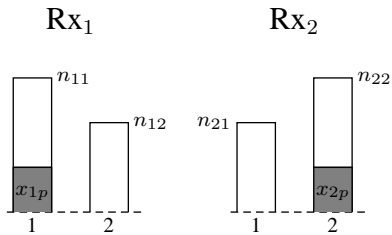

\begin{lemma}\label{lem:fullutil}
Given any point $(R_1,R_2) \in \C \cap \B$, then there exists a
non-randomized Han-Kobayashi rate split that fully utilizes the
interference free levels at each transmitter.
\end{lemma}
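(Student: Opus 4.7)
For each rate point $(R_1,R_2) \in \C \cap \B$, the plan is to exhibit the explicit non-randomized Han-Kobayashi rate split
\[
R_{ip} = \min\bigl((n_{ii}-n_{ji})^+,\, R_i - b_i\bigr), \qquad R_{ic} = R_i - R_{ip}, \qquad i=1,2.
\]
The intuition is to push as many bits as possible into the private message (which creates no interference at the other receiver) while keeping at least $b_i$ bits on the common message to cover the common interference-free levels. Minimizing $R_{ic}$ in this way also maximally relaxes the MAC constraints at receiver $j$, which is where the upper bound $R_j \leq U_j$ from $\B$ will be crucial.

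The first step is to check the marginal bounds $R_{ip} \in [a_i, (n_{ii}-n_{ji})^+]$ and $R_{ic} \in [b_i, n_{ji}]$. The lower bounds are immediate from $R_i \geq L_i = a_i + b_i$; the upper bound $R_{ic} \leq n_{ji}$ reduces to the inequality $U_i \leq (n_{ii}-n_{ji})^+ + n_{ji}$, which can be read off directly from the formula~(\ref{eq:Ui}) for $U_i$.

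The second step is to verify the four modified MAC inequalities~(\ref{eq:MACi}) at both receivers. The individual-rate constraints $R_{ip} \leq (n_{ii}-n_{ji})^+$ and $R_i \leq n_{ii}$ are automatic. For the MAC sum-rate bound $R_i + R_{jc} \leq \max(n_{ii}, n_{ij})$ and the partial-sum bound $R_{ip} + R_{jc} \leq \max(n_{ij}, (n_{ii}-n_{ji})^+)$, I would split into sub-cases based on whether $R_{jc}$ (and symmetrically $R_{ip}$) takes its floor value $b_j$ or the rate-dependent value $R_j - (n_{jj}-n_{ij})^+$. When $R_{jc} = b_j$, both bounds reduce to a constraint on $R_i$ that follows from $R_i \leq U_i$ together with the identity $U_i = n_{ii} - b_j$ that holds in the weak-interference regime $n_{ij} \leq n_{ii}$. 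When $R_{jc} > b_j$, the bounds rearrange into a bound on $R_i + R_j$ implied by the capacity-region sum-rate constraint~(\ref{eq:sum1}).

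The main obstacle is the partial-sum bound in the sub-case where $R_{ip}$ also saturates at $(n_{ii}-n_{ji})^+$: here the tighter sum-rate bound~(\ref{eq:sum3}) becomes necessary, combined with an algebraic identity relating $(n_{ii}-n_{ji})^+ + b_j$ and $n_{ij}$ that characterizes exactly when this sub-case can be non-empty (the condition for the sub-case to arise turns out to coincide with the condition under which the MAC inequality holds, so it is tight). The strong-interference regimes $n_{ij} > n_{ii}$ or $n_{ji} > n_{jj}$ cause $L_i, a_i, b_i$ to collapse to zero and must be treated separately, though they follow an analogous and simpler pattern.
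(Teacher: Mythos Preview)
Your approach is correct but genuinely different from the paper's. The paper does \emph{not} write down an explicit split; instead it invokes \cite{BT08} to pick an arbitrary feasible non-randomized Han--Kobayashi split for the given $(R_1,R_2)\in\C$, and then \emph{transforms} it in two stages: first exchange common rate for private rate until $R_{ip}\ge a_i$, then exchange private for common until $R_{ic}\ge b_i$. The only subtlety is that the second stage may hit a constraint at the \emph{other} receiver; the paper shows that when this happens the other user's interference-free levels are underutilized by at least the same deficit, and a simultaneous swap at both users finishes the job.

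Your construction fixes $R_{ip}=\min\bigl((n_{ii}-n_{ji})^+,\,R_i-b_i\bigr)$ and verifies feasibility directly against the capacity inequalities and the box $\B$. This is more explicit and avoids citing the achievability result from \cite{BT08} as a black box. Two remarks on your sketch: your claim that the MAC sum-rate bound in the case $R_{jc}=R_j-(n_{jj}-n_{ij})^+$ matches the capacity sum-rate constraint is correct (and in fact \emph{equals} it, via the identity $(x-z)^++\max(y,z)=(y-z)^++\max(x,z)$), so that step goes through cleanly. The partial-sum constraint when both $R_{ip}$ and $R_{jc}$ saturate is indeed the delicate case; your observation that the sub-case becomes vacuous exactly when the needed inequality would fail (because $p_i+b_i=n_{ii}$ forces $R_i>n_{ii}$) is the right mechanism, and~(\ref{eq:sum3}) handles the remaining range. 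The full case analysis is more intricate than the paper's transformation argument, and the paper's approach has the further advantage that it ports almost verbatim to the Gaussian setting (Lemma~\ref{lem:fullutilG}), where an explicit closed-form split would be harder to verify.
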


\begin{IEEEproof}
To prove this lemma we begin with an arbitrary non-randomized
Han-Kobayashi rate-split that satisfies Lemma~\ref{lem:randomized}
and show that this can always be transformed
into one that fully utilizes the interference free levels at each
transmitter. A key point here is that
$a_{i} + b_{i} = L_i$, so that for any point in $\B$ there will
always be sufficient amount of ``rate'' available to meet the
constraints in (\ref{eq:ai}) and (\ref{eq:bi}).

First we show that if $R_{ip} < a_{i}$ for either user $i$,
then we can always increase $R_{ip}$ and decrease $R_{ic}$ by the 
same amount until
$R_{ip} = a_{i}$. 
The only way such a transformation could not be done is
if the second constraint in (\ref{eq:MACi}) at receiver $i$
prevented it.  But by combining (\ref{eq:fifth}) at receiver $j$ and 
the second constraint at receiver $i$, it can be seen 
that this will
never happen for $R_{ip} <a_{i}$. 

Thus we can assume that $R_{ip}\geq a_{i}$. 
Given this if a rate-pair does not fully utilize the
interference free levels at transmitter $i$, it must be that
$R_{ic} < b_i$. Suppose that this is true for receiver $1$ and 
consider increasing
$R_{1c}$ and decreasing $R_{1p}$ by the same amount until
$R_{1c} = b_{1}$. Note that since $a_{1} +
b_{1} = L_1$ and $R_{ic} + R_{ip} \geq L_1$, 
when we decrease $R_{1p}$ in this way it will never
cause it to become less than $a_{1}$. 
Changing $R_{1p}$ and $R_{1c}$ in this manner will not violate 
any of the constraints in (\ref{eq:MACi}) at receiver $1$, 
since every constraint involving $R_{1c}$ also involves $R_{1p}$. 
If this can be done without
violating the first or second
constraints at receiver $2$ then we are done. Otherwise, it must
 be that at least one of these constraints becomes 
 tight when $R_{1c}$ reaches the 
value $R_{1c}^* =b_{1} -\Delta$, for
 some $\Delta >0$. Note that
\begin{equation}\label{eq:const1}
\begin{split}
R_{2p} + R_{2c} + R_{1c}^* &= R_2 + R_{1c}^*\\
& \leq U_2 + b_1 -\Delta \\ 
&= \max(n_{22},n_{21}) -\Delta
\end{split}
\end{equation}
and so the first constraint at receiver 2 can not be tight.
This implies that the
second constraint at receiver 2 must be tight, i.e.,
\begin{equation}\label{eq:r2p}
R_{2p} + R_{1c}^* =
 \max(n_{21},n_{22}-n_{12}).
\end{equation}
Combining this with (\ref{eq:const1}) we have
\begin{align*}
R_{2c} &\leq \max(n_{22},n_{21}) - \max(n_{21},n_{22}-n_{12}) - \Delta\\
& = b_2 - \Delta.\\
\end{align*}
It then follows that $R_{2p} \geq a_{2} + \Delta$. 
In other words user 2's interference free levels must be
 ``underutilized'' by at least as much as user 1's.

Now consider increasing $R_{1c}$ from $R_{1c}^*$ by $\Delta$ while
simultaneously reducing $R_{1p}$ and $R_{2p}$ each by $\Delta$ and
also increasing $R_{2c}$ by $\Delta$. The above calculations show that
no constraint will be violated if we only changed $R_{1c}$, $R_{1p}$
and $R_{2p}$ in this way. Likewise, by applying the same
argument to $R_{2c}$, $R_{2p}$ and $R_{1p}$, changing these values 
will not violate any constraints. The only possible violation
could occur in the first constraints at either MAC, which involves
both $R_{1c}$ and $R_{2c}$. However this constraint also involves one
of the users' private rates, and so cannot be violated by the same
argument as in (\ref{eq:const1}). After this transformation, 
the resulting rate-split will fully utilize the interference 
free levels at both receivers.
\end{IEEEproof}

Combining Lemmas \ref{lem:int}, \ref{lem:full} and \ref{lem:fullutil},
we have proven Theorem~\ref{prop:cne}. 

We also note that by direct
calculation it can be shown that $\mathcal
C_{NE}$ always contains at least one efficient point, i.e., one point that is sum-rate optimal.   Indeed it can be shown that for a symmetric channel,
for $\alpha \leq 2/3$, the only efficient point in $\CNE$
is the symmetric sum-rate optimal point, while for $\alpha \geq 2/3$
all sum-rate optimal points are in $\CNE$.

\section{The Gaussian IC}

In the previous section, we completely characterized the Nash
equilibrium region for the two-user linear-deterministic IC.
In this section, we show that an analogous result holds for
the Gaussian channel model within a one bit approximation.

\subsection{Gaussian Channel Model}
Here, our goal is to characterize
rates in $\CNE$ for two-user Gaussian interference channels
represented by (see Fig.~\ref{fig:model})
\begin{equation}
\begin{split}\label{eq:gauss}
y_1&= h_{11}x_1 + h_{12}x_2 + z_1\\
 y_2&= h_{21}x_1 + h_{22}x_2 + z_2
\end{split}
\end{equation}
where for $i = 1,2$, $z_i \sim \mathcal{CN}(0,1)$ and the input
$x_i\in \mathbb C$ is subject to the power
constraint $\text{E}[|x_i|^2] \leq P$.
 Following~\cite{ETW07}, for $i=1,2$, we parameterize this channel by the
signal-to-noise ratios $\snr_i =
P|h_{ii}|^2$ and the interference-to-noise ratios $\inr_{ij} =
P|h_{ij}|^2$. 

The characterization
of $\CNE$ in the linear deterministic case relied on knowing
the exact capacity region $\C$ for the deterministic IC and 
that any point in this region can be exactly achieved by a
non-randomized Han-Kobayashi schemes. For the Gaussian IC, 
$\C$ is only known in the case of
very weak~\cite{SKC07,AV08,MoK08} or very strong
interference~\cite{Car75,HK81}. Otherwise, $\C$ is not known exactly 
but in \cite{ETW07} it is characterized to ``within one bit'' for all
parameter ranges. Furthermore, \cite{ETW07} shows that in a 
general Gaussian IC, we can achieve any point within one bit 
by a non-randomized Han-Kobayashi scheme. These one-bit gaps will 
effect how accurately
we can characterize $\CNE$ in the Gaussian case. 
Namely, in general we will also be able to characterize this region
only to within a one bit gap (though for particular channels this gap
may be smaller).

\subsection{Main Results}

For the Gaussian IC our main result is to show an analogue to 
Theorem~\ref{prop:cne} that characterizes 
$\CNE$ to within one bit. In this case we will give both an inner
bound and outer bound on $\CNE$. Both of these bounds will be given in
terms of a capacity region and a ``box'' as in the deterministic case. 
The true capacity region will be used for the outer bound, while an 
achievable ``Han-Kobayashi'' region, $\C_{HK}$, will be used for the
inner bound. Here, $\C_{HK}$ corresponds to the set of rates 
that are achievable using the specific class of Han-Kobayashi 
schemes in \cite{ETW07} (this will be defined more precisely in 
Sect.~\ref{sec:achievG}). This region is within 1-bit of the 
capacity region $\C$, i.e.~if $(R_1,R_2) \in \C$, then 
$((R_1-1)^+,(R_2-1)^+ \in \C_{HK}$. 

The box $\B$ used for the outer bound 
is given by 
\[
{\mathcal B} = \{(R_1,R_2): L_i \leq R_i \leq U_i, 
\forall i = 1,2\},
\]
where for each user $i=1,2$
\[
L_i := \log\left(1+\frac{\snr_i}{1+\inr_i}\right),
\] 
and 
\begin{equation}\label{eq:UiG}
\begin{split}
U_{i} = \min & \left \{ \log (1 + \snr_i + \inr_{ij}) - \log
\left(1+\frac{\left[\snr_j - \max(\inr_{ji},\snr_j/\inr_{ij})\right
]^+}{1 + \inr_{ji} +
\max(\inr_{ji},\snr_j/\inr_{ij})} \right), \right . \\
& \left . \log(1+\snr_i)\right \}.\\
\end{split}
\end{equation}
While the inner bound is given in terms of the ``box,''
\[
\B^{-} = \{(R_1,R_2): L_i \leq R_i \leq \max(U_i-1,L_i), 
\forall i = 1,2\},
 \]
which differs from $\B$ by at most one bit.

We next state the analogous result to Theorem~\ref{prop:cne}.

\begin{theorem}\label{prop:cneG}
$\C_{HK}\cap B^{-} \subseteq \CNE \subseteq  \C \cap \B$. 
Moreover, for a Gaussian IC with strong interference, $\CNE = 
\C_{HK}\cap B$.
\end{theorem}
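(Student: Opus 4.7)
My strategy is to parallel the proof of Theorem~\ref{prop:cne} as closely as possible, substituting Gaussian inputs and differential entropies for binary inputs and discrete entropies, and leaning on the approximate capacity characterisation of~\cite{ETW07}. I would first establish the outer bound $\CNE \subseteq \C \cap \B$, then the inner bound $\C_{HK}\cap \B^{-} \subseteq \CNE$, and finally sharpen both to $\CNE = \C_{HK}\cap\B$ in the strong-interference regime.

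\emph{Outer bound.} The inclusion $\CNE \subseteq \C$ is immediate. For $R_i \ge L_i$, note that user~$i$ can always deviate to i.i.d.\ Gaussian signalling at power $P$ with a decoder that treats user~$j$'s signal as additive Gaussian noise; this achieves any rate below $L_i$ reliably, independent of $s_j^*$, so the $\eta$-NE property forces $R_i \ge L_i - \eta$. For the upper bound $R_i \le U_i$ I would follow Lemma~\ref{lem:Ui} step by step. Fano's inequality at receiver~$i$ yields $R_i \le \frac{1}{N}\bigl[h(\vy_i) - h(\vs_j|\omega_j)\bigr] + \delta$, where $\vs_j$ is the component of user~$j$'s signal visible at receiver~$i$. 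To lower-bound $h(\vs_j|\omega_j)/N$ I would use a deviation argument on user~$j$: if this differential entropy per symbol falls appreciably below $\log\bigl(1+\max(\inr_{ji},\snr_j/\inr_{ij})\bigr)$, then user~$j$ can strictly improve by switching to a superposition of a Gaussian ``public'' codeword at the corresponding power and a Gaussian ``private'' codeword shaped to lie below receiver~$i$'s noise floor, contradicting the $\eta$-NE property. Plugging the resulting lower bound back into the Fano bound and letting $\eta, \delta \to 0$ produces the expression for $U_i$ in~(\ref{eq:UiG}).

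\emph{Inner bound.} I would lift the randomized Han-Kobayashi machinery of Section~III to the specific Gaussian scheme of~\cite{ETW07}, in which each user's private power is set to $1/\inr_{ji}$ so that private signals sit at the noise floor of the opposite receiver, and each common codeword additionally carries common-randomness bits drawn from the source shared with its own receiver. The analogues of Lemmas~\ref{lem:randomized} and~\ref{lem:decode_common} characterise the achievable rates as the intersection of two Gaussian modified-MAC regions and show that whenever $R_{ic}+R_{ip}\ge L_i$ each user can also decode the other's common-plus-random message. Following the template of Lemma~\ref{lem:int}, self-saturation is defined through the reduced two-user Gaussian MAC regions at each receiver, and the MAC converse combined with Fano's inequality rules out any profitable deviation. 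The construction of a self-saturated tuple for each rate pair in $\C_{HK}\cap\B^{-}$ mirrors Lemmas~\ref{lem:full} and~\ref{lem:fullutil}, with ``interference-free power'' replacing ``interference-free levels''; the one-bit slack in $\B^{-}$ versus $\B$ absorbs the known one-bit gap between $\C_{HK}$ and $\C$ inherent to the Gaussian Han-Kobayashi scheme.

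\emph{Strong interference and main obstacle.} When $\inr_{ij}\ge\snr_i$ for $i=1,2$, the capacity region is known exactly and is attained by Han-Kobayashi schemes with no private messages~\cite{Car75,HK81}, so $\C_{HK}=\C$. Because there are then no private signals to hide below the noise, the modified-MAC regions coincide with the true two-user MAC regions, the randomized scheme saturates the bounds without slack, $\B^{-}$ may be replaced by $\B$, and we obtain $\CNE=\C_{HK}\cap\B$ exactly. I expect the main obstacle to be the quantitative deviation argument underlying the outer bound on $U_i$: in the deterministic setting a shortfall in $H(\vs_j)$ immediately exposed an unused level that user~$j$ could occupy, whereas in the Gaussian setting one needs a careful continuity estimate relating a differential-entropy deficit of $\vs_j$ to a strictly positive rate gain achievable through a Gaussian superposition deviation, and it is precisely this passage that forces the one-bit approximation in the general (non-strong) regime.
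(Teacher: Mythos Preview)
Your outline tracks the paper's architecture closely, but there is one genuine missing ingredient in the inner bound and one misattribution of where the one-bit slack enters.

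\emph{Missing: private random messages.} In the deterministic channel, user $j$'s private signal is literally absent at receiver $i$, so after a deviation by user $i$ the MAC converse at receiver $i$ involves only $\vx_{jc}$. In the Gaussian channel the private signal $\vx_{jp}$ arrives at receiver~$i$ with power $\inr_{ij}^p\le 1$, and the modified-MAC constraints~(\ref{eq:MACiG}) are written \emph{as if} $\vx_{jp}$ were i.i.d.\ Gaussian noise of that power. For the analogue of Lemma~\ref{lem:int} to go through you need, after user~$i$ deviates,
\[
\frac{1}{N}\,h\bigl(h_{ij}\vx_{jp}+\vz_i\bigr)\;\ge\;\log\bigl(\pi e(1+\inr_{ij}^p)\bigr)-\eta/6,
\]
or else the Fano/MAC bound on $\tilde R_i$ exceeds the second constraint in the reduced MAC and the self-saturation argument collapses. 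When $R_{jp}$ is small (which does occur at points of $\C_{HK}\cap\B^-$), a random Gaussian private codebook does \emph{not} guarantee this. The paper fixes this by introducing, in addition to the common random message you mention, a \emph{private random} message $\omega_{js}$ of rate $R_{js}=(\log(1+\inr_{ij}^p)-R_{jp}-\delta/2)^+$, encoded via a broadcast-style construction inside the private codebook (Lemma~\ref{lem:priv_random}); this forces the required differential-entropy lower bound without altering any constraint in $\R_{RHK}$. Your scheme, with only common random bits, would fail at low private rates.

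\emph{Where the one-bit slack actually enters.} The outer bound $\CNE\subseteq\C\cap\B$ carries no constant loss: the Gaussian-perturbation deviation (set $\vu_2=\vx_2+\vv_2$ with $\vv_2\sim\CN(0,\sigma_v^2 I)$, $\sigma_v^2=\max(\inr_{21}/\snr_2,\,1/\inr_{12})$, and relate $I(\vx_2;\tvy_1)$ to $I(\vu_2;\vy_2)$) yields $U_i$ exactly, so the passage you flagged as ``the main obstacle'' does not cost a bit. The backoff from $\B$ to $\B^-$ is consumed entirely inside the Gaussian analogue of Lemma~\ref{lem:fullutil}: when arguing that the first constraint at receiver $j$ cannot be tight, one must pass from $\log(1+\snr_j+\inr_{ji})$ to $\log\bigl(1+\tfrac{\snr_j+\inr_{ji}-\inr_{ij}^p}{1+\inr_{ij}^p}\bigr)$, which loses up to one bit because $\inr_{ij}^p\le 1$. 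This is a separate loss from the $\C$ versus $\C_{HK}$ gap and has nothing to do with the outer-bound deviation argument. In the strong-interference case both issues disappear simultaneously because $P_{ip}=0$: there is no private signal to randomize and no $\inr_{ij}^p$ term in the denominators, which is why $\B^-$ can be replaced by $\B$ there.
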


In certain cases, when we know additional
properties of the capacity region we can strengthen these results. 
For example for very weak interference, from the results in
\cite{SKC07,AV08,MoK08} it is known that the maximum sum-rate in $\C$ is 
achieved by simply treating interference as noise, which is also in
$\C_{HK}$.  This corresponds
exactly to the lower-left corner of $\B$ and $\B^{-}$. Hence,
Theorem~\ref{prop:cneG} implies that $\CNE$ contains the
single point $(L_1,L_2)$ and thus in this case $\CNE$ 
is characterized exactly.

We also note that the bounds in Theorem~\ref{prop:cneG} can be 
shown to be 
within a constant gap of the bounds given in Theorem~\ref{prop:cne}
for a related deterministic IC, which 
is obtained by the mapping $n_{ii} = \lfloor\log(\snr_i)\rfloor$ and 
$n_{ij} = \lfloor \log(\inr_{ij})\rfloor$.

In the next section we will give a 
proof of Theorem~\ref{prop:cneG} that is based on 
generalizing each of the steps we used in the deterministic case.

\subsection{Proofs}

\subsubsection{Non-equilibrium points}
We begin by showing that certain
rate-pairs can not be in $\CNE$.

\begin{lemma}\label{lem:L1}
If $(R_1,R_2) \in \CNE$, then
$R_i \geq L_i := \log(1+\frac{\snr_i}{1+\inr_i})$ for $i=1,2$.
\end{lemma}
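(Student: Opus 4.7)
The plan is to mirror the proof of Lemma~\ref{lem:Li} in the Gaussian setting: show that user~$i$ can always guarantee a pay-off arbitrarily close to $L_i$ by deviating to a Gaussian random-codebook strategy that treats user~$j$'s signal as additive noise. The $\eta$-NE property then forces $R_i \geq L_i - \eta - \delta$ for every $\eta, \delta > 0$, so $R_i \geq L_i$.

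Concretely, fix an $\eta$-NE pair $(s_1^*, s_2^*)$ achieving $(R_1,R_2)$, and consider the following deviation $s_i'$ for user~$i$. Choose a block length $N$ which is a large multiple of the block length used in $s_j^*$, and generate a codebook of rate $L_i - \delta$ whose codewords have i.i.d.\ $\CN(0,P)$ entries. Use the common randomness shared with receiver~$i$ to apply (and later invert) a uniformly random permutation of the codeword symbols. Receiver~$i$ performs minimum-distance (mismatched ML) decoding as though the interference-plus-noise $h_{ij}x_j + z_i$ were i.i.d.\ $\CN(0,1+\inr_{ij})$.

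The core technical step is to show that, for $N$ large, this $s_i'$ drives the average bit-error probability below $\eps$ against every $s_j^*$ satisfying the power constraint $\mathrm{E}[|x_j|^2] \leq P$. This is a standard worst-case-noise / mismatched-decoding argument: among all independent additive noise sequences with per-symbol second moment bounded by $1+\inr_{ij}$, complex Gaussian noise minimizes the generalized mutual information achievable with i.i.d.\ Gaussian inputs under nearest-neighbor decoding, which gives exactly $\log(1+\snr_i/(1+\inr_{ij})) = L_i$. Since $s_j^*$ is chosen independently of user~$i$'s message, the required second-moment bound follows from user~$j$'s power constraint; the random interleaver ensures the residual interference is stationary from the perspective of user~$i$'s decoder, so the single-letter generalized mutual information bound applies over the entire super-block.

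With $s_i'$ reliably achieving rate $L_i - \delta$, its pay-off in the $\eps$-game is $L_i - \delta$, and the $\eta$-NE condition gives $R_i + \eta \geq L_i - \delta$. Letting $\eta \downarrow 0$ and $\delta \downarrow 0$ yields $R_i \geq L_i$. The hard part is the third step, since user~$j$'s transmitted sequence may have arbitrary intra-block correlations and an arbitrary codebook structure; the worst-case-noise principle combined with the random interleaver reduces matters to the single scalar constraint $\mathrm{E}[|x_j|^2]\leq P$, which is what makes the argument go through uniformly over $s_j^*$.
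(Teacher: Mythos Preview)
Your proposal is correct and follows the same approach as the paper: user~$i$ deviates by treating user~$j$'s signal as noise, guaranteeing a pay-off arbitrarily close to $L_i$, and the $\eta$-NE property then forces $R_i \geq L_i$. The paper's own proof is a one-liner that simply asserts this deviation is always available; your version supplies the technical justification (Gaussian random codebook, random interleaver via common randomness, nearest-neighbor decoding, worst-case-noise argument) that the paper leaves implicit.
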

\begin{proof}
Regardless of user $j$'s strategy, user $i$ can always achieve at
least rate $\log(1+\frac{\snr_i}{1+\inr_i})$ 
(with arbitrarily small probability of error)
by treating user $j$'s signal as noise. Hence, this is
always a possible deviation for user $i$ in any
$\epsilon$-game. Thus user $i$'s rate in any $\eta$-NE must
be at least $L_i - \eta$.
\end{proof}

The bound in Lemma~\ref{lem:L1} is a direct analog to the bound in
Lemma~\ref{lem:Li} for the linear deterministic channel, which
characterizes the lower bounds of the box $\mathcal B$. The next
lemma gives an upper bound corresponding to the bound in
Lemma~\ref{lem:Ui}.

\begin{lemma}\label{lem:UG}
If $(R_1,R_2) \in \CNE$, then
$R_i \leq U_i$, where $U_i$ is given in (\ref{eq:UiG}).
\end{lemma}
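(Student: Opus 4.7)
The plan is to parallel the proof of Lemma~\ref{lem:Ui} from the deterministic case, with entropy replaced by differential entropy and the ``uncoded top levels'' deviation replaced by a Gaussian Han--Kobayashi deviation with an ETW-style power split. The bound $R_i \leq \log(1+\snr_i)$ is standard: by Fano's inequality, $R_i$ cannot exceed the single-user Gaussian capacity $\log(1+\snr_i)$.

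For the other term in the $\min$ defining $U_i$, specialize to $i=1$. Fix $\eta>0$ and a sufficiently small $\epsilon>0$, and let $(s_1^*,s_2^*)$ be a $(1-\epsilon)$-reliable $\eta$-NE achieving $(R_1,R_2)$ with common block length $N$. Applying Fano's inequality to user $1$ along with the Markov chain $(m_1,\omega_1) - x_1^N - y_1^N$, then bounding $h(y_1^N)\le N\log(\pi e(1+\snr_1+\inr_{12}))$ by the maximum-entropy property of the complex Gaussian, yields
\[
R_1 \;\le\; \log(1+\snr_1+\inr_{12}) \;-\; \tfrac{1}{N}\bigl[h(h_{12}x_2^N+z_1^N)-N\log(\pi e)\bigr] + \delta,
\]
with $\delta\to 0$ as $\epsilon\to 0$. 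Setting $T := \max(\inr_{21},\snr_2/\inr_{12})$, a short calculation from (\ref{eq:UiG}) shows that it suffices to prove $\tfrac{1}{N}h(h_{12}x_2^N+z_1^N) \geq \log\bigl(\pi e(1+\snr_2+\inr_{21})/(1+\inr_{21}+T)\bigr)$ up to $o(1)$; substituting back then gives $R_1 \leq U_1 + o(1)$.

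The crux, paralleling the deterministic argument $H(\vs_2)\geq H(\vu_2)\gtrsim N\min(L_2,n_{12})$, is to extract this lower bound from the $\eta$-NE property for user 2. I would exhibit an alternative Gaussian Han--Kobayashi deviation $s_2'$ in which user 2 transmits a superposition $x_2' = u_2 + v_2$ of two independent Gaussian sub-codes with private power $P_v$ chosen so that $|h_{22}|^2 P_v = T$ (clipped to $P_v = P$ when $T\ge\snr_2$) and common power $P_u = P - P_v$. The common sub-code is decoded at receiver $2$ treating the private sub-code and user $1$'s nominal signal as Gaussian noise; since Gaussian noise is worst-case for a Gaussian codebook, it achieves rate at least $R_{2c}' = \log\bigl(1+(\snr_2-T)^+/(1+\inr_{21}+T)\bigr)$ irrespective of user $1$'s actual input distribution. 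The private sub-code is built by random coding over super-blocks of length $N$, with per-block marginals matched to a suitably scaled version of user 2's nominal signal $x_2^*$, directly mirroring the use of the nominal $\vv_2$ in the deterministic proof; this yields a private rate $R_{2p}'$ essentially equal to the ``residual'' mutual information left after stripping the common sub-code. The $\eta$-NE condition $R_2 + \eta \geq R_{2c}' + R_{2p}'$, combined with a Fano-type bound on $R_2$ and an application of the entropy-power inequality to handle the superposition, then forces the differential-entropy term on the right-hand side of the above display to dominate, producing the required lower bound on $\tfrac{1}{N}h(h_{12}x_2^N+z_1^N)$. Letting $\eta,\epsilon\to 0$ completes the proof.

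The main obstacle is the step that converts user 2's deviation rate into a lower bound on $h(h_{12}x_2^N+z_1^N)$. In the deterministic case this was transparent because $\vu_2$ was literally a sub-vector of $\vs_2$, making $H(\vs_2)\geq H(\vu_2)$ immediate; in the Gaussian case there is no such exact decomposition of $x_2^N$ into independent common and private components, so one must instead appeal to the entropy power inequality together with the worst-case-Gaussian-noise property for a Gaussian codebook. A secondary subtlety is constructing the private sub-code so that its rate matches the residual of user 2's nominal rate up to $o(1)$; this requires a scaling of $x_2^*$ and is slightly more delicate than the corresponding random-coding step in the deterministic proof, but the resulting mismatch is absorbed into the $\eta$ slack once $\eta$ is sent to zero.
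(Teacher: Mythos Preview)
Your skeleton matches the paper's proof closely up to the crucial step: the same Fano reduction $R_1 \le \log(1+\snr_1+\inr_{12}) - \tfrac{1}{N}I(\vx_2;\tvy_1) + \delta$ with $\tvy_1 = h_{12}\vx_2+\vz_1$, the same superposition deviation for user~2 with the ETW power split (your $T$ is the paper's $\snr_2\sigma_v^2$), and the same worst-case-Gaussian-noise bound for the common sub-code rate. You also correctly isolate the main obstacle as passing from the $\eta$-NE condition at receiver~2 to a lower bound on $I(\vx_2;\tvy_1)$ at receiver~1.

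The gap is in your proposed resolution. Using a scaled copy of the nominal $x_2^*$ for the private sub-code does not set up the cancellation you need: its rate $I(\sigma_v\vx_2;\, h_{22}\sigma_v\vx_2+h_{21}\vx_1+\vz_2)$ is not a term that arises in any chain-rule decomposition of $I(\vx_2;\vy_2)$, because $\sigma_v\vx_2$ and $(1-\sigma_v)\vx_2$ (or any such split) are not independent. And the entropy-power inequality does not bridge the remaining gap either: it lower-bounds $h(h_{12}\vx_2+\vz_1)$ only in terms of $h(\vx_2)$, whereas the NE-plus-Fano argument controls $h(\vy_2)$, and extracting $h(\vx_2)$ from $h(\vy_2)=h(h_{22}\vx_2+h_{21}\vx_1+\vz_2)$ runs into EPI in the wrong direction. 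A related obstruction is that the interference $h_{21}\vx_1+\vz_2$ at receiver~2 is not Gaussian in general, so ``scaling'' comparisons of the form $I(\vx_2;b\vx_2+W)$ versus $I(\vx_2;c\vx_2+W)$ are not available.

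The paper sidesteps all of this, without EPI, by introducing an \emph{artificial} Gaussian perturbation $\vv_2\sim\CN(0,\sigma_v^2 I_N)$ independent of everything and setting $\vu_2:=\vx_2+\vv_2$. Two data-processing steps then give
\[
I(\vx_2;\tvy_1)\ \ge\ I(\vx_2;\vx_2+\vv_2)\ =\ I(\vu_2;\vx_2)\ \ge\ I(\vu_2;\vy_2),
\]
the first because $\sigma_v^2\ge 1/\inr_{12}$ makes $\vx_2\mapsto\vu_2$ a degraded Gaussian channel relative to $\vx_2\mapsto\tvy_1$, the second from the Markov chain $\vu_2-\vx_2-\vy_2$. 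Crucially, the deviation's private sub-code is taken to be \emph{Gaussian} with the same variance $\sigma_v^2$ as this artificial noise; then the chain-rule bound $I(\vx_2;\vy_2)\le I(\vu_2,\vv_2;\vy_2)\le I(\vu_2;\vy_2)+I(\vv_2;\tvy_2)$ and the deviation rate $R_2'=\tfrac{1}{N}I(\tvu_2;\vy_2')+\tfrac{1}{N}I(\vv_2;\tvy_2)$ share the identical term $I(\vv_2;\tvy_2)$, which cancels, leaving $\tfrac{1}{N}I(\vu_2;\vy_2)\ge \log\bigl(1+(\snr_2-T)^+/(1+\inr_{21}+T)\bigr)-\eta-o(1)$. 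The Gaussian $\vu_2$ is the correct analogue of the top-$\min(L_2,n_{12})$-levels sub-vector from the deterministic proof: in the Gaussian case one manufactures a coarse version of $\vx_2$ by adding independent Gaussian noise rather than by truncating levels.
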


\begin{IEEEproof}
Suppose $(R_1,R_2) \in \CNE$.  Without loss of generality, let
us focus on user $1$ to show the upper bound on $R_1$.

We define first a parameter:
\begin{equation}
 \sigma_v^2 :=
\max\left ( \frac{\inr_{21}}{\snr_2}, \frac{1}{\inr_{12}} \right).
\label{eq:sigma}
\end{equation}

Consider first the case that $ \sigma_v^2 >1$: this corresponds to
the case in the deterministic channel when the interference from the
signal user 2 transmits at its interference-free levels appears
below noise level at receiver 1. In this case, there will be no
minimum amount of interference that user 2 will cause to user 1 at a
NE, and we simply bound $R_1$ by its point-to-point capacity:
\begin{equation}
\label{eq:u1}
 R_1 \le \log (1 +\snr_1).
\end{equation}

The case when $\sigma_v \le 1$ is more interesting and we need a
tighter bound on $R_1$. Fix $\eta
 > 0$ and arbitrary. Given a sufficiently small $\eps
 > 0$, there exist a $(1-\eps)$-reliable strategy pair $(s^*_1,
s^*_2)$ achieving the rate pair $(R_1,R_2)$ that is also a
$\eta$-NE. As remarked in Section \ref{sec:formulation}, we can
assume that in this nominal strategy pair, both users use a common
block length $N$.  Applying Fano's inequality to user $1$ for
average bit error probability, we get the bound, for any block $k$: 
\beq
\label{eq:fanoG} R_1 \le \frac{I(m_1; \vy_1|\omega_1)}{N} + \delta 
\eeq
where $\vy_1$ is user $1$'s received signal over the block, $\delta$
depends on $\eps$ and goes to zero as $\eps$ goes to zero, and 
$\omega_1$ denotes any common randomness in $\vx_1$.
 Note that
we drop the block indices of the message and the signals to simplify
notation. Now,
\begin{eqnarray*} & & \frac{1}{N} I(m_1; \vy_1|\omega_1)\\
& \le & \frac{1}{N} I(\vx_1;\vy_1|\omega_1)\\
&\leq & \frac{1}{N} I(\vx_1;\vy_1)\\
& =  & \frac{1}{N} \left [h(\vy_1) - h(\vy_1|\vx_1) \right]\\
& = & \frac{1}{N} \left [h(\vy_1)-h(\vz_1) - h(\vy_1|\vx_1)
+h(\vz_1)\right]\\
& \le & \log (1 + \snr_1 + \inr_{12}) - \frac{I(\vx_2;\tvy_1)}{N}
\end{eqnarray*}
where
$$\tvy_1 := h_{12} \vx_2 + \vz_1.$$
Combining this with the above inequality, we get: \beq
\label{eq:upperG} R_1 \le \log (1 + \snr_1 + \inr_{12}) -
\frac{I(\vx_2;\tvy_1)}{N}. \eeq

The term $I(\vx_2;\tvy_1)$ plays the role of $H(\vs_2)$ in the
linear deterministic case. We now seek a lower bound on
$I(\vx_2;\tvy_1)$. Applying Fano's inequality to user $2$, we get
\begin{eqnarray*} R_2   &\le & \frac{1}{N} I(m_2; \vy_2|\omega_2) + \delta\\
& \le & \frac{1}{N} I(\vx_2;\vy_2) +\delta\\
& = & \frac{1}{N} I(\vu_2,\vv_2; \vy_2) +\delta
\end{eqnarray*}
where
$$\vu_2 = \vx_2 + \vv_2$$
and $\vv_2 \sim \CN(0, \sigma_v^2 I_N)$ independent of everything
else, with $\sigma_v^2$ defined as in (\ref{eq:sigma}).

Now,
\begin{eqnarray*}
I(\vu_2,\vv_2; \vy_2) & = & I(\vu_2;\vy_2) + I(\vv_2;\vy_2|\vu_2)\\
& = & I(\vu_2;\vy_2) + I(\vv_2;\tvy_2|\vu_2)\\
& \le & I(\vu_2;\vy_2) + I(\vv_2;\tvy_2)
\end{eqnarray*}
where
$$ \tvy_2 = \vy_2 - h_{22} \vu_2 = h_{22} \vv_2 + h_{21} \vx_1 + \vz_2$$
and the last inequality above follows from the Markov chain $\vu_2 -
\vv_2 - \tvy_2$.

Combining this with the previous equation, we get: \beq
\label{eq:upper2G} R_2   \le \frac{1}{N} I(\vu_2;\vy_2) +
\frac{1}{N}I(\vv_2;\tvy_2) + \delta. \eeq

Let us now consider an alternative strategy $s'_2$ for user 2: a
superposition of two i.i.d.\ Gaussian codebooks, one with each
component of each codeword having variance $1-\sigma_v^2$, and one with each
component of each codeword having variance $\sigma_v^2$. The codes have
block
length $NK$. If we choose $K \rightarrow \infty$, then standard
random coding argument and the chain rule of mutual information
implies that this scheme can achieve a rate of:
$$ \frac{1}{N}I(\tvu_2,\vv_2;\vy'_2) = \frac{1}{N}I(\tvu_2;\vy'_2)
+\frac{1}{N} I(\vv_2;\tvy_2),$$
where $\tvu_2 \sim \CN(0,1-\sigma_v^2 I_N)$ and $\tvu_2$
and $\vv_2$ are independent, and
$$ \vy'_2 = h_{22}(\tvu_2+\vv_2) + h_{21} \vx_1 + \vz_2.$$
We have:
$$ \frac{1}{N} I(\tvu_2;\vy'_2) \ge
\log\left (1+\frac{\snr_2(1-\sigma_v^2)}{1+ \snr_2\sigma_v^2 +
\inr_{21}} \right),$$
using a worst-case Gaussian noise argument \cite{DC01} 
on $\vv_2$ and $\vx_1$.
This implies from (\ref{eq:upper2G}) that \beq
\label{eq:upper3G} \frac{1}{N} I(\vu_2;\vy_2) \ge
\log \left(1+\frac{\snr_2(1-\sigma_v^2)}{1+ \snr_2\sigma_v^2 +
\inr_{21}} \right)
- \eta \eeq by definition that we are operating at a $\eta$-NE.

Next we relate $I(\vx_2;\tvy_1)$ to $I(\vu_2;\vy_2)$ and complete the
argument.
\begin{eqnarray*}
& & I(\vx_2;\tvy_1) \\
& = & I(\vx_2 ;h_{12}\vx_2+ \vz_1)\\
& = & I(\vx_2; \vx_2+ \tvz_1), \qquad \tvz_1 \sim \CN(0,
\frac{1}{\sqrt{\inr_{12}}} I_N) \\
& \ge & I(\vx_2; \vx_2+ \vv_2), \qquad \mbox{since $\sigma_v^2 =
\max(\frac{\inr_{21}}{\snr_2}, \frac{1}{\inr_{12}}) \ge
\frac{1}{\inr_{12}}$}\\
 & = & I(\vu_2,\vx_2) \\
& \ge & I(\vu_2;\vy_2)\\
& \ge & N \left [\log \left(1+\frac{\snr_2(1-\sigma_v^2)}{1+
\snr_2\sigma_v^2 + \inr_{21}} \right) - \eta \right ] \qquad
\mbox{from (\ref{eq:upper3G}).}
\end{eqnarray*}

Substituting this into (\ref{eq:upperG}), we get the final result:
\begin{eqnarray*}
R_1 & \le & \log (1 + \snr_1 + \inr_{12}) - \log
\left(1+\frac{\snr_2(1-\sigma_v^2)}{1+ \snr_2\sigma_v^2 + \inr_{21}}
\right) +\eta\\
& = & \log (1 + \snr_1 + \inr_{12}) - \log \left(1+\frac{\snr_2 -
\max(\inr_{21},\snr_2/\inr_{12})}{1+  \inr_{21} +
\max(\inr_{21},\snr_2/\inr_{12})} \right) +\eta.
\end{eqnarray*}
Combining this with inequality (\ref{eq:u1}) and letting $\eta
\rightarrow 0$ yields the desired
result.
\end{IEEEproof}
\vspace{10pt}

\subsubsection{Achievable Nash Equilibrium}\label{sec:achievG}

The lemmas in the previous section provide an outer bound on
$\CNE$. In this section we give an inner bound on $\CNE$ by
showing that this set contains $\C_{HK}\cap \B^{-}$. Motivated by the
deterministic analysis, we again consider a modified Han-Kobayashi 
scheme in which each user may send a private message, a common
message, and also a common random message that is generated using the
common randomness they share with their own receiver. Additionally, in
the Gaussian case, we allow a user to also send a private random
message using their common randomness. In the deterministic case,
sending such a message would not serve any purpose since a user's private
signal does not appear at all at the other receiver. In the Gaussian
case, a user's private signal is present at the other user and the
private random message is used to ensure that the effect of this signal is
essentially the same as ``noise.'' All of these messages
are again encoded using a superposition code, which we define formally
next.

For a given Gaussian IC, let $P_{ip}$ and $P_{ic}$
denote a user's private and common power respectively, where $P_{ip} +
P_{ic} = P$. 
As in \cite{ETW07}, we assume that $P_{ip}$ is set as follows:
\[
|h_{ji}|^2P_{ip}=
\begin{cases}
\min(1,\inr_{ji}), & \text{if $\inr_{ji} < \snr_{j}$}\\
0, & \text{otherwise}.
\end{cases}
\]
Let $\inr^p_{ji} = |h_{ji}|^2P_{ip}$ denote the $\inr$ at receiver $j$ due to
this choice of $P_{ip}$ and let $\snr^p_{i} = |h_{ii}|^2P_{ip}$ denote 
the corresponding $\snr$ at receiver
$i$.  Note that when $\snr_{j} > \inr_{ji} > 1$, 
the received interference power at receiver $j$ due to user $i$'s
private power is at the noise level.

As in the deterministic case, we define a randomized Han-Kobayashi
scheme to be one in which each user $i$ separates their message set
into $\mathcal M_{ip} \times \mathcal M_{ic}$ with rates $R_{ip}$
and $R_{ic}$, respectively, and also generates a random common
message set $\Omega_{ir}$ with rate $R_{ir}$. Additionally, we allow each
transmitter to generate a random private message set, $\Omega_{is}$ with rate 
$R_{is}$.  These messages are then
encoded using a superposition code as follows. First, the
transmitter encodes the common message $m_{ic}\in \mathcal M_{ic}$
and common random message $\omega_{ir}\in \Omega_{ir}$ into a codeword
$\vx_{ic}(m_{ic},\omega_{ir})$ from a codebook that satisfies the average
power constraint of $P_{ic}$. Given this codeword, it encodes the
private message $m_{ip}$ and private random message $\omega_{is} \in
\Omega_{is}$ into a codeword $\vx_{ip}(m_{ip},\omega_{is},\vx_{ic})$
from a codebook that is indexed by the common codeword $\vx_{ic}$
and satisfies the average power constraint of $P_{ip}$.  It then
transmits the superposition $\vx_{i} = \vx_{ic} + \vx_{ip}$. 
As in the deterministic case, we call such a scheme 
$(1-\epsilon)$-reliable if a user is able to decode both his 
common and private messages with reliability of $(1-\epsilon)$.

We again introduce a modified MAC region $\R_i^m$ for each
receiver $i$ that we will use to characterize the rates achievable by
such a scheme. This is the set of rate tuples 
that satisfy:
\begin{equation}\label{eq:MACiG}
\begin{split}
R_{ic} +R_{ip} + R_{jc} + R_{jr} &\leq \log\left(1+\frac{\snr_i + \inr_{ij} -\inr^p_{ij}}{\inr^p_{ij}+1}\right)\\
R_{ip}+ R_{jc} + R_{jr}  &  \leq \log\left(1 +\frac{\snr_{i}^p + \inr_{ij} - \inr^p_{ij}}{\inr^p_{ij}+1}\right)\\
R_{ip} & \leq \log\left(1+\frac{\snr^p_{i}}{\inr^p_{ij} +1}\right)\\
R_{ic} +R_{ip} &\leq  \log\left(1+\frac{\snr_i}{\inr^p_{ij}+1}\right).
\end{split}
\end{equation}
As in the deterministic case, these constraints arise from
considering the three user MAC region at receiver 1 corresponding to
the rates $R_{ic}$, $R_{ip}$ and $R_{jc} + R_{jr}$. 
Using these regions we then have the following characterization
of rate-splits that can achieved with this class of schemes.

\begin{lemma}\label{lem:randomizedG}
$\mathcal R_{RHK} = \mathcal R_{1}^m \cap \mathcal R_{2}^m$ is an
achievable region for randomized Han-Kobayashi schemes.
\end{lemma}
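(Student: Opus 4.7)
The plan is to parallel the proof of Lemma~\ref{lem:randomized}, with three steps: specify the Gaussian superposition codebook, reduce decoding at each receiver to a three-user Gaussian MAC by absorbing the other user's private signal into the noise, and invoke standard MAC achievability while dropping the same three of the seven standard constraints as in the deterministic case.

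For the codebook, transmitter $i$ draws a common codebook of $2^{N(R_{ic}+R_{ir})}$ length-$N$ i.i.d.\ $\CN(0,P_{ic})$ codewords indexed by $(m_{ic},\omega_{ir})$; conditioned on each common codeword, it draws a private codebook of $2^{N(R_{ip}+R_{is})}$ length-$N$ i.i.d.\ $\CN(0,P_{ip})$ codewords indexed by $(m_{ip},\omega_{is})$. The transmitted block is the superposition $\vx_i=\vx_{ic}+\vx_{ip}$. Because $\omega_{ir}$ and $\omega_{is}$ are known at receiver $i$ through the shared common randomness, only the information rates $R_{ic},R_{ip}$ appear in user $i$'s own MAC bounds at receiver $i$; the combined rate $R_{jc}+R_{jr}$ appears for user $j$'s common layer, since $\omega_{jr}$ is not known at receiver $i$.

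At receiver $i$, the received signal is $\vy_i = h_{ii}(\vx_{ic}+\vx_{ip}) + h_{ij}(\vx_{jc}+\vx_{jp}) + \vz_i$. The choice of $P_{jp}$ ensures that the received power of user $j$'s private signal satisfies $\inr^p_{ij}\le 1$, and since $\vx_{jp}$ is i.i.d.\ Gaussian, the aggregate $h_{ij}\vx_{jp}+\vz_i$ is i.i.d.\ $\CN(0,\inr^p_{ij}+1)$. Receiver $i$ thus faces a three-user Gaussian MAC with sender powers $P_{ic},P_{ip},P_{jc}$ and effective noise variance $\inr^p_{ij}+1$. Running standard joint-typicality decoding on this MAC yields seven achievability bounds, exactly three of which can be dropped on the same grounds as in Lemma~\ref{lem:randomized} and \cite{CMGE08}: those corresponding to getting a common message wrong while matching the correct private codeword are redundant under superposition coding, and the one corresponding to an error only in user $j$'s codeword does not affect receiver $i$'s pay-off (receiver $i$ need not reliably decode it to attain its own rate).

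The only place that requires care is matching the four surviving Gaussian-MAC capacity bounds to the right-hand sides in \eqref{eq:MACiG}. Each has the form $\log(1+\text{sum of decoded signal powers}/(\text{sum of treated-as-noise signal powers}+\inr^p_{ij}+1))$; the four choices differ by which subset of $\{\vx_{ic},\vx_{ip},\vx_{jc}+\vx_{jr}\}$ is jointly decoded. For instance, the third bound decodes $\vx_{ip}$ alone with $\vx_{ic}$ and $\vx_{jc}+\vx_{jr}$ treated as noise, giving $\log(1+\snr^p_i/(\inr^p_{ij}+1))$ after expressing the private-to-noise ratio in the paper's parametrization; the first bound decodes all three, giving $\log(1+(\snr_i+\inr_{ij}-\inr^p_{ij})/(\inr^p_{ij}+1))$. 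This is the only nontrivial obstacle; once these identifications are verified, the rest is the standard superposition-coding achievability proof, identical in structure to the corresponding argument in \cite{ETW07}.
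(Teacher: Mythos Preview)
The paper does not actually give a proof of this lemma; it is stated without proof, relying implicitly on the standard Han--Kobayashi achievability argument from \cite{ETW07} and the discussion preceding the deterministic analogue (Lemma~\ref{lem:randomized}). Your approach is exactly this standard argument, so in substance you are supplying what the paper omits.

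One point needs correction. In your identification of the four surviving bounds, you describe the third constraint $R_{ip}\le\log(1+\snr_i^p/(\inr_{ij}^p+1))$ as arising from ``decoding $\vx_{ip}$ alone with $\vx_{ic}$ and $\vx_{jc}+\vx_{jr}$ treated as noise.'' That is backwards: treating those signals as noise would put their powers in the denominator and yield a strictly smaller right-hand side. The correct reading is that this bound is $I(\vx_{ip};\vy_i\mid \vx_{ic},\vx_{jc})$, i.e.\ it corresponds to the error event where only $\vx_{ip}$ is wrong and the common codewords are correctly identified (hence conditioned on and subtracted). With this fix your matching of the four Gaussian MAC mutual informations to the right-hand sides in \eqref{eq:MACiG} goes through, and the remainder of your sketch is sound.
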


We define the Han-Kobayashi region $\C_{HK}$ in
Theorem~\ref{prop:cneG} as the set of rates $(R_1,R_2)$ for which
there exists a rate split
$(R_{1c},R_{1r},R_{1p},R_{1s},R_{2c},R_{2r},R_{2p},R_{1s})\in \R_{RHK}$
 with $R_1 =R_{1c} +R_{1p}$ and $R_{2} =
R_{2c} +R_{2p}$. Using \cite{ETW07}, it follows that $\C_{HK}$ is
within one-bit of $\C$.\footnote{The rate region studied
in \cite{ETW07} corresponds to the rates $R_{1}$ and $R_2$ that can be 
achieved with rate-splits in $\R_{RHK}$ where the 
common random and private random rates of both users are zero. It can be 
seen that the resulting region is equivalent to $\C_{HK}$ as defined here.}
Next we give an analog to Lemma~\ref{lem:decode_common}.

\begin{lemma}\label{lem:decode_commonG}
Any rate-tuple $(R_{1c}, R_{1r}, R_{1p},R_{1s},R_{2c},R_{2r},R_{2p},R_{2s})$
in the interior of $\R_{RHK}$ with $R_{ic} + R_{ip} \geq L_i$ for $i = 1,2$ 
can be achieved by a randomized Han-Kobayashi scheme in 
which each user $i$ decodes user $j$'s common and common random
message (with arbitrarily small probability of error).
\end{lemma}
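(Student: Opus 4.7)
The plan is to follow the proof of Lemma~\ref{lem:decode_common} from the deterministic case, replacing the finite-field entropy counting with the standard Gaussian joint-typicality argument. The only substantive content is an algebraic identity that turns the sum-rate constraint of \eqref{eq:MACiG} into a bound on $R_{jc}+R_{jr}$ that matches the capacity of the residual Gaussian channel that receiver $i$ sees for $\vx_{jc}$ after it has subtracted off its own codeword.

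First, I would combine the first line of \eqref{eq:MACiG} at receiver $i$ with the hypothesis $R_{ic}+R_{ip}\geq L_i=\log(1+\snr_i/(1+\inr_{ij}))$ to obtain
\[
R_{jc}+R_{jr}\;<\;\log\!\left(1+\frac{\snr_i+\inr_{ij}-\inr^p_{ij}}{\inr^p_{ij}+1}\right)-\log\!\left(1+\frac{\snr_i}{1+\inr_{ij}}\right).
\]
A direct computation (clearing denominators in both logarithms) collapses the right-hand side to $\log\!\left(1+\frac{\inr_{ij}-\inr^p_{ij}}{\inr^p_{ij}+1}\right)$. This quantity is precisely $I(\vx_{jc};\vy_i\mid \vx_i)$ under the Gaussian codebook construction of Lemma~\ref{lem:randomizedG}, with $\vx_{jp}\sim\CN(0,P_{jp}I_N)$ treated as independent Gaussian noise. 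This step is the Gaussian analog of inequality \eqref{eq:fifth} in the deterministic proof.

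Given this bound, I would use the same randomized Han-Kobayashi scheme as in Lemma~\ref{lem:randomizedG}: i.i.d.\ Gaussian codebooks at every layer, with the prescribed $P_{ip}$. At receiver $i$ run a joint-typicality decoder for $(m_{ic},m_{ip},m_{jc},\omega_{jr})$, absorbing $\vx_{jp}$ into the additive Gaussian noise. The four constraints already in \eqref{eq:MACiG} together with the newly derived bound on $R_{jc}+R_{jr}$ supply exactly the five MAC error-event conditions needed for joint decoding of these four streams at receiver $i$; the two further MAC conditions (the error events dropped in the derivation of $\R_i^m$) remain irrelevant for the same superposition/own-reliability reasons used to justify $\R_i^m$ in the first place. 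Interiority of the tuple in $\R_{RHK}$ makes every inequality strict, so standard random-coding bounds push the joint error probability below any prescribed $\epsilon$ (increasing the blocklength if necessary).

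The main obstacle is purely algebraic: verifying the telescoping of the two logarithms above and confirming that no additional MAC constraint needs to be invoked for joint decoding at receiver $i$. Both checks mirror the deterministic case, but the Gaussian arithmetic is slightly tighter because the upper bound on $R_{jc}+R_{jr}$ must land exactly on the logarithm of the capacity of the effective channel for $\vx_{jc}$ rather than on an integer number of levels. Once these checks are in place, the lemma reduces to Gaussian MAC achievability and the argument is complete.
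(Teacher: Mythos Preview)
Your proposal is correct and follows essentially the same route as the paper: both derive the bound $R_{jc}+R_{jr}<\log\bigl(1+(\inr_{ij}-\inr_{ij}^p)/(1+\inr_{ij}^p)\bigr)$ from the first line of \eqref{eq:MACiG} together with $R_{ic}+R_{ip}\ge L_i$, and then observe that this is exactly the capacity of the effective channel receiver~$i$ sees for $\vx_{jc}$ once user~$i$'s own signal is accounted for and $\vx_{jp}$ is treated as Gaussian noise. The only cosmetic difference is that the paper phrases the last step as successive decoding (decode own codeword, subtract, then decode $\vx_{jc}$), whereas you frame it as joint typicality decoding with the fifth MAC constraint restored; these are equivalent here.
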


The proof here follows exactly the same steps as in the deterministic case. In 
particular for $R_{ic} + R_{ip} \geq L_i$, note that
\begin{equation}\label{eq:fifthG}
R_{jc} + R_{jr} < \log\left(1 + \frac{\inr_{ij} - \inr_{ij}^p}{1 + \inr_{ij}^p}\right)
\end{equation}
which implies that if user $i$ can decode his own common and private
messages, he will be receiving user $j$'s common messages at a rate less
than the capacity of the channel over which these messages are sent. 

Note also that the private random rate $R_{is}$ does not show up in 
any of the constraints for $\R_{RHK}$. This is because in these constraints 
each user $i$ is treating the
other user's private message as worst-case noise with power
$\inr^p_{ij}$ and can remove their own random private message. The
next lemma shows that the random private rate can always be chosen so
that there is essentially no loss in this assumption.

\begin{lemma}\label{lem:priv_random}
Given any $\delta >0$ and any rate tuple in $\R_{RHK}$ with 
\[
R_{js} = (\log(1 + \inr^p_{ij}) - R_{jp} - \delta/2)^+
\]
then for a large enough block-length $N$, this rate tuple can
be achieved by a randomized Han-Kobayashi scheme such that 
\[
\frac{1}{N}h(h_{ij}\vx^p_j + \vz_1) \geq \log(\pi e(1+\inr^p_{ij})) -\delta.
\]
\end{lemma}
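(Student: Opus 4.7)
The plan is to use a Gaussian random codebook for user $j$'s private signal: draw $2^{N(R_{jp}+R_{js})}$ codewords i.i.d.\ from $\mathcal{CN}(0, P_{jp}I_N)$, indexed by pairs $(m_{jp},\omega_{js})\in\mathcal{M}_{jp}\times\Omega_{js}$, and let the transmitted private codeword $\vx^p_j$ be the one labelled by the realized pair. Since $R_{js}$ only enlarges the private codebook and does not appear in any of the constraints defining $\R_{RHK}$, the achievability of the underlying rate tuple guaranteed by Lemma~\ref{lem:randomizedG} is preserved. The entropy bound will then be extracted from the identity
\[
h(h_{ij}\vx^p_j+\vz_1) \;=\; h(\vz_1)+I(\vx^p_j;\tvy) \;=\; N\log(\pi e) + I(\vx^p_j;\tvy),
\]
where $\tvy := h_{ij}\vx^p_j+\vz_1$, so it suffices to exhibit a codebook realization for which $I(\vx^p_j;\tvy)/N \ge \log(1+\inr^p_{ij}) - \delta$.

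The prescription for $R_{js}$ guarantees that the total codebook rate $R := R_{jp}+R_{js}$ satisfies $R \ge \log(1+\inr^p_{ij}) - \delta/2$. I would split into two sub-cases according to whether $R$ is below or above the fictitious point-to-point capacity $C := \log(1+\inr^p_{ij})$. In the case $R \le C$, standard random coding together with Fano's inequality applied to the hypothetical channel $\vx^p_j \mapsto \tvy$ yields a codebook realization with $H(\vx^p_j\mid\tvy)/N \to 0$, and hence $I(\vx^p_j;\tvy)/N \to R \ge C - \delta/2$. In the complementary case $R > C$, which can only arise in the branch where $R_{js}=0$ and $R_{jp}$ already exceeds $C$, I would invoke channel resolvability: a typical Gaussian random codebook of rate strictly above capacity induces an output distribution converging to the target $q_{\tvy}\sim\mathcal{CN}(0,(1+\inr^p_{ij})I_N)$ in KL divergence, and the identity $h(p_{\tvy}) = h(q_{\tvy}) - D(p_{\tvy}\|q_{\tvy})$ (valid because the target is Gaussian with second moments matching those of $\tvy$) forces $h(\tvy)/N \to \log(\pi e(1+\inr^p_{ij}))$. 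Either case produces the required lower bound on the differential entropy for $N$ large enough.

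The main subtlety I anticipate is the resolvability branch, where translating convergence of output distributions (in KL divergence) into convergence of differential entropies requires uniform control of the second moments of $\tvy$; for the i.i.d.\ Gaussian ensemble this holds by the law of large numbers applied to the codeword norms, but it is precisely what prevents a purely variational-distance argument from being sufficient. A final housekeeping step is to select a single codebook realization that simultaneously attains the rate tuple in $\R_{RHK}$ (inherited from Lemma~\ref{lem:randomizedG}) and the entropy bound; since both desired events occur with probability tending to one under the random ensemble, a union-bound/expurgation argument delivers such a realization.
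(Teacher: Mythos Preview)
Your argument is correct and, in the low-rate branch, essentially matches the paper: both you and the paper split into two cases according to whether the total private rate $R=R_{jp}+R_{js}$ sits below or above the fictitious capacity $C=\log(1+\inr^p_{ij})$, and in the first case both arguments extract the entropy bound from Fano's inequality on the hypothetical channel $\vx^p_j\mapsto h_{ij}\vx^p_j+\vz_1$.

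The high-rate branch is where the two proofs genuinely diverge. The paper never invokes resolvability. Instead, when $R_{jp}>C$, it imposes a \emph{broadcast (superposition) structure} on the private codebook: the private message $m_{jp}$ is split as $(w_j^1,w_j^2)$ with $w_j^1$ carrying rate $C-\delta/2$, and the code is designed so that (a) both parts are decodable at receiver $j$, and (b) given $w_j^2$, the sub-message $w_j^1$ is decodable over the channel $h_{ij}\vx^p_j+\vz_i$. Fano's inequality then gives $h(\tvy\mid w_j^2)\ge N(\log(\pi e(1+\inr^p_{ij}))-\delta)$, and dropping the conditioning yields the claim. This keeps the whole proof inside the standard superposition-coding/Fano toolbox and sidesteps the delicate step you flagged, namely turning output-distribution convergence into differential-entropy convergence. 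Your resolvability route is conceptually cleaner---``rate above capacity $\Rightarrow$ output looks Gaussian''---but it imports heavier machinery (soft covering, plus the second-moment control needed for the KL-to-entropy identity). The paper's construction buys elementariness at the price of an extra layer of superposition in the private codebook; your approach buys conceptual transparency at the price of the resolvability apparatus. Both are valid.
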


\begin{IEEEproof}
Given a rate tuple that satisfies the conditions of this lemma and a
constant $\delta >0$, 
we next describe a specific encoding of user $j$'s
private messages to ensure the conditions for this lemma hold true. Let $w_j =
(m_{jp},\omega_{js})$ denote the total private message to be encoded
by user $j$ (chosen from a total private codebook with rate $R_{jp} +
R_{js}$).  To encode this message\footnote{To keep the overall
  superposition code structure for our class of Han-Kobayashi schemes,
  we need to construct such a private codebook for each codeword $\vx_{jc}$ in user
  $j$'s common codebook. Here we focus on one such codebook.} we
consider the following two cases: (i) $R_{jp} \leq \log(1 + \inr^p_{ij})
- \delta/2$
and (ii) $R_{jp} > \log(1+\inr^{p}_{ij})$. In each case we will
separate $w_j$ into two messages so that $w_{j} =
(w_j^1,w_j^2)$.

{\it Case 1: $R_{jp} \leq \log(1 + \inr^p_{ij})-\delta/2$}. In this case we set 
$w_j^1 = m_{jp}$, i.e.~this is the private message which is to be
decoded at receiver $j$. Since the rate-tuple is in $\R_{RHK}$, this 
message must be decodable over a
Gaussian channel with a capacity of $\log(1+\snr_j^p)$. We then set
$w_{j}^ = \omega_{js}$, i.e., this is the private random message sent
by user $j$. By assumption this message will have a rate of 
$R_{js} = \log(1 + \inr^p_{ij}) - R_{jp} - \delta/2$. 

By choosing $N$ large enough, there will
exist a Gaussian broadcast codebook for these messages so that for a
given reliability, $w_{j}^1$ and $w_j^2$ can be received
reliably over the Gaussian channel given by 
\[
y = h_{ij}x_j^p + z_i
\]
where $x_{j}^p$ has average power $P_{jp}$ and the noise variance is
1, and $w_{j}^1$ can be received at user $j$'s receiver given
$w_{j}^2$ (equivalently given the private random message
$\omega_{js}$). 
Note that the first channel has a capacity of
$\log(1+\inr_{ij}^p)$. 

By applying Fano's inequality to the first receiver, we have that for
a large enough reliability we can find a block-length $N$ so that 
\[
N(\log(1+\inr_{ij}^p) - \delta/2) \leq I(w_j;\vy) + N\delta/2, 
\]
where $\vy = h_{ij}\vx^p_j + \vz_1$ denotes the received signal 
over a block of length $N$.

Now,
\begin{align*}
I(w_j;\vy) & = h(\vy) - N \log(\pi e).
\end{align*}
Hence, we have
\[
h(\vy) \geq N (\log(\pi e (1+ \inr^p_{ij})) - \delta)
\]
as desired.

{\it Case 2: $R_{jp} > \log(1+\inr^{p}_{ij})$.} In this case we
set $R_{js} =0$ and  choose $w_j^1$ and $w_j^2$ so that $m_{jp} =
(w_j^1,w_j^2)$ where 
$w_j^1$ is chosen from a code book with rate 
$\log(1+ \inr^p_{ij}) - \delta/2$ and $w_{j}^2$ is chosen from a
codebook with rate $R_{jp} - \log(1+ \inr^p_{ij}) + \delta/2$.

By choosing $N$ large enough, there will
exist a Gaussian broadcast codebook for these messages so that for a
given reliability, $w_{j}^1$ and $w_j^2$ can be
received at user $j$'s receiver, and given $w_j^2$, $w_j^1$
can be received reliably over the Gaussian channel  
\[
y = h_{ij}x_j^p + z_i.
\]

Applying Fano's inequality at the second receiver we have
\[
N R_{jp} \leq I(w_j;\vy,w_j^2) + N\delta/2.
\]

Now,
\begin{align*}
I(w_j;\vy,w_j^2) & = I(w;w_j^2) + I(w;\vy|w_j^2)\\
&= N(R_{jp} - \log(1+\inr_{ij}^p) + \delta/2) \\
& \; + h(\vy|w_{j}^2) - N \log(\pi e). 
\end{align*}
Hence, we have
\[
h(\vy|w{j}^2) \geq N (\log(\pi e (1+ \inr^p_{ij})) - \delta).
\]
Dropping the conditioning and dividing by $N$, it follows that
\[
\frac{1}{N}h(\vy) \geq \log(\pi e(1+ \inr^p_{ij}) - \delta
\]
as desired.

\end{IEEEproof}

Note that under the given power constraints, $N\log(\pi e(1+
\inr^p_{ij}))$ is the maximum possible value for $h(\vy)$
which is achieved when $\vy$ is a sequence of i.i.d.~Gaussian random
variables. Hence, this lemma
can be viewed as showing that when the private rates are sufficiently
high, $h(\vy)$ is well approximated by simply viewing $\vy$ as i.i.d.\
Gaussian.

We say that a rate-split $\vR$ is
{\it self saturated} at receiver $i$ if $\vR \in \R_{i}^m$ 
and any other choice of
$R_{ic}$ and $R_{ip}$ in which  $R_{ip} +R_{ic}$
is increased  (keeping all other rates fixed) will
result in a rate-split that is not in $\R_{i}^m$. 
Similar to the deterministic case, it can be
 shown that if receiver $i$ is
self-saturated and $R_{ic}+ R_{ip} \geq L_i$ then 
$R_{i} = R_{ip} + R_{ic}$ and $R_{jc} + R_{jr}$
must be inside the following two user MAC region:
\begin{equation}\label{eq:MACi2G}
\begin{split}
R_{i} + R_{jc} + R_{jr}  &\leq \log\left(1+\frac{\snr_i + \inr_{ij}
  -\inr^p_{ij}}{\inr_{ij}^p +1}\right)\\
R_{i} & \leq \log\left(1+\frac{\snr_i}{\inr_{ij}^p +1}\right)\\
R_{jc} + R_{jr} &\leq \log\left(1 + \frac{\inr_{ij} - \inr_{ij}^p}{1 + \inr_{ij}^p}\right).
\end{split}
\end{equation}
Moreover, if $R_{i}$ is increased then
this rate pair will no longer be in this region. Using this we have the
next lemma which gives an analogous result to Lemma~\ref{lem:int}
for the deterministic channel.

\begin{lemma}\label{lem:intG}
If there exists a rate tuple $\vR$
that is self-saturated with $R_{ic} + R_{ip} \geq L_i$
for both both receivers $i$, then
$(R_{1p}+R_{1c},R_{2p}+R_{2c})\in \CNE$.
\end{lemma}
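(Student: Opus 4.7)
The plan is to follow the outline of the proof of Lemma~\ref{lem:int}, but augmenting it in two ways specific to the Gaussian channel: first, invoking Lemma~\ref{lem:priv_random} to guarantee that user~$j$'s private signal at receiver~$i$ is essentially Gaussian with variance $1+\inr_{ij}^p$; second, replacing the trivial entropy bounds used in the deterministic case with worst-case-Gaussian-noise arguments.

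Fix the self-saturated rate tuple $\vR$ with $R_{ic}+R_{ip}\ge L_i$ and an arbitrary $\eta>0$. Applying Lemmas~\ref{lem:randomizedG},~\ref{lem:decode_commonG}, and~\ref{lem:priv_random} with slacks of order $\eta$ (and a correspondingly small $\delta$), for every sufficiently small $\eps>0$ one can construct a $(1-\eps)$-reliable randomized Han-Kobayashi scheme of common block length $N$ that achieves a rate tuple within $\eta/4$ of $\vR$, such that each receiver also decodes the other user's common and common random messages, and such that $\frac{1}{N}h(h_{ij}\vx_j^p+\vz_i)\ge \log(\pi e(1+\inr_{ij}^p))-\eta/8$ at each receiver $i$. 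In this nominal play each user $i$ collects payoff at least $R_{ic}+R_{ip}-\eta/4$.

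Suppose for contradiction that user~$1$ has a deviation of block length $N_1$ improving his payoff by at least $\eta$, so that the deviating rate satisfies $\tilde{R}_1 \ge R_{1c}+R_{1p}+3\eta/4$. Applying Fano's inequality in the style of (\ref{eq:fanoG}) on a super-block of length $\mathrm{lcm}(N,N_1)$, two bounds emerge: (a) $\tilde{R}_1\le \frac{1}{N_1}I(\vx_1;\vy_1\mid \vx_{2c})+\delta$ from the deviating user's reliable decoding, using that receiver~$1$ still knows $\vx_{2c}$ after the deviation; and (b) $R_{2c}+R_{2r}-\eta/4\le \frac{1}{N}I(\vx_{2c};\vy_1)+\delta'$ from receiver~$1$'s decoding of $\vx_{2c}$ in the nominal strategy. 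The entropy guarantee from Lemma~\ref{lem:priv_random}, combined with a standard worst-case-noise argument \cite{DC01} that treats user~$2$'s private signal as Gaussian of variance $\inr_{12}^p$, converts (a) and (b) into the three constraints of the two-user MAC region (\ref{eq:MACi2G}), each with additive slack that can be forced below $\eta/4$ by choosing $\eps$ small. Then $(\tilde{R}_1,R_{2c}+R_{2r}-\eta/4)$ lies inside (\ref{eq:MACi2G}) with $\tilde{R}_1>R_{1c}+R_{1p}$, contradicting self-saturation. Hence no profitable deviation exists and the scheme is a $\eta$-NE; letting $\eta\downarrow 0$ yields the claim.

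The main obstacle I foresee is the clean conversion of the conditional mutual information $I(\vx_1;\vy_1\mid \vx_{2c})$ into the exact single-user upper bound $\log(1+\snr_1/(1+\inr_{12}^p))$ appearing in (\ref{eq:MACi2G}). This is precisely where the near-Gaussianity of the private signal provided by Lemma~\ref{lem:priv_random} is essential, and where the $\delta$-slack needs careful tracking so that all three constraints of the MAC region and the Fano-derived bound on $\tilde{R}_1$ simultaneously align within a margin shrinking in $\eta$ and $\eps$. Beyond this accounting, the logical structure is a direct transplant of the deterministic argument.
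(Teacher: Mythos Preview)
Your proposal is correct and follows essentially the same route as the paper's proof: invoke Lemmas~\ref{lem:randomizedG}, \ref{lem:decode_commonG}, \ref{lem:priv_random} to build the nominal strategy, assume a profitable deviation by user~$1$, and derive Fano-based bounds that place $(\tilde R_1, R_{2c}+R_{2r})$ inside the two-user MAC region (\ref{eq:MACi2G}), contradicting self-saturation. Two small clarifications: the conditioning on $\vx_{2c}$ in bound~(a) is a genie-aided upper bound (via independence of $\vx_1$ and $\vx_{2c}$), not something receiver~$1$ actually ``knows'' after the deviation; and the citation to \cite{DC01} is unnecessary here, since Lemma~\ref{lem:priv_random} directly supplies the lower bound $h(\vy_1\mid \vx_1,\vx_{2c})=h(h_{12}\vx_2^p+\vz_1)\ge N(\log(\pi e(1+\inr_{12}^p))-\eta/8)$, which together with the Gaussian maximum-entropy upper bound on $h(\vy_1\mid\vx_{2c})$ yields the single-user constraint in (\ref{eq:MACi2G}) without any worst-case-noise machinery.
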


\begin{proof}
The proof follows a similar argument as that for Lemma~\ref{lem:int}.
Given a rate-tuple $\vR= (R_{1c},R_{1r},R_{1p},R_{1s},R_{2c},R_{2r},R_{2p},
R_{2s})$ that satisfies the conditions in the lemma, then it follows 
from Lemmas~\ref{lem:randomizedG}, \ref{lem:decode_commonG}, and
\ref{lem:priv_random} that for any $\eta >0$ and $\epsilon >0$, 
there exists a randomized Han-Kobayashi scheme achieving rates 
$(R_{1c}-\eta/6,R_{1r}-\eta/6,R_{1p}-\eta/6,\tilde{R}_{1s},R_{2c}-\eta/6,
R_{2r}-\eta/6,R_{2p}-\eta/6,\tilde{R}_{2s})$ for which each receiver
can decode his own common and private messages as well as the other
user's common and common random messages with probability of error
less than $\epsilon$. Moreover, by possibly changing the private random rates
$\tilde{R}_{1s}$ and $\tilde{R}_{2s}$ to satisfy
Lemma~\ref{lem:priv_random} such a scheme can be found for
which 
\begin{equation}\label{eq:privR}
\frac{1}{N}h(h_{ij}\vx^p_j + \vz_1) \geq \log(\pi e(1+\inr^p_{ij})) -\eta/6,
\end{equation}
for each user $i$. 

Next we argue that for $\epsilon$ small enough such a pair of 
strategies must be a $\eta$-NE of a $\epsilon$-game.
Assume that these strategies are not 
an $\eta$-NE, and without loss of generality suppose that user $1$ can
deviate and improve his performance by at least $\eta$. After
deviating, the rates for the MAC
region at user 1 in (\ref{eq:MACi2G}) are given by
$\tilde{\vR} = (\tilde{R}_{1}, R_{2c} + R_{2r}-\eta/3)$, where
$\tilde{R}_1 \geq R_1 + 2\eta/3$. After this deviation,
the rates $\tilde{\vR}$
must violate either the first or second constraint in (\ref{eq:MACi2G})
for $i=1$ by at least $\eta/3$.

Suppose that user $1$ deviates to a blocklength $N$ strategy, which 
without loss of generality we can assume is the same as the original 
strategy. Then from Fano's inequality for the average bit
error probability it must be that
\begin{equation}\label{eq:tr1G}
\begin{split}
\tilde{R}_1 \leq \frac{I(\vx_1;\vy_1|\vx_{2c})}{N} + \delta
\end{split}
\end{equation}
where $\delta$ goes to zero as the average bit error probability
$\epsilon$ does.  Note that
\begin{align*}
\frac{I(\vx_1;\vy_1|\vx_{2c})}{N} & = \frac{1}{N}(h(\vy_1|\vx_{2c}) - 
h(\vy_1|\vx_1,\vx_{2c}) \\
& \leq \log(\pi e(1+ \snr_1 + \inr_{12}^p) - \log(\pi e(1+
\inr_{12}^p) + \eta/6 \\
& = \log\left(1+ \frac{\snr_1}{1 + \inr_{12}^p}\right) +\eta/6,
\end{align*}
where the second line follows from (\ref{eq:privR}).
Choosing $\epsilon$ small enough and combining this
with (\ref{eq:tr1G}) implies that
\[
\tilde{R}_1 <  \log\left(1+ \frac{\snr_1}{1 + \inr_{12}^p}\right) + \eta/3.
\]
Hence, the second constraint in (\ref{eq:MACi2}) can not be violated.

Likewise, since in the nominal strategy user $1$
was able to decode user $j$'s common and common random signals,
it follows that
\begin{equation}\label{eq:tr2G}
R_{2c} + R_{2p} \leq \frac{I(\vx_{2c};\vy_1)}{N} + \delta'.
\end{equation}
Combining (\ref{eq:tr1G}) and (\ref{eq:tr2G}) and
choosing $\epsilon$ small enough so that $\delta + \delta' <
\eta/6$, we have
\begin{align*}
\tilde{R_1} + R_{2c} + R_{2p} &\leq
\frac{1}{N}I(\vx_{2c},\vx_1;\vy_1) + \eta/6 \\
& \leq \log(\pi e(1+\snr_1 + \inr_{12}) - \log(\pi e(1+
\inr_{12}^p) + \eta/3\\
&= \log\left(1+ \frac{\snr_1 + \inr_{12} + \inr_{12}^p}{1
    +\inr_{12}^p}\right)
+\eta/3
\end{align*}
which shows that the first constraint in (\ref{eq:MACi2}) can not
be violated.

Therefore, such a deviation can not exist and the nominal
strategy must be a $\eta$-NE for small enough $\epsilon$.
Taking the limit as $\eta\rightarrow 0$,
it follows that the desired rates must lie in $\CNE$.
\end{proof}

Next, we turn to proving an analogue of Lemma~\ref{lem:full} for the
Gaussian model. To do this we need to define a parallel notion to 
the interference-free levels in the deterministic channel. In a Gaussian
channel, this again corresponds to the rate $L_i$, which can be
achieved by treating interference as Gaussian noise. We still want
to constrain both the common and private rates of each transmitter
so that this rate is utilized with as much ``common rate'' as possible.
Specifically, let
\[
a_{i} = \log\left(1+\frac{\snr^p_i}{1+\inr_{ij}}\right)
\]
be the required private rate at user $i$ and
\[
b_{i} = \log\left(1+\frac{\snr_i -
\snr^p_i}{1+\snr^p_i+\inr_{ij}}\right)
\]
be the required common rate at user $i$. The rate $a_i$ is the rate
achieved by user $i$'s private signal when treating the aggregate
interference plus noise as Gaussian noise, while the rate $b_i$ is
the rate achieved by user $i$'s common signal when treating its own
private signal plus interference plus noise as Gaussian noise.

We say that $(R_{ic},R_{ip})$ {\it 
fully utilizes the
  interference free rate} for user $i$ if
\begin{align}
R_{ip} &\geq a_i \label{eq:aiG}\\
R_{ic} &\geq b_i.  \label{eq:biG}
\end{align}

With this definition we have the following analogue of
Lemma~\ref{lem:full}.

\begin{lemma}\label{lem:fullG}
If $(R_{1c},R_{1p},R_{2c},R_{2p})$ fully utilizes the interference free
  rate for each user $i$, then there exists
random common rates $R_{1r},R_{2r} \ge 0$ such that
$(R_{1c},R_{1r},R_{1p},R_{1s},R_{2c},R_{2r},R_{2p},R_{2s})$ is  
self-saturated at both receivers for any choice of $R_{1s},R_{2s}$.
\end{lemma}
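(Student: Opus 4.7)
The plan is to mirror the proof of Lemma~\ref{lem:full} for the deterministic case, with the only extra work being the verification that the algebraic identity between the first two MAC constraints in \eqref{eq:MACiG} produces exactly the quantity $b_i$. Since $R_{1s}$ and $R_{2s}$ do not appear in any of the constraints defining $\R_{RHK}$, they can be fixed arbitrarily from the outset. Because the constraints at receiver $i$ in \eqref{eq:MACiG} involve user $j$'s rate only through the sum $R_{jc}+R_{jr}$, increasing $R_{2r}$ affects only the four inequalities at receiver $1$ and is independent of what we do with $R_{1r}$; the two receivers can therefore be handled separately, and it suffices to show how to saturate receiver $1$.

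Starting from $R_{2r}=0$ (at which the tuple is assumed to lie in $\R_{RHK}$) I would increase $R_{2r}$ until it can no longer be increased without leaving $\R_{1}^m$. Among the four inequalities at receiver $1$ only the first and second involve $R_{2r}$, so one of them must be tight at the blocking value. If it is the first (the overall sum-rate bound), then by the observation preceding Lemma~\ref{lem:intG}, any attempt to increase $R_{1c}+R_{1p}$ while holding the other rates fixed would violate this same sum-rate constraint. Hence the tuple is self-saturated at receiver $1$ and we are done.

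The case I need to handle carefully is when the second constraint is the one that becomes tight. Here I will use the assumption $R_{1c}\ge b_1$. A direct computation shows that the right-hand side of the first constraint minus the right-hand side of the second equals
\[
\log\!\left(\frac{1+\inr_{12}^p+\snr_1+\inr_{12}-\inr_{12}^p}{1+\inr_{12}^p+\snr_1^p+\inr_{12}-\inr_{12}^p}\right)=\log\!\left(1+\frac{\snr_1-\snr_1^p}{1+\snr_1^p+\inr_{12}}\right)=b_1.
\]
If the second constraint is tight while the first still has slack, then that slack equals the RHS difference $b_1$ minus the LHS difference $R_{1c}$; since $R_{1c}\ge b_1$, this slack is nonpositive, so the first constraint must be tight as well. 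Either way receiver $1$ ends up self-saturated. Repeating the argument with the roles of the two users swapped produces the desired $R_{1r}$.

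The main obstacle is purely algebraic: verifying the identity above in a clean way, together with checking that the initial configuration ($R_{1r}=R_{2r}=0$) together with any admissible $R_{ip}\ge a_i$ and $R_{ic}\ge b_i$ actually lies in $\R_{RHK}$ so that the ``start and increase $R_{jr}$'' procedure makes sense. Everything else is a mechanical translation of the deterministic proof, with receiver-level entropies replaced by their Gaussian counterparts expressed in \eqref{eq:MACiG}.
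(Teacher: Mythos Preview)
Your proposal is correct and follows essentially the same approach as the paper's proof: increase $R_{2r}$ until blocked, observe that only the first two constraints in \eqref{eq:MACiG} involve $R_{2r}$, and use $R_{1c}\ge b_1$ to show that tightness of the second forces tightness of the first. The paper states this last step without computation; you supply the algebraic identity showing the RHS gap equals $b_1$, and you make explicit the decoupling between the two receivers (each $R_{jr}$ appears only at receiver $i$), which the paper leaves implicit.
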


\begin{IEEEproof}
This proof parallels exactly the proof of Lemma \ref{lem:full}. We
will show that one can always increase $R_{2r}$ such that the
overall sum rate constraint (the first constraint in
(\ref{eq:MACiG})) is tight, 
so that receiver $1$ is self-saturated. Suppose no such 
choice of $R_{2r}$ exists. Then it must
be that $R_{2r}$ cannot be further increased because the
second constraint in (\ref{eq:MACiG}) is tight. 
However if this constraint is 
tight, then since $R_{1c} \geq b_1$,
it can be seen that the first constraint must also be tight.
\end{IEEEproof}

To complete the generalization of the deterministic case, we need a
parallel result to Lemma~\ref{lem:fullutil}, which we state next.

\begin{lemma}\label{lem:fullutilG}
For any point $(R_1,R_2) \in \C_{HK} \cap \B^-$ there exists a
non-randomized Han-Kobayashi rate-split that fully 
utilizes the interference-free levels at each transmitter.
\end{lemma}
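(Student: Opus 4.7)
The plan is to mirror the deterministic proof of Lemma~\ref{lem:fullutil} step-for-step, starting from an arbitrary non-randomized Han-Kobayashi rate-split $(R_{1c},R_{1p},R_{2c},R_{2p})\in \R_{RHK}$ (with $R_{ir}=R_{is}=0$) satisfying $R_i = R_{ic}+R_{ip}$ for the given $(R_1,R_2) \in \C_{HK}\cap \B^-$, and showing that it can be transformed into one satisfying (\ref{eq:aiG})--(\ref{eq:biG}) at both users. A preliminary log-arithmetic check verifies the identity $a_i + b_i = L_i$, so for any point in $\B^-$ there is enough total rate budget $R_i \ge L_i$ to meet both constraints simultaneously.

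The transformation proceeds in two stages. \emph{Stage 1:} raise the private rates so that $R_{ip} \ge a_i$ by transferring rate from $R_{ic}$ into $R_{ip}$. The only constraint in $\R_i^m$ that could obstruct this shift is the second constraint in (\ref{eq:MACiG}), $R_{ip} + R_{jc} + R_{jr} \le \log\!\bigl(1 + \tfrac{\snr_i^p + \inr_{ij} - \inr_{ij}^p}{1+\inr_{ij}^p}\bigr)$. Combining this with inequality (\ref{eq:fifthG}) at the other receiver (which holds because $R_{jc}+R_{jp}\ge L_j$), one finds that the bound on $R_{ip}$ in that inequality is at least $a_i$, so the shift cannot be blocked before $R_{ip}$ reaches $a_i$. \emph{Stage 2:} raise the common rates so that $R_{ic} \ge b_i$. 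Suppose WLOG $R_{1c} < b_1$, and transfer rate from $R_{1p}$ into $R_{1c}$. This preserves the sum $R_{1c}+R_{1p}$ and hence cannot violate any constraint at receiver~1, since every constraint in $\R_1^m$ involving $R_{1c}$ also involves $R_{1p}$ with the same coefficient.

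If the shift can be completed without violating any constraint at receiver~2 we are done. Otherwise, the shift stalls at some $R_{1c}^* = b_1 - \Delta$ because one of the first two constraints in (\ref{eq:MACiG}) at receiver~2 becomes tight. Using the upper bound $R_2 \le U_2$ from $\B^-$, a computation analogous to (\ref{eq:const1}) shows the first (sum-rate) constraint at receiver~2 cannot be the tight one. Hence the second constraint at receiver~2 is tight; solving it for $R_{2p}$ gives an analogue of (\ref{eq:r2p}), and then the $U_2$-bound forces $R_{2p} \ge a_2 + \Delta$, i.e., user~2's interference-free rate is under-utilized by exactly the same $\Delta$. One then performs the simultaneous swap that appears at the end of the deterministic proof: increase both $R_{1c}$ and $R_{2c}$ by $\Delta$ while decreasing both $R_{1p}$ and $R_{2p}$ by $\Delta$. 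By the one-sided arguments already made, no constraint at either receiver that involves only a single user's rates is violated; the only remaining candidates are the first constraints in (\ref{eq:MACiG}) at each receiver, and these also include one of the private rates, so the net change is zero by the same mechanism as in (\ref{eq:const1}). The resulting rate-split fully utilizes the interference-free rates at both transmitters.

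The main obstacle is verifying the Gaussian analogue of (\ref{eq:const1}), namely $R_{2} + R_{1c}^* \le U_2 + b_1 - \Delta \le \log\!\bigl(1 + \tfrac{\snr_2 + \inr_{21} - \inr_{21}^p}{1 + \inr_{21}^p}\bigr)$. Unlike the deterministic case, the expression for $U_i$ in (\ref{eq:UiG}) involves the $\max\bigl(\inr_{ji},\snr_j/\inr_{ij}\bigr)$ inside a log and an outer minimum with $\log(1+\snr_i)$, so this bound has to be checked by case analysis depending on which term of the $\max$ dominates and whether the $\log(1+\snr_i)$ branch is active (i.e., when $\sigma_v^2 > 1$, corresponding to the first Gaussian case in the proof of Lemma~\ref{lem:UG}). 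In each case one needs to unpack $b_1$ using the definition and the Han-Kobayashi private-power choice $\inr_{ij}^p = \min(1,\inr_{ij})$ when $\inr_{ij} < \snr_i$, and show the inequality holds up to constants absorbed into the one-bit gap implicit in $\B^-$ versus $\B$.
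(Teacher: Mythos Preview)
Your overall plan matches the paper's proof step for step: start from an arbitrary non-randomized split, first push private rate up to $a_i$ using (\ref{eq:fifthG}) to rule out the second constraint as an obstruction, then push common rate up to $b_i$, and if the shift stalls at $R_{1c}^*=b_1-\Delta$ show it must be the second constraint at receiver~2 that is tight, deduce $R_{2c}\le b_2-\Delta$, and perform the symmetric simultaneous swap.

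Where you diverge from the paper is in the verification that the \emph{first} constraint at receiver~2 cannot be the tight one. You write ``using $R_2\le U_2$ from $\B^-$'' and then flag a case analysis on the branches of $U_i$ as the main obstacle. This misreads the role of $\B^-$: the bound it gives is $R_2\le U_2-1$, and that extra $-1$ is precisely the mechanism the paper uses. The paper's chain (\ref{eq:L15G}) is
\[
R_2+R_{1c}^*\ \le\ (U_2-1)+(b_1-\Delta)\ \le\ \log(1+\snr_2+\inr_{21})-1-\Delta\ \le\ \log\!\Bigl(1+\tfrac{\snr_2+\inr_{21}-\inr_{21}^p}{1+\inr_{21}^p}\Bigr)-\Delta,
\]
where the last inequality holds simply because $\inr_{21}^p\le 1$. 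So the one-bit slack in $\B^-$ is not an afterthought ``absorbed into constants''; it is spent exactly here to pass from $\log(1+\snr_2+\inr_{21})$ to the MAC bound with denominator $1+\inr_{21}^p$. Once you use $R_2\le U_2-1$ and the fact $\inr_{21}^p\le 1$, the elaborate case analysis you anticipate on the $\max$ and the $\log(1+\snr_i)$ branch of (\ref{eq:UiG}) is unnecessary --- the middle inequality $U_2+b_1\le \log(1+\snr_2+\inr_{21})$ is what needs to be checked, and that is where any residual case-checking would live, not in the final step.
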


\begin{IEEEproof}
As in the deterministic case, we begin with an arbitrary non-randomized
Han-Kobayashi rate-split and show that this can always be transformed
into one that fully utilizes the interference free levels at each
receiver. Note that again since
$a_{i} + b_{i} = L_i$, for any point in $\B^{-}$ there will
always be a sufficient amount of ``rate'' available to meet the
constraints in (\ref{eq:aiG}) and (\ref{eq:biG}).

First, we show that if $R_{ip} < a_{i}$ for either user $i$,
then we can always increase $R_{ip}$ and decrease $R_{ic}$ by the 
same amount until
$R_{ip} = a_{i}$. The only way such a transformation could not be done is
if the second constraint in (\ref{eq:MACiG}) prevented it. But by combining
(\ref{eq:fifthG}) with this constraint, it can be seen 
that this will
never happen for $R_{ip} <a_{i}$. Moreover, since $R_{ip}$ does not
appear in any of the constraints in (\ref{eq:MACiG}) at receiver $j$, 
such a change will never
result in any of those constraints being violated.

Thus we can assume that $R_{ip}\geq a_{i}$. 
Given this, if a rate-pair does not fully utilize the
interference free levels at transmitter $i$, it must be that
$R_{ic} < b_i$. Suppose that this is true for receiver $1$ and 
consider increasing
$R_{1c}$ and decreasing $R_{1p}$ by the same amount until
$R_{1c} = b_{1}$. Note that since $a_{1} +
b_{1} = L_1$ and $R_{ic} + R_{ip} \geq L_1$, 
when we decrease $R_{1p}$ in this way it will never
cause it to become less than $a_{1}$. 
Changing $R_{1p}$ and $R_{1c}$ in this manner will not violate any of the
generalized MAC constraints at receiver $1$, since every constraint in (\ref{eq:MACiG})
involving $R_{1c}$ also involves $R_{1p}$. If this can be done without
violating any constraints at receiver $2$ then we are done. Otherwise, it must
 be that at least one of the constraints at receiver $2$ involving
 $R_{1c}$ becomes tight when $R_{1c}$ reaches the 
value $R_{1c}^* =b_{1}-\Delta$, for
 some $\Delta >0$. The possible constraints
 here are the first and second. By the definition of $\B^{-}$ we have that 
$R_{2p} + R_{2c} \leq U_2 -1$ and so
\begin{align}
R_{2p} + R_{2c} + R_{1c}^* &= R_2 + R_{1c}^*\notag\\
& \leq U_2 -1 + b_1 -\Delta \notag\\ 
& \leq \log(1+\snr_2 +\inr_{21}) -1 -\Delta \label{eq:L15G}\\
&\leq \log\left(1+\frac{\snr_2 + \inr_{21} -\inr^p_{12}}{\inr^p_{12}+1}\right)-\Delta.\notag
\end{align}
This shows that the first constraint can not be tight. Note that the last 
inequality followed since $\inr^p_{12} \leq 1$.
This implies that the
second constraint must be tight, i.e.,
\begin{equation}\label{eq:r2pG}
R_{2p} + R_{1c}^* = \log\left(1 +\frac{\snr_{2}^p + \inr_{21} - \inr^p_{21}}{\inr^p_{21}+1}\right).
\end{equation}
Proceeding as in (\ref{eq:L15G}) we have
\begin{align*}
R_{2p} + R_{2c} + R_{1c}^* & \leq \log(1+\snr_2 +\inr_{21}) -1 -\Delta
\\
&\leq \log\left(1+\frac{\snr_2 + \inr_{21} -\inr^p_{12}}{\inr^p_{21}+1}\right)-\Delta.
\end{align*}
Combining this with (\ref{eq:r2pG}) we have
\begin{align*}
R_{2c} &\leq \log\left(1+ \frac{\snr_2 -\snr_2^p - \inr_{12}^p}{1 +
    \snr_2^p + \inr_{21}}\right) - \Delta \\
& \leq b_2 - \Delta. 
\end{align*}
And so it must also be that $R_{2p} \geq a_2 + \Delta$.
Now consider increasing $R_{1c}$ from $R_{1c}^*$ by $\Delta$, while
simultaneously reducing $R_{1p}$ and $R_{2p}$ each by $\Delta$ and
also increasing $R_{2c}$ by $\Delta$. As in the deterministic case,
the above calculations show that we will not violate any of the
constraints in the modified MAC regions at either receiver when doing this.
After this transformation, the resulting rate-split will 
fully utilize the interference free levels at both receivers.
\end{IEEEproof}

Combining the previous lemmas we have shown that all points in
$\C_{HK}\cap \B^{-}$ are in $\CNE$, proving the first part of 
Theorem~\ref{prop:cneG}. Applying the next lemma will complete this 
proof by generalizing Lemma~\ref{lem:fullutilG} for strong IC 
and showing that in that case the conclusions apply for all points in 
$\C \cap \B$.

\begin{lemma}
For a strong IC,
any point $(R_1,R_2) \in \C \cap \B$ can be achieved by a 
non-randomized Han-Kobayashi rate split that fully 
utilizes the interference-free levels at each transmitter.
\end{lemma}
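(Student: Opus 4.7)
The plan is to exploit the fact that in the strong-interference regime, the private-power allocation rule prescribed in Section~\ref{sec:achievG} forces $P_{ip}=0$ for both users, so that $\snr_i^p=0$ and $\inr_{ij}^p=0$. Under this degeneracy, $a_i = \log(1+0) = 0$ and $b_i = \log\bigl(1+\snr_i/(1+\inr_{ij})\bigr) = L_i$, so ``fully utilizing the interference-free rate'' at user $i$ collapses to the single requirement $R_{ic}\ge L_i$, while the lower bound $a_i$ on $R_{ip}$ becomes vacuous.

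Given any $(R_1,R_2) \in \C \cap \B$, the natural candidate is the degenerate non-randomized rate-split $R_{ip}=0$, $R_{ic}=R_i$ for $i=1,2$. Conditions \eqref{eq:aiG} and \eqref{eq:biG} then hold at once: $R_{ip}=0=a_i$, and $R_{ic}=R_i\geq L_i=b_i$, the latter because $(R_1,R_2)\in\B$. The remaining task is to verify that this split lies in the achievable region $\R_{RHK}$ of Lemma~\ref{lem:randomizedG}. Substituting $R_{ip}=R_{jp}=0$ and $\inr^p_{ij}=\inr^p_{ji}=0$ into \eqref{eq:MACiG} collapses the four modified-MAC inequalities at receiver $i$ to three non-trivial ones: $R_{ic}+R_{jc}\leq\log(1+\snr_i+\inr_{ij})$, $R_{jc}\leq\log(1+\inr_{ij})$, and $R_{ic}\leq\log(1+\snr_i)$.

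Collecting these over $i=1,2$ recovers precisely Carleial's capacity characterization of the strong Gaussian IC \cite{Car75,HK81}, which coincides with $\C$ in this regime. Hence the first and third families of inequalities hold because $(R_1,R_2)\in\C$, while the cross bound $R_j\leq\log(1+\inr_{ij})$ is immediate from the strong-interference hypothesis $\inr_{ij}\geq\snr_j$, since $\log(1+\inr_{ij})\geq\log(1+\snr_j)\geq R_j$. The only real obstacle is the bookkeeping needed to degenerate the various definitions under the strong-IC hypothesis and to confirm that $\R_{RHK}$ restricted to $R_{1p}=R_{2p}=0$ coincides with Carleial's region; once that matching is pinned down, the conclusion follows with no further computation.
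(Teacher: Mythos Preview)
Your proof is correct and follows essentially the same route as the paper: both exploit that in the strong regime $P_{ip}=0$ forces $a_i=0$, $b_i=L_i$, so the unique non-randomized split is the all-common one $R_{ic}=R_i$, $R_{ip}=0$, which trivially meets \eqref{eq:aiG}--\eqref{eq:biG} since $(R_1,R_2)\in\B$. The paper simply asserts $\C_{HK}=\C$ for strong interference to finish, whereas you explicitly verify that the degenerate constraints in \eqref{eq:MACiG} reduce to Carleial's region; this extra bookkeeping is sound but not a genuinely different argument.
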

\begin{IEEEproof}
Recall, the for a strong interference channel, we set $\inr_{ij}^p =0$
for each user $i$ and so we have $a_{i} = 0$ and $b_{i} = L_i$. It
follows that for any point $(R_1,R_2) \in \C_{HK} \cap \B$ there will 
be exactly one non-randomized Han-Kobayashi rate split that achieves
this rate, namely the one with $R_{1c} = R_1$, $R_{2c} = R_2$ and
$R_{1p} = R_{2p} =0$. This will trivially fully utilize the
interference-free levels at each transmitter. Furthermore, for such a
channel $\C_{HK} = \C$, completing the proof.
\end{IEEEproof}

As in the deterministic case, it can also be shown by direct
calculation that $\CNE$ will always contain at least one point that is
sum-rate optimal to within 1 bit. 

\section{Conclusions}

We have formulated a new information theoretic 
notion of a Nash equilibrium region
for interference channels. Moreover, we have used this notion to 
characterize the equilibria in both deterministic and Gaussian 
ICs. In the deterministic case we are able to exactly specify the Nash
equilibrium region, while in the Gaussian case we characterize the 
Nash equilibrium region to within one bit. The
analysis for the Gaussian case directly parallels our analysis for the
deterministic case, and thus serves as another illustration of the
utility of deterministic models in providing useful insights for the
more complicated Gaussian setting. 

Our approach here is based on assuming that a given transmitter and
the intended receiver share a source of common randomness. However, in
the case of deterministic channels it is shown in \cite{BeT08} that
this is not needed. Specifically, it is possible to achieve all points
in $\C\cap \B$ by time-sharing among structured schemes which do no 
coding over time and use no common randomness. The key property of these 
structured schemes is that the common signals of
the two users are {\em segregated} into separate levels at each of the
receivers (in contrast to the random coding schemes considered in this
paper, where the common signals of the two users are all mixed up.)  Each
transmitter may still use randomness to send a jamming signal
on specified levels, but by aligning the jamming signal  with the
interfering common signal  at the node's own receiver, the receiver does not
need to decode it. Hence, the receiver needs not know the random bits
generating the jamming signal. Such schemes can likely also be
translated to the Gaussian settings by using structured codes instead
of the Gaussian Han-Kobayashi schemes considered here.

The games we were considering here were games with full information,
i.e., each user has perfect knowledge of all channel gains as well as
the code-books of the other user. One possible future direction for
this work would be to relax this assumption and consider games with
incomplete information. Another natural direction would be to consider
interference networks with more than 2 users. Some preliminary work
in this direction for deterministic channels is given in \cite{BS10}
where it is shown that with more than two user efficient equilibria
may no longer exist.

\section*{Acknowledgment}

This work has  benefited from convergence on a common game-theoretic
formulation for interference channels  that also appears in
\cite{YTL08}, which studies communication with secrecy constraints.

\end{document}